\numberwithin{equation}{section}
\newtheorem{thm}{Theorem}[section]
\newtheorem{prop}[thm]{Proposition}
\newtheorem{lem}[thm]{Lemma}
\newtheorem{cor}[thm]{Corollary}
\newtheorem{eg}[thm]{Example}
\newtheorem{rem}[thm]{Remark}
\newcommand{\beq}{\begin{equation}}
\newcommand{\eeq}{\end{equation}}
\newcommand{\be}{\begin{equation*}}
\newcommand{\ee}{\end{equation*}}
\newcommand{\C}{\mathbb{C}}
\newcommand{\Z}{\mathbb{Z}}
\newcommand{\bB}{\mathbb{B}}
\newcommand{\bA}{\mathbb{A}}
\newcommand{\mc}{\mathcal}
\newcommand{\cN}{\mathcal{N}}
\newcommand{\gl}{\mathfrak{gl}}
\newcommand{\h}{\mathfrak{h}}
\newcommand{\n}{\mathfrak{n}}
\newcommand{\fkS}{\mathfrak{S}}
\newcommand{\End}{\mathrm{End}}
\newcommand{\id}{{\mathrm{id}}}
\newcommand{\str}{{\mathrm{str}}} 
\newcommand{\scrV}{\mathscr{V}}
\newcommand{\sN}{\mathscr{N}}
\newcommand{\sW}{\mathscr{W}}
\newcommand{\sV}{\mathscr{V}}
\newcommand{\pa}{\partial}
\newcommand{\tl}{\tilde}
\newcommand{\gge}{\geqslant}
\newcommand{\lle}{\leqslant}
\newcommand{\la}{\lambda}
\newcommand{\La}{\Lambda}
\newcommand{\bla}{\bm\lambda}
\newcommand{\bLa}{\bm\Lambda}
\newcommand{\glMN}{\mathfrak{gl}_{m|n}}
\newcommand{\glNM}{\mathfrak{gl}_{n|m}}
\newcommand{\UglMN}{\mathrm{U}(\mathfrak{gl}_{m|n})}
\newcommand{\YglMN}{\mathrm{Y}(\mathfrak{gl}_{m|n})}
\newcommand{\YglNM}{\mathrm{Y}(\mathfrak{gl}_{n|m})}
\newcommand{\wt}{\widehat}    
\newcommand{\ka}{\kappa}
\newcommand{\ve}{\varepsilon}
\newcommand{\qedd}{\tag*{$\square$}}
\newcommand{\EndV}{\End(\scrV)}
\newcommand{\bfv}{\mathbf{v}}
\newcommand{\bfw}{\mathbf{w}}
\newcommand{\bfx}{\mathbf{x}}
\newcommand{\YMNplus}{\mathrm{Y}_+(\glMN)}
\newcommand{\llp}{\llparenthesis}
\newcommand{\rrp}{\rrparenthesis}
\title{\boldmath On Bethe eigenvectors and higher transfer matrices for supersymmetric spin chains}
\author[a,1]{Kang Lu,\note{Corresponding author.}}
\affiliation[a]{Department of Mathematics, University of Virginia,\\141 Cabell Dr, Charlottesville, VA 22903, USA}
\emailAdd{Kang.Lu@virginia.edu}
\abstract{We study the $\gl_{m|n}$ XXX spin chains defined on tensor products of highest weight $\gl_{m|n}$-modules. We show that the on-shell Bethe vectors are eigenvectors of higher transfer matrices and compute the corresponding eigenvalues. Then we take the classical limits and obtain the corresponding results for the $\gl_{m|n}$ Gaudin models.}
\begin{document} 
\maketitle
%\flushbottom

% \begin{abstract} 

% 		\medskip 
		
% 		\noindent
% 		{\bf Keywords:} XXX spin chains, Gaudin models, Bethe ansatz, super Yangians, Berezinian  
		
% \end{abstract}

% \maketitle

% %\flushbottom

\section{Introduction}
In this paper, we study the quantum integrable models associated with the Lie superalgebra $\glMN$, the XXX spin chains and Gaudin models. An important problem in the study of quantum integrable systems is to find the eigenvectors and the corresponding eigenvalues of the Hamiltonians. 

Let $m,n\in\Z_{\gge 0}$ and set $N=m+n$. Let $T_{ab}(u)$, $1\lle a,b\lle N$, be the RTT generating series of the super Yangian $\YglMN$ associated with $\glMN$ and $T(u)=\sum_{a,b=1}^N E_{ab}\otimes T_{ab}(u)$ the monodromy matrix. The integral of motions for XXX spin chains are given by the quantum Berezinian (super-determinant) of a certain Manin matrix, see \cite{MR14},
\[
\mathrm{Ber}(1-T^\dag(u)Q e^{-\pa_u})=1+\sum_{k=1}^\infty (-1)^k \mathscr T_{k,Q}(u)e^{-k\pa_u},
\]
where $Q=\mathrm{diag}(Q_1,\dots,Q_{N})$ is invertible and $\dag:E_{ab}\mapsto (-1)^{|a||b|+|a|}E_{ba}$ is the super transpose. The series $\mathscr T_{k,Q}(u)$, whose coefficients are elements in the super Yangian $\YglMN$, are called {\it transfer matrices}. The subalgebra of $\YglMN$ generated by the coefficients of $\mathscr T_{k,Q}(u)$ for all $k>0$ is commutative and called the Bethe subalgebra, cf. \cite{NO}.

Given a highest weight representation of $\YglMN$, we are interested in finding the joint eigenvectors of transfer matrices. The same problem for diagonalizing the standard transfer matrix $\mathscr T_{1,Q}(u)$ has been studied in \cite{BR08} using algebraic Bethe ansatz \cite{STF79,TF79} and nested Bethe ansatz \cite{KR83}. The main result of \cite{BR08} indicates that if the sequence of parameters $\bm t$ satisfies the Bethe ansatz equation, then one can construct a Bethe vector $\bB(\bm t)$ which (if nonzero) is an eigenvector of $\mathscr T_{1,Q}(u)$. Moreover, the corresponding eigenvalue can be computed explicitly. Following \cite{MTV06}, we extend the results of \cite{BR08} to show that the Bethe vector $\bB(\bm t)$ is indeed a joint eigenvector of all transfer matrices $\mathscr T_{k,Q}(u)$, $k\in\Z_{>0}$, and give the explicit eigenvalues for each transfer matrix, cf. \cite{Tsu97}. 

Let me explain our results in more detail. Let $M$ be a highest weight representation of $\YglMN$ generated by a nonzero vector $v$ such that $T_{aa}(u)v=\zeta_a(u)v$ and $T_{ab}(u)v=0$ for all $1\lle b<a\lle N$, where $\zeta_a(u)$ is an arbitrary power series of $u^{-1}$ in $1+u^{-1}\C[[u^{-1}]]$. Let $\bm \xi=(\xi^1,\dots,\xi^{N-1})$ be a sequence of nonnegative integers and $\bm t=(t_1^1,\dots,t_{\xi^1}^1;\dots;t_1^{N-1},\dots,t_{\xi^{N-1}}^{N-1})$ a sequence of complex numbers. Set $y_a(u)=\prod_{i=1}^{\xi^a}(u-t_i^a)$, $1\lle a<N$, and $y_0(u)=y_{N}(u)=1$. Let $\ka_a=1$ if $1\lle a\lle m$ and $\ka_a=-1$ if $m< a\lle N$. The Bethe ansatz equation for XXX spin chain associated with $\glMN$ is a system of algebraic equations in $\bm t$,
\[
-\frac{\ka_aQ_a}{\ka_{a+1}Q_{a+1}}\frac{\zeta_a(t_i^a)}{\zeta_{a+1}(t_i^a)}\frac{y_{a-1}(t_i^a+\ka_a)}{y_{a-1}(t_i^a)}\frac{y_a(t_i^a-\ka_a)}{y_a(t_i^a+\ka_{a+1})}\frac{y_{a+1}(t_i^a)}{y_{a+1}(t_i^a-\ka_{a+1})}=1,
\]
for $1\lle a<N$ and $1\lle i\lle \xi^a$, see \eqref{eq:bae}. Suppose $\bm t$ satisfies the Bethe ansatz equation and let $\bB_{\bm \xi}^v(\bm t)\in M$ be the corresponding (on-shell) Bethe vector, see e.g.  \eqref{eq:bae-sym}, \eqref{eq:off-shell-bv}. Then the eigenvalues of the transfer matrices acting on $\bB^v_{\bm \xi}(\bm t)$ can be compactly described as follows,
\[
\mathrm{Ber}(1-QT(u)e^{-\pa_u})\bB^v_{\bm \xi}(\bm t)=\bB^v_{\bm \xi}(\bm t)\mathop{\overrightarrow\prod}\limits_{1\lle a\lle N}\Big(1-Q_a\zeta_a(u)\frac{y_{a-1}(u+\ka_a)y_a(u-\ka_a)}{y_{a-1}(u)y_a(u)}e^{-\pa_u}\Big)^{\ka_a},
\]
see Corollaries \ref{cor:main-tech}, \ref{cor:main-tech-xxx}, which follows from the main technical results, Theorem \ref{thm:main-tech}. Such a statement was previously established for the case $n=0$ in \cite{MTV06}, and for the case $m=n=1$ in \cite[Theorem 6.4]{LM21a} by a brute force computation.

We show it for the case when $M$ is a tensor product of evaluation highest weight modules. Such a rational difference operator also appeared in \cite{Tsu97} when $M$ is a tensor product of evaluation vector representations. However, the Bethe vector is not discussed there. Note that the same approach also works for the full generality.

We prove Theorem \ref{thm:main-tech} by induction on $m$. For the base case of $\gl_{0|n}$, it is essentially \cite[Theorem 5.2]{MTV06} by the correspondences between transfer matrices and Bethe vectors for $\YglMN$ and $\YglNM$, see Section \ref{sec:correspondence}. Then we perform the nested Bethe ansatz and use induction hypothesis. Since our induction is on $m$, the first index is always even. Therefore the procedure for nested Bethe ansatz turns out to be similar to the case of the nonsuper case as in \cite{MTV06}.

By taking the classical limits, we obtain the corresponding statement for the Gaudin models associated with $\glMN$. Explicitly, let $M_1,\dots, M_\ell$ be highest weight $\glMN$-modules with highest weights $\La_1,\dots,\La_\ell$, where $\La_i=(\La_i^1,\dots,\La_i^N)$.  Let $\bm z=(z_1,\dots,z_\ell)$ be a sequence of pairwise distinct complex numbers. Let $\bm \xi$, $\bm t$, and $y_a(u)$, $0\lle a\lle N$, be  as before. The Bethe ansatz equation for Gaudin models associated with $\glMN$ is a system of algebraic equations in $\bm t$, 
\[
\begin{split}
K_a-K_{a+1}+\sum_{j=1}^\ell\frac{\ka_a\La_j^a-\ka_{a+1}\La_j^{a+1}}{t_i^a-z_j}+& \,\frac{\ka_ay_{a-1}'(t_i^a)}{y_{a-1}(t_i^a)}\\-&\, \frac{(\ka_a+\ka_{a+1})y_{a}''(t_i^a)}{2y_a'(t_i^a)}+\frac{\ka_{a+1}y_{a+1}'(t_i^a)}{y_{a+1}(t_i^a)}=0,
\end{split}
\]
for $1\lle a< N$ and $1\lle i\lle \xi^a$, see \eqref{eq:Gaudinbae}. Suppose $\bm t$ satisfies the Bethe ansatz equation and let $\mathbb F^v_{\bm \xi}(\bm t)\in M_1\otimes\cdots\otimes M_\ell$ be the corresponding (on-shell) Bethe vector. The Gaudin transfer matrices (higher Gaudin Hamiltonians) are elements of the universal enveloping superalgebra $\mathrm{U}(\gl_{m|n}[x])$ of the current superalgebra $\gl_{m|n}[x]$ which are again given by the quantum Berezinian, see \cite{MR14,HM20},
\[
\mathrm{Ber}(\pa_u-K-L^\dag(u))=\sum_{r=0}^{\infty}\mathcal G_{r,K}(u)\pa_u^{m-n-r},
\]
where $L(u)$ is the generating matrix of $\gl_{m|n}[x]$ and $K=\mathrm{diag}(K_1,\dots,K_N)$. Then the eigenvalues of the  Gaudin transfer matrices acting on $\mathbb F^v_{\bm \xi}(\bm t)$ can be compactly described as follows,
\[
\begin{split}
\mathrm{Ber}(\pa_u-K-L^\dag(u))&\,\mathbb F^v_{\bm \xi}(\bm t)\\=&\,\mathbb F^v_{\bm \xi}(\bm t)\mathop{\overrightarrow\prod}\limits_{1\lle a\lle N}\bigg(\pa_u-K_a-\ka_a\Big(\sum_{j=1}^\ell \frac{\La_j^a}{u-z_j}+\frac{y_{a-1}'(u)}{y_{a-1}(u)}-\frac{y_{a}'(u)}{y_{a}(u)}\Big)\bigg)^{\ka_a},
\end{split}
\]
see Theorem \ref{thm:main-tech-Gaudin}.

Finally, we remark that the results can be generalized beyond the standard parity sequence (root system). Indeed, for a super Yangian of type A whose first index is odd, in \eqref{eq:high-R} and \eqref{eq:T-decom}, one can use symmetric power instead of anti-symmetric power. Alternatively, one can introduce the shift parameter for the super Yangian. Changing parity of the fundamental space is related to negating the shift parameter.  An interesting question is as follows: given a finite-dimensional irreducible representation $M$ of $\YglMN$, how a Bethe vector associated with a solution $\bm t$ corresponds to the Bethe vector associated with another solution $\tilde{\bm t}$ obtained from $\bm t$ by the fermionic reproduction procedure (odd reflection), see e.g. \cite{HMVY19,HLM19,Mol21,Lu21}. When the twisting matrix $Q$ is regular semisimple, one would expect that they are proportional. When the representation is the tensor product of evaluation polynomial modules and $Q$ is the identity matrix, they are related by a simple Lie superalgebra action, see \cite[Corollary 5.6]{HMVY19}. For more general cases, the relation remains unclear, see a related discussion in \cite[Section 8.3]{HMVY19}. It is also well known that in general Bethe eigenvectors obtained from this approach does not provide the full list of eigenvectors of the Bethe subalgebra. A complete spectrum of the Bethe subalgebra should be described using the (extended) Q-systems, see \cite{MV17,HLM19,GJS22}. Results on the completeness of Bethe ansatz for supersymmetric XXX spin chains can be found in \cite{LM21a,CLV22}. A conjectural complete spectrum of supersymmetric XXX spin chains is proposed in \cite{MNV} using separation of variable bases, see also \cite{Ryan} for a recent review of separation of variables. 

The paper is a step toward understanding the perfect duality of supersymmetric spin chains in the sense of \cite{Lu20}, see \cite{CLV22} for the case when the underlying Hilbert space is a tensor product of evaluation vector representations. The corresponding statements for the even case were used to prove the perfect duality for Gaudin models and XXX spin chains in \cite{MTV09,MTV14}.

The paper is organized as follows. We recall the basics of the Lie superalgebra $\glMN$, the corresponding Yangian, and the fusion procedure in Section \ref{sec:pre}. Section \ref{sec:XXX} is devoted to the study of XXX spin chains where the main results for XXX spin chains are given. We prove the main technical result Theorem \ref{thm:main-tech} in Section \ref{sec:more}. We discuss Gaudin models and the corresponding main results in Section \ref{sec:Gaudin}. By taking the classical limits, we prove the main results for Gaudin models in Section \ref{sec:Gaudin-proof}.

\medskip

{\bf Acknowledgement.} The author thanks E. Mukhin and V. Tarasov for valuable discussions.

\section{Preliminaries}\label{sec:pre}

\subsection{Basics}
Throughout this article we will use the following general conventions. 

A \emph{vector superspace} $W = W_{\bar 0}\oplus W_{\bar 1}$ is a $\Z_2$-graded vector space. We call elements of $W_{\bar 0}$ \emph{even} and elements of
$W_{\bar 1}$ \emph{odd}. We write $|w|\in\{\bar 0,\bar 1\}$ for the parity of a homogeneous element $w\in W$. Whenever $|v|$ is used, we implicitly assume that $v$ is homogeneous. Set $(-1)^{\bar 0}=1$ and $(-1)^{\bar 1}=-1$. 

Let $A$ and $B$ be associative superalgebras (namely $\Z_2$-graded algebras). Then their tensor product $A\otimes B$ is also an associative superalgebra with the structure given by
$$
(a\otimes b)(a'\otimes b')=aa'\otimes bb'(-1)^{|a'||b|}, \quad |a\otimes b| = |a|+|b|,
$$
for any elements $a,a'\in A$ and $b,b'\in B$. 

For any $\Z_2$-graded modules $V$ and $W$ over $A$ and $B$, respectively, the vector superspace $V\otimes W$ is a $\Z_2$-graded module over $A\otimes B$ with the structure given by
$$
(a\otimes b)(v\otimes w)=av\otimes bw(-1)^{|b||v|}, \quad |v\otimes w|=|v|+|w|
$$
for any elements $v\in V$ and $w\in W$.

A superalgebra \emph{homomorphism} $\alpha:A\to B$ is a linear map satisfying $\alpha(aa')=\alpha(a)\alpha(a')$ for all $a,a'\in A$. A superalgebra \emph{antihomomorphism} $\beta:A\to B$ is a linear map satisfying $\beta(aa')=\beta(a')\beta(a)(-1)^{|a||a'|}$ for all $a,a'\in A$. 

%The \emph{supercommutation} $[\cdot,\cdot]$ is defined 
%\[
%[a,a']=ab-(-1)^{|a|}
%\]

We use the standard superscript notation for embeddings of tensor factors into tensor products. If $A_1, \dots$, $A_k$ are unital associative superalgebras, and $a\in A_i$, then
\[
a^{(i)}=1^{\otimes(i-1)}\otimes a\otimes 1^{\otimes(k-i)}\in A_1\otimes \cdots\otimes A_k.
\]If $a\in A_i$ and $b\in A_j$, then $(a\otimes b)^{(ij)}=a^{(i)}b^{(j)}$, etc. For example, if $k=2$ and $A_1=A_2=A$, then 
$$
a^{(1)}=a\otimes 1, \quad b^{(2)}=1\otimes b,\quad (a\otimes b)^{(12)}=a\otimes b,\quad (a\otimes b)^{(21)}=b\otimes a(-1)^{|a||b|}.
$$
For products of noncommuting factors, we use the following notation:
\begin{align*}
&\mathop{\overrightarrow\prod}\limits_{1\lle i\lle k}X_i=X_1\cdots X_k, &\mathop{\overleftarrow\prod}\limits_{1\lle i\lle k}X_i=X_k\cdots X_1,	\\
&\mathop{\overrightarrow\prod}\limits_{1\lle i<j\lle k} =\mathop{\overrightarrow\prod}\limits_{1\lle i\lle k}\ \mathop{\overrightarrow\prod}\limits_{i<j\lle k}, &\mathop{\overleftarrow\prod}\limits_{1\lle i<j\lle k} =\mathop{\overleftarrow\prod}\limits_{1\lle j\lle k}\ \mathop{\overleftarrow\prod}\limits_{1\lle i<j}.
\end{align*}

\subsection{Lie superalgebra $\glMN$}Fix $m,n\in\Z_{\gge 0}$. Set $N=m+n$ and $\mathscr N=m-n$. Let $\C^{m|n}$ be a complex vector superspace, with $\dim(\C^{m|n})_{\bar 0}=m$ and $\dim(\C^{m|n})_{\bar 1}=n$. Note that $\dim(\C^{m|n})=N$ and $\mathrm{sdim}(\C^{m|n})=\mathscr N$. Choose a homogeneous basis $\bfv_a$, $1\lle a\lle N$, of $\C^{m|n}$ such that $|\bfv_a|=\bar 0$ for $1\lle a\lle m$ and $|\bfv_a|=\bar 1$ for $m< a\lle N$. We call it the \emph{standard basis} of $\C^{m|n}$. Set $|a|=|\bfv_a|$. 

Denote $\C^{m|n}$ by $\scrV$ and consider elements of $\EndV$ as (super)matrices with respect to the standard basis of $\scrV$. In particular, we have the matrix units $E_{ab}$ such that $E_{ab}\bfv_{c}=\delta_{bc}\bfv_a$ for $1\lle a,b,c\lle N$.

The Lie superalgebra $\glMN$ is generated by elements $e_{ab}$, $1\lle a,b\lle N$, with the supercommutator relations
\beq\label{eq:nilpotent}
[e_{ab},e_{cd}]=\delta_{bc}e_{ad}-(-1)^{(|a|+|b|)(|c|+|d|)}\delta_{ad}e_{cb},
\eeq
and the parity of $e_{ab}$ is given by $|a|+|b|$. We have the standard nilpotent subalgebras of $\glMN$,
\[
\n_+=\bigoplus_{1\lle a< b\lle N}\C e_{ab},\quad \n_-=\bigoplus_{1\lle a< b\lle N}\C e_{ba}.
\]

A vector $v$ in a $\glMN$-module is a \emph{weight vector of weight} $(\La^1,\dots,\La^N)$ if $e_{aa}v=\La^a v$ for $1\lle a\lle N$. A vector $v$ is called a \emph{singular vector} if $e_{ab}v=0$ for any $1\lle a<b\lle N$.

Let $P=\sum_{a,b=1}^N E_{ab}\otimes E_{ba}(-1)^{|b|}$ which is the super flip operator: $P(v\otimes w)=(-1)^{|v||w|}w\otimes v$. Clearly, we have $X^{(21)}=PXP$ for $X\in \End(\scrV^{\otimes 2})$.

Let $S_k$ be the symmetric group permuting $\{1,2,\dots,k\}$ with the simple permutation $\sigma_i=(i,i+1)$ for $1\lle i< k$. The symmetric group acts on $\scrV^{\otimes k}$ by $\sigma_i\mapsto P^{(i,i+1)}$. Let $\mathbb H_{\{k\}}$, $\bA_{\{k\}}\in\End(\scrV^{\otimes k})$ be the symmetrizer and anti-symmetrizer, respectively,
\[
\mathbb H_{\{k\}} =\frac{1}{k!}\sum_{\sigma\in S_k} \sigma,\quad \bA_{\{k\}} =\frac{1}{k!}\sum_{\sigma\in S_k} \mathrm{sign}(\sigma)\sigma,
\]where $\sigma$ is identified as the corresponding operator in $\End(\scrV^{\otimes k})$. Denote by $\scrV^{\wedge k}$ the image of $\bA_{\{k\}}$. For any even matrix $Q\in \EndV$, set $Q^{\wedge k}={Q^{\otimes k}}\big|_{\scrV^{\wedge k}}$.

The map $\glMN\to \EndV,\ e_{ab}\mapsto E_{ab}$ defines a $\glMN$-module structure on $\scrV$. We call it the \emph{vector representation} of $\glMN$. The space $\scrV^{\wedge k}$ is also a $\glMN$-module with the action $e_{ab}\mapsto (E_{ab}^{(1)}+\cdots+E_{ab}^{(k)})\big|_{\scrV^{\wedge k}}$.

Define a \emph{supertrace} $\str:\EndV \to \C$, which is supercyclic,
\beq\label{eq:cyc-tr}
\str(E_{ab})=(-1)^{|b|}\delta_{ab},\qquad \str([E_{ab},E_{cd}])=0.
\eeq
Define the supertranspose $\dag$,
\[
\dag:\EndV\to \EndV,\quad E_{ab}\mapsto (-1)^{|a||b|+|a|}E_{ba}.
\]
The supertranspose is an anti-homomorphism and respects the supertrace,
\beq\label{eq transpose cyclic}
(AB)^\dag=(-1)^{|A||B|}B^\dag A^\dag,\qquad \str(A)=\str(A^\dag), 
\eeq
for all supermatrices $A$, $B$.

\subsection{Super Yangian}
Define the \emph{rational R-matrix} $R(u)=u+P\in \End(\scrV^{\otimes 2})$ which satisfies the Yang-Baxter equation
\beq\label{eq:Yang-Baxter}
R^{(12)}(u-v)R^{(13)}(u)R^{(23)}(v)=R^{(23)}(v)R^{(13)}(u)R^{(12)}(u-v).
\eeq

The \emph{super Yangian} $\YglMN$ is a unital associative superalgebra with generators $T_{ab}^{\{s\}}$ of parity $|a|+|b|$, $1\lle a,b\lle N$, $s\in \Z_{>0}$. Consider the generating series
\[
T_{ab}(u)=\delta_{ab}+\sum_{s=1}^\infty T_{ab}^{\{s\}}u^{-s}
\]
and combine the series into a linear operator $T(u)=\sum_{a,b=1}^{N}E_{ab}\otimes T_{ab}(u)\in \End(\scrV)\otimes \YglMN[[u^{-1}]]$. The defining relations of $\YglMN$ are given by
\beq\label{eq RTT}
R^{(12)}(u-v)T^{(13)}(u)T^{(23)}(v)=T^{(23)}(v)T^{(13)}(u)R^{(12)}(u-v).
\eeq
Alternatively, the defining relation \eqref{eq RTT} gives
\beq\label{com relations}
\begin{split}
(u-v)[T_{ab}(u),T_{cd}(v)]=&\ (-1)^{|a||c|+|a||d|+|c||d|}\big(T_{cb}(v)T_{ad}(u)-T_{cb}(u)T_{ad}(v)\big)\\
=&\ (-1)^{|a||b|+|a||d|+|b||d|}\big(T_{ad}(u)T_{cb}(v)-T_{ad}(v)T_{cb}(u)\big).
\end{split}
\eeq

The super Yangian $\YglMN$ is a Hopf superalgebra with a coproduct and an opposite coproduct given by 
\beq\label{eq cop}
\begin{split}
&\Delta: T_{ab}(u)\mapsto \sum_{c=1}^{N} T_{cb}(u)\otimes T_{ac}(u),\qquad\  1\lle a,b\lle N,\\
&\widetilde\Delta: T_{ab}(u)\mapsto\sum_{c=1}^{N}(-1)^{(|a|+|c|)(|c|+|b|)} T_{ac}(u)\otimes T_{cb}(u),
\end{split}
\eeq
which have equivalent matrix forms
\beq\label{eq cop mat}
\begin{split}
&(\mathrm{id}\otimes\Delta)(T(u))=T^{(13)}(u)T^{(12)}(u),\\
&(\mathrm{id}\otimes\widetilde\Delta)(T(u))=T^{(12)}(u)T^{(13)}(u).
\end{split}
\eeq

For any complex number $z\in \C$, there is an automorphism
\beq\label{eq:shift}
\rho_z:\YglMN\to \YglMN,\qquad T_{ab}(u)\to T_{ab}(u-z),
\eeq
where $(u-z)^{-1}$ is expanded as a power series in $u^{-1}$. The \emph{evaluation homomorphism} is defined by the rule: 
\beq\label{eq:eval}
\epsilon:\YglMN\to \UglMN, \qquad T_{ba}^{\{s\}}\mapsto (-1)^{|a|}\delta_{1s}e_{ab},
\eeq
for $s\in\Z_{>0}$. The super Yangian $\YglMN$ contains $\UglMN$ as a Hopf subalgebra via the embedding given by $e_{ab}\mapsto (-1)^{|a|}T_{ba}^{\{1\}}$. By \eqref{com relations}, one has
\beq\label{zero mode com relations}
[T_{ab}^{\{1\}},T_{cd}(x)]=(-1)^{|a||c|+|a||d|+|c||d|}\big(\delta_{ad}T_{cb}(u)-\delta_{cb}T_{ad}(u)\big).
\eeq
The relation \eqref{zero mode com relations} implies that
\beq\label{eq com gll}
[E_{ab}\otimes 1+1\otimes e_{ab},T(x)]=0,
\eeq
for any $1\lle a,b\lle N$.

For any $\glMN$-module $M$ and $z\in\C$, denote by $M(z)$ the $\YglMN$-module obtained by pulling back of $M$ through the homomorphism $\pi(z):=\epsilon\circ\rho_z$. The module $M(z)$ is called an \emph{evaluation module at the evaluation point} $z$.

Let $\YMNplus$ be the left ideal of $\YglMN$ generated by the coefficients of the series $T_{ba}(u)$ for $1\lle a<b\lle N$. Let $\theta$ be the canonical projection $\theta:\YglMN\to \YglMN/\YMNplus$. For $A,B\in\YglMN$, we write $A\simeq B$ if $\theta(A)=\theta(B)$. In other words, $A\simeq B$ if $A-B\in \YMNplus$. The following lemma is straightforward by \eqref{com relations} and \eqref{eq cop}.

\begin{lem}\label{lem:plus-proj} 
For any $1\lle a,b\lle N$, we have
\begin{enumerate} 
\item the coefficients of the series $[T_{aa}(u),T_{bb}(v)]$ are in $\YMNplus$;
\item if $Z\in\YMNplus$, then the coefficients of $ZT_{aa}(u)$ are also in $\YMNplus$;
\item the coefficients of $\Delta(T_{aa}(u))-T_{aa}(u)\otimes T_{aa}(u)$ and $\widetilde\Delta(T_{aa}(u))-T_{aa}(u)\otimes T_{aa}(u)$ are in $\YMNplus\otimes \YglMN+\YglMN\otimes \YMNplus$.
\end{enumerate}
\end{lem}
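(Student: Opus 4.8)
The plan is to deduce all three items directly from the RTT relations \eqref{com relations} and the coproduct formulas \eqref{eq cop}, using throughout the elementary fact that, $\YMNplus$ being the left ideal generated by the strictly lower-triangular coefficients $T_{ba}^{\{s\}}$ ($a<b$), any element of $\YglMN$ having one of these generators as its rightmost factor lies in $\YMNplus$; likewise the subspace $\YMNplus\otimes\YglMN+\YglMN\otimes\YMNplus$ contains $g\otimes y$ and $y\otimes g$ for every $y\in\YglMN$ and every generator $g=T_{ij}^{\{s\}}$ with $i>j$. I would also use the standard fact that if $(u-v)X(u,v)$ has all its coefficients in a subspace $V\subseteq\YglMN$ then so does $X(u,v)$ (compare coefficients of $u^{-p}v^{-q}$ and induct on $p$), which reduces items (1) and (2) to controlling $(u-v)[T_{aa}(u),T_{bb}(v)]$ and $(u-v)[T_{bc}(v),T_{aa}(u)]$.

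For item (1) the case $a=b$ is trivial, so assume $a\ne b$. I would apply \eqref{com relations} to $[T_{aa}(u),T_{bb}(v)]$ and pick whichever of its two equivalent right-hand sides places the lower-triangular member of $\{T_{ab},T_{ba}\}$ on the right: when $b>a$ the second form gives $(u-v)[T_{aa}(u),T_{bb}(v)]=\pm\big(T_{ab}(u)T_{ba}(v)-T_{ab}(v)T_{ba}(u)\big)$, and when $a>b$ the first form gives $\pm\big(T_{ba}(v)T_{ab}(u)-T_{ba}(u)T_{ab}(v)\big)$. Since $a\ne b$ the entries $T_{ab},T_{ba}$ carry no constant term, so every coefficient on the right lies in $\YMNplus$, and dividing by $u-v$ concludes.

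For item (2), since $\YMNplus$ is a left ideal it suffices to show that $T_{bc}(v)T_{aa}(u)$ has all its coefficients in $\YMNplus$ whenever $b>c$; then for $Z=\sum_k x_k g_k$ with each $g_k$ a generator, $ZT_{aa}(u)=\sum_k x_k\,g_kT_{aa}(u)$ has all coefficients in $\YMNplus$ as well. Now $T_{bc}(v)T_{aa}(u)=T_{aa}(u)T_{bc}(v)+[T_{bc}(v),T_{aa}(u)]$, and the first summand already carries the generator $T_{bc}$ ($b>c$) on the right. For the commutator I would again select the favourable form of \eqref{com relations} for $[T_{aa}(u),T_{bc}(v)]$: the one with $T_{ba}$ on the right when $b>a$, and --- noting that $b>c$ forces $a>c$ once $b\lle a$ --- the one with $T_{ac}$ on the right when $b\lle a$; in both situations $(u-v)[T_{bc}(v),T_{aa}(u)]$ has all coefficients in $\YMNplus$, and dividing by $u-v$ finishes the item.

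Item (3) is the quickest. By \eqref{eq cop}, $\Delta(T_{aa}(u))-T_{aa}(u)\otimes T_{aa}(u)=\sum_{c\ne a}T_{ca}(u)\otimes T_{ac}(u)$, and for each $c\ne a$ exactly one of $T_{ca}$ (when $c>a$) or $T_{ac}$ (when $c<a$) is a generator of $\YMNplus$, occupying the left, respectively the right, tensor factor; the same splitting handles $\widetilde\Delta(T_{aa}(u))$ once one observes that $(-1)^{(|a|+|c|)(|c|+|a|)}=(-1)^{|a|+|c|}$ and that the $c=a$ summand equals $T_{aa}(u)\otimes T_{aa}(u)$. The only step demanding a little care is the case analysis in item (2) ensuring that $[T_{bc}(v),T_{aa}(u)]$ always admits an RTT expression with a lower-triangular generator on the right; with that split on $b$ versus $a$ in place, the remainder is routine bookkeeping.
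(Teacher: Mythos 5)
Your proof is correct and follows exactly the route the paper intends: the paper offers no written proof, asserting only that the lemma ``is straightforward by \eqref{com relations} and \eqref{eq cop}'', and your argument is precisely that verification — choosing, for each commutator, the form of \eqref{com relations} that puts the lower-triangular entry $T_{bc}$ ($b>c$) as the rightmost factor, and splitting the coproduct sum according to whether $c<a$ or $c>a$. The only details worth retaining in a written version are the ones you already flag: the division-by-$(u-v)$ coefficient induction and the case split $b>a$ versus $b\lle a$ (whence $a>c$) in item (2).
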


We say that a vector $v$ in a $\YglMN$-module is a {\it singular $\ell$-weight vector} if $\YMNplus v=0$ and
\[
T_{aa}(u)v=\la_a(u)v, \qquad \la_a(u)\in 1+u^{-1}\C[u^{-1}].
\]
In this case, we call $\bla(u)=(\la_1(u),\dots,\la_N(u))$ the \emph{$\ell$-weight}  of $v$.

\subsection{Fusion procedure}\label{sec:fusion}
We recall the R-matrices defined by fusion procedure and their properties which will be used to define higher transfer matrices. Since the proofs of these statements are parallel to the even case, see \cite{MTV06}, {\cite[Propositions 1.6.2 \& 1.6.3]{Mol07}}, and references therein, we shall omit the details.

\begin{lem}\label{lem:anti-R-relation} We have
$$\displaystyle \mathop{\overrightarrow\prod}\limits_{1\lle i<j\lle k} R^{(ij)}(j-i)=\mathop{\overleftarrow\prod}\limits_{1\lle i<j\lle k} R^{(ij)}(j-i)=\mathbb H_{\{k\}}\prod_{j=1}^kj^{k-j+1},$$
$$\displaystyle \mathop{\overrightarrow\prod}\limits_{1\lle i<j\lle k} R^{(ij)}(i-j)=\mathop{\overleftarrow\prod}\limits_{1\lle i<j\lle k} R^{(ij)}(i-j)=\bA_{\{k\}}(-1)^k\prod_{j=1}^k(-j)^{k-j+1}.$$
\end{lem}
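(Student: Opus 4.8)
The plan is to establish both identities by induction on $k$, exploiting the Yang-Baxter equation \eqref{eq:Yang-Baxter} together with the fact that $R(u) = u + P$ degenerates in a useful way at special values of the spectral parameter: when $u = 1$ one has $R(1) = 1 + P$, which is (up to the factor $2$) the projection onto the symmetric subspace of a two-fold tensor product, and when $u = -1$ one has $R(-1) = -1 + P = -(1-P)$, which is (up to $2$) the projection onto the antisymmetric subspace. More precisely, on $\scrV^{\otimes 2}$ the operator $P$ has eigenvalues $+1$ and $-1$, so $\tfrac12(1+P)$ and $\tfrac12(1-P)$ are complementary idempotents; this is the base case $k=2$, where $\mathbb H_{\{2\}} = \tfrac12(1+P)$ and $\bA_{\{2\}} = \tfrac12(1-P)$, and the claimed scalar factors are $\prod_{j=1}^{2} j^{3-j} = 2$ and $(-1)^2\prod_{j=1}^2(-j)^{3-j} = 2$ respectively. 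Note that the parity signs hidden in $P = \sum E_{ab}\otimes E_{ba}(-1)^{|b|}$ are automatically accounted for because $P$ satisfies $P^2 = \id$ and implements the graded flip, so all the classical $\mathfrak{gl}_N$ arguments transcribe verbatim.

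First I would treat the symmetrizer identity. The key algebraic input is the standard recursion for the symmetrizer: $\mathbb H_{\{k\}}$ can be written (up to a scalar) as $\mathbb H_{\{k-1\}}$ acting on the first $k-1$ factors, followed by the ``staircase'' product of $R$-matrices $R^{(k-1,k)}(1)R^{(k-2,k)}(2)\cdots R^{(1,k)}(k-1)$, followed again by $\mathbb H_{\{k-1\}}$; equivalently one uses the well-known fusion identity $\prod_{i=1}^{k-1} R^{(i,k)}(k-i) \cdot \mathbb H_{\{k-1\}}^{(1,\dots,k-1)} = k\, \mathbb H_{\{k\}}$ (this is where the factor $k$ in the $j=k$ term of $\prod_{j=1}^k j^{k-j+1}$ enters, after accounting for the shift in exponents between $\prod_{j=1}^{k-1} j^{(k-1)-j+1}$ and $\prod_{j=1}^{k} j^{k-j+1}$). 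I would verify that the iterated product $\mathop{\overrightarrow\prod}_{1\le i<j\le k} R^{(ij)}(j-i)$, upon isolating the terms with $j=k$, factors precisely as (the product over $1\le i<j\le k-1$) times (the staircase $\prod_{i=1}^{k-1} R^{(i,k)}(k-i)$), using the nesting convention for $\mathop{\overrightarrow\prod}_{1\le i<j\le k}$ spelled out in the Preliminaries. Applying the induction hypothesis to the inner product and then the fusion identity to absorb the staircase into $\mathbb H_{\{k\}}$ yields the result. The equality of the left-ordered and right-ordered products $\mathop{\overrightarrow\prod} = \mathop{\overleftarrow\prod}$ is then a consequence of repeated use of the Yang-Baxter equation \eqref{eq:Yang-Baxter} to reverse the order of the factors — each transposition of adjacent $R$-factors is licensed by \eqref{eq:Yang-Baxter} since the arguments are consistent, and the symmetry of the set of arguments $\{j-i\}$ under the relevant reindexing makes the two orderings coincide; alternatively one observes that both orderings land in the image of $\mathbb H_{\{k\}}$, on which $S_k$ acts trivially, forcing them to agree after matching the overall scalar by evaluating on a single symmetric tensor such as $\bfv_1^{\otimes k}$.

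The antisymmetrizer identity follows by the identical argument with $R^{(ij)}(i-j) = -R^{(ij)}(j-i)\big|_{P\to P}$... more precisely by running the same induction with $R$ evaluated at negated arguments; the fusion identity becomes $\prod_{i=1}^{k-1} R^{(i,k)}(i-k)\cdot \bA_{\{k-1\}}^{(1,\dots,k-1)} = -k\,\bA_{\{k\}}$ (the sign $-k$ rather than $k$ is what produces the $(-1)^k$ and the $(-j)$ in $\prod_{j=1}^k(-j)^{k-j+1}$), and the base case is $R(-1) = -(1-P) = -2\bA_{\{2\}}$ as noted above. I expect the main obstacle to be purely bookkeeping: carefully matching the product-ordering conventions $\mathop{\overrightarrow\prod}_{1\le i<j\le k}$ against the recursive staircase factorization, and tracking the scalar factors $\prod_j j^{k-j+1}$ through the induction so that the shift in the exponent $k-j+1$ (which changes with $k$) combines correctly with the extra factor from the fused row. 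There is no essential difficulty coming from the super structure, which is precisely why the paper states that the proofs are parallel to the even case and refers to \cite{MTV06} and \cite[Propositions 1.6.2 \& 1.6.3]{Mol07}; accordingly I would present the induction skeleton and the two fusion identities as the load-bearing steps and relegate the scalar arithmetic to a short computation.
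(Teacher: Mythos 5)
Your overall strategy --- induction on $k$, peeling the $j=k$ ``staircase'' off the nested product, absorbing it into the larger (anti)symmetrizer via a fusion identity, and handling the two orderings by Yang--Baxter together with the projector property --- is exactly the standard argument; the paper itself omits the proof and points to \cite{MTV06} and \cite[Propositions 1.6.2 \& 1.6.3]{Mol07}, which proceed in precisely this way, and you are right that the super structure adds nothing since $P^2=\id$ and $\tfrac12(1\pm P)$ remain complementary idempotents.

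However, the two fusion identities you designate as the load-bearing steps are stated with incorrect scalars, and with those scalars the induction does not close. With the paper's normalization $\mathbb H_{\{k\}}=\frac{1}{k!}\sum_{\sigma}\sigma$, the correct statements are
\[
R^{(k-1,k)}(1)\cdots R^{(1,k)}(k-1)\,\mathbb H_{\{k-1\}}^{(1,\dots,k-1)}=k!\,\mathbb H_{\{k\}},\qquad
R^{(k-1,k)}(-1)\cdots R^{(1,k)}(1-k)\,\bA_{\{k-1\}}^{(1,\dots,k-1)}=(-1)^{k-1}k!\,\bA_{\{k\}},
\]
not $k\,\mathbb H_{\{k\}}$ and $-k\,\bA_{\{k\}}$. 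Indeed the ratio of the lemma's scalars between steps $k-1$ and $k$ is $\prod_{j=1}^{k}j^{k-j+1}\big/\prod_{j=1}^{k-1}j^{k-j}=k\cdot(k-1)!=k!$ (and $(-1)^{k-1}k!$ in the antisymmetric case), so a fused factor of only $k$ leaves you short by $(k-1)!$ at every step; a direct check at $k=3$ gives $(1+P^{(23)})(2+P^{(13)})\cdot\tfrac12(1+P^{(12)})=\sum_{\sigma\in S_3}\sigma=6\,\mathbb H_{\{3\}}$, not $3\,\mathbb H_{\{3\}}$. Relatedly, your base-case evaluation $(-1)^2\prod_{j=1}^{2}(-j)^{3-j}$ equals $-2$, not $2$ (which is consistent with $R(-1)=-2\bA_{\{2\}}$, as you correctly state a few lines later). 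These are repairable arithmetic slips rather than structural flaws --- your own alternative of evaluating both orderings on $\bfv_1^{\otimes k}$, where each $R^{(ij)}(j-i)$ acts by the scalar $j-i+1$, yields $\prod_{1\lle i<j\lle k}(j-i+1)=\prod_{j=1}^{k}j^{k-j+1}$ at once and is the cleanest way to pin down the constants --- but as written the identities you lean on are false for $k\gge 3$ and must be corrected before the induction goes through.
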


By the Yang-Baxter equation \eqref{eq:Yang-Baxter} and Lemma \ref{lem:anti-R-relation}, one has
\beq\label{eq-AAR}
\begin{split}
\bA_{\{k\}}^{(1\dots k)}\bA_{\{l\}}^{(k+1,\dots,k+l)}&\mathop{\overrightarrow\prod}\limits_{1\lle i\lle k}\mathop{\overleftarrow\prod}\limits_{1\lle j\lle l}	R^{(i,j+k)}(u+i-j-k+l)\\
=\Big(&\mathop{\overleftarrow\prod}\limits_{1\lle i\lle k}\mathop{\overrightarrow\prod}\limits_{1\lle j\lle l}	R^{(i,j+k)}(u+i-j-k+l)\Big)\bA_{\{k\}}^{(1\dots k)}\bA_{\{l\}}^{(k+1,\dots,k+l)}.
\end{split}
\eeq
Define $R^{\wedge k,\wedge l}(u)$ acting on $\scrV^{\wedge k}\otimes \scrV^{\wedge l}$ by
\[
R^{\wedge k,\wedge l}(u)=\mathop{\overleftarrow\prod}\limits_{1\lle i\lle k}\mathop{\overrightarrow\prod}\limits_{1\lle j\lle l}	R^{(i,j+k)}(u+i-j-k+l)\big|_{\scrV^{\wedge k}\otimes \scrV^{\wedge l}}\in \End(\scrV^{\wedge k})\otimes\End(\scrV^{\wedge l}).
\]
We have the following properties for these R-matrices,
\begin{align*}
\big(R^{\wedge k,\wedge l}(u-v)\big)^{(12)}&\big(R^{\wedge k,\wedge \ell}(u)\big)^{(13)}\big(R^{\wedge l,\wedge \ell}(v)\big)^{(23)}\\
= \ &\big(R^{\wedge l,\wedge \ell}(v)\big)^{(23)}\big(R^{\wedge k,\wedge \ell}(u)\big)^{(13)}\big(R^{\wedge k,\wedge l}(u-v)\big)^{(12)},
\end{align*}
and
\beq\label{eq:R-Q}
\big[R^{\wedge k,\wedge l}(u-v),Q^{\wedge k}\otimes Q^{\wedge l}]=0
\eeq
for any even matrix $Q\in\EndV$.

Let
\[
R_{\wedge k,\wedge 1}(u)=u+\sum_{a,b=1}^N\sum_{i=1}^k(E_{ab}^{(i)})\big|_{\sV^{\wedge k}}\otimes E_{ba}(-1)^{|b|}\in \End(\sV^{\wedge k})\otimes \EndV,
\]
\[
R_{\wedge 1,\wedge k}(u)=u+k-1+\sum_{a,b=1}^N\sum_{i=1}^kE_{ab}\otimes(E_{ba}^{(i)})\big|_{\sV^{\wedge k}} (-1)^{|b|}\in  \EndV\otimes\End(\sV^{\wedge k}).
\]
\begin{lem}\label{lem:antisym-sym-R}
We have
\[
R^{\wedge k,\wedge 1}(u)=R_{\wedge k,\wedge 1}(u)\prod\limits_{i=1}^{k-1}(u-i),\quad R^{\wedge 1,\wedge k}(u)=R_{\wedge 1,\wedge k}(u)\prod\limits_{i=0}^{k-2}(u+i).\qedd
\]	
\end{lem}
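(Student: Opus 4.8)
The plan is to prove the identity for $R^{\wedge k,\wedge 1}(u)$ by induction on $k$ and to deduce the one for $R^{\wedge 1,\wedge k}(u)$ from the analogous "mirror" induction (or by a symmetry argument). It is convenient to work before restriction. Write
\[
Q_k(u)=\mathop{\overleftarrow\prod}\limits_{1\lle i\lle k}R^{(i,k+1)}(u+i-k)=R^{(k,k+1)}(u)R^{(k-1,k+1)}(u-1)\cdots R^{(1,k+1)}(u-k+1),
\]
so that $R^{\wedge k,\wedge 1}(u)=Q_k(u)\big|_{\scrV^{\wedge k}\otimes\scrV}$; the fact that $Q_k(u)$ preserves $\scrV^{\wedge k}\otimes\scrV$ is the $l=1$ specialization of \eqref{eq-AAR} together with $\bA_{\{k\}}^2=\bA_{\{k\}}$. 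Since $u+\sum_{i=1}^{k}P^{(i,k+1)}$ is invariant under the $S_k$-action on the first $k$ factors, it commutes with $\bA_{\{k\}}^{(1\dots k)}$ and its restriction to $\scrV^{\wedge k}\otimes\scrV$ is exactly $R_{\wedge k,\wedge 1}(u)$. Thus the claim is the operator identity
\[
Q_k(u)\,\bA_{\{k\}}^{(1\dots k)}=\Big(u+\sum_{i=1}^{k}P^{(i,k+1)}\Big)\bA_{\{k\}}^{(1\dots k)}\prod_{i=1}^{k-1}(u-i)
\]
on $\scrV^{\otimes(k+1)}$, and likewise for $R^{\wedge 1,\wedge k}(u)$.

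The base case $k=1$ is just $R^{(1,2)}(u)=u+P^{(1,2)}$. For the inductive step I factor $Q_k(u)=R^{(k,k+1)}(u)\,Q_{k-1}'(u-1)$, where $Q_{k-1}'(v)=R^{(k-1,k+1)}(v)\cdots R^{(1,k+1)}(v-k+2)$ is the relabeling ($k\mapsto k+1$) of $Q_{k-1}(v)$. Using $\bA_{\{k\}}^{(1\dots k)}=\bA_{\{k-1\}}^{(1\dots k-1)}\bA_{\{k\}}^{(1\dots k)}$ and applying the induction hypothesis (in its relabeled form) to $Q_{k-1}'(u-1)\bA_{\{k-1\}}^{(1\dots k-1)}$, the claim reduces — after reindexing the scalar factor — to the single algebraic identity
\[
R^{(k,k+1)}(u)\Big(u-1+\sum_{i=1}^{k-1}P^{(i,k+1)}\Big)\bA_{\{k\}}^{(1\dots k)}=(u-1)\Big(u+\sum_{i=1}^{k}P^{(i,k+1)}\Big)\bA_{\{k\}}^{(1\dots k)}.
\]
Writing $R^{(k,k+1)}(u)=u+P^{(k,k+1)}$, expanding both sides and cancelling the common terms, this is equivalent to $\big(1+P^{(k,k+1)}\big)\sum_{i=1}^{k-1}P^{(i,k+1)}\,\bA_{\{k\}}^{(1\dots k)}=0$. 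Now for $i<k$ one has the relation $(k,k+1)(i,k+1)=(i,k+1)(i,k)$ in $S_{k+1}$, hence $P^{(k,k+1)}P^{(i,k+1)}=P^{(i,k+1)}P^{(i,k)}$ as operators (the assignment $\sigma_j\mapsto P^{(j,j+1)}$ being a genuine $S_{k+1}$-action), so $\big(1+P^{(k,k+1)}\big)P^{(i,k+1)}\bA_{\{k\}}^{(1\dots k)}=P^{(i,k+1)}\big(1+P^{(i,k)}\big)\bA_{\{k\}}^{(1\dots k)}=0$ because $(i,k)$ is a transposition of $\{1,\dots,k\}$ and therefore $P^{(i,k)}\bA_{\{k\}}^{(1\dots k)}=-\bA_{\{k\}}^{(1\dots k)}$. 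Summing over $i=1,\dots,k-1$ completes the induction.

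For $R^{\wedge 1,\wedge k}(u)=R^{(1,2)}(u+k-1)R^{(1,3)}(u+k-2)\cdots R^{(1,k+1)}(u)\big|_{\scrV\otimes\scrV^{\wedge k}}$ I would run the same argument from the other end: pull out the leftmost factor $R^{(1,2)}(u+k-1)$, apply the induction hypothesis to the product of the remaining $k-1$ R-matrices (with tensor slots $3,\dots,k+1$ antisymmetrized and relabeled), and reduce to the mirror identity, whose verification rests on $(1,2)(1,j)=(1,j)(2,j)$ in $S_{k+1}$ and on $P^{(2,j)}\bA_{\{k\}}^{(2\dots k+1)}=-\bA_{\{k\}}^{(2\dots k+1)}$; alternatively it follows from the first identity by conjugation with the order-reversal flip of $\scrV^{\otimes(k+1)}$, using $R(u)^{(21)}=R(u)$. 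I do not expect an essential obstacle: once the statement is rephrased through the unrestricted products and the projector $\bA_{\{k\}}$, the only content is the two elementary transposition identities in $S_{k+1}$ and the absorption property $\bA_{\{k-1\}}\bA_{\{k\}}=\bA_{\{k\}}$ of antisymmetrizers, and the $\Z_2$-grading plays no role because $P$ defines an honest symmetric-group action. The one thing to track carefully is the relabeling of tensor factors and the spectral-parameter shifts when invoking the induction hypothesis.
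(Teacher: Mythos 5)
Your proof is correct. The paper itself omits the argument, delegating to the even case in \cite{MTV06} and \cite[Propositions 1.6.2 \& 1.6.3]{Mol07}, and your induction is essentially that standard fusion argument: rewrite the claim as the unrestricted identity $Q_k(u)\,\bA_{\{k\}}^{(1\dots k)}=\bigl(u+\sum_{i=1}^{k}P^{(i,k+1)}\bigr)\bA_{\{k\}}^{(1\dots k)}\prod_{i=1}^{k-1}(u-i)$, peel off one $R$-matrix, and reduce to $\bigl(1+P^{(k,k+1)}\bigr)P^{(i,k+1)}\bA_{\{k\}}^{(1\dots k)}=0$ via $(k,k+1)(i,k+1)=(i,k+1)(i,k)$ and $P^{(i,k)}\bA_{\{k\}}=-\bA_{\{k\}}$. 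The only point that needs care in the super setting --- that $\sigma\mapsto P^{(i,i+1)}$ is an honest $S_{k+1}$-action so these transposition identities hold at the operator level --- you address explicitly, and the deduction of the second identity by conjugating with the order-reversing flip (using $R(u)^{(21)}=R(u)$) is also sound.
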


\begin{cor}\label{cor:fusion}
We have $R_{\wedge k,\wedge 1}(u)\big(R_{\wedge 1,\wedge k}(-u)\big)^{(21)}=(u+1)(k-u)$.
\end{cor}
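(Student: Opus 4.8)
The goal is to prove the identity
\[
R_{\wedge k,\wedge 1}(u)\bigl(R_{\wedge 1,\wedge k}(-u)\bigr)^{(21)}=(u+1)(k-u)
\]
in $\End(\sV^{\wedge k})\otimes\EndV$. The plan is to reduce this to the ``crossing unitarity'' relation for the fused R-matrices $R^{\wedge k,\wedge 1}$ and $R^{\wedge 1,\wedge k}$, which itself descends from the elementary unitarity of the rational R-matrix $R(u)=u+P$. The first step is to recall (or establish in a line) that $R(u)R^{(21)}(-u)=(u+1)(1-u)$ on $\scrV^{\otimes 2}$; this follows from $P^2=\id$ and $(u+P)(-u+P)=P^2-u^2=1-u^2$ after noting $R^{(21)}(-u)=P R(-u)P=-u+P$ since $PPP=P$. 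Then, using the fusion definition of $R^{\wedge k,\wedge 1}(u)$ as a product $\mathop{\overleftarrow\prod}_{1\lle i\lle k}R^{(i,k+1)}(u+i-k)$ restricted to $\scrV^{\wedge k}\otimes\scrV^{\wedge 1}$, and the analogous product for $R^{\wedge 1,\wedge k}$, one telescopes the product of the two fused matrices using the elementary unitarity factor by factor together with the Yang--Baxter equation to move factors past one another; the antisymmetrizer $\bA_{\{k\}}$ absorbs the extra $P$-type terms. This yields
\[
R^{\wedge k,\wedge 1}(u)\bigl(R^{\wedge 1,\wedge k}(-u)\bigr)^{(21)}=\prod_{i=1}^{k}(u+i-k+1)(k-i-u)\;=\;(u+1)(k-u)\prod_{i=1}^{k-1}(u-i)\prod_{i=0}^{k-2}(-u-i),
\]
where I have matched the scalar shifts coming from the $k$ elementary factors.

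The second step is purely bookkeeping: divide by the scalar prefactors supplied by Lemma \ref{lem:antisym-sym-R}. That lemma gives $R^{\wedge k,\wedge 1}(u)=R_{\wedge k,\wedge 1}(u)\prod_{i=1}^{k-1}(u-i)$ and $R^{\wedge 1,\wedge k}(u)=R_{\wedge 1,\wedge k}(u)\prod_{i=0}^{k-2}(u+i)$, hence $\bigl(R^{\wedge 1,\wedge k}(-u)\bigr)^{(21)}=\bigl(R_{\wedge 1,\wedge k}(-u)\bigr)^{(21)}\prod_{i=0}^{k-2}(-u+i)$. Substituting these into the crossing-unitarity identity and cancelling the common factor $\prod_{i=1}^{k-1}(u-i)\cdot\prod_{i=0}^{k-2}(-u+i)$ — note this is exactly the product that appears on the right-hand side of the computation in the first step, after reindexing $i\mapsto k-i$ in one of them — leaves precisely $(u+1)(k-u)$. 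One should double-check the sign: $\prod_{i=0}^{k-2}(-u-i)=(-1)^{k-1}\prod_{i=0}^{k-2}(u+i)$ and $\prod_{i=0}^{k-2}(-u+i)=(-1)^{k-1}\prod_{i=1}^{k-1}(u-i)$ after reindexing, so the signs match and the cancellation is clean.

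Alternatively — and this may actually be the cleaner route to write down — one can avoid fusion entirely and compute directly. Both $R_{\wedge k,\wedge 1}(u)$ and $\bigl(R_{\wedge 1,\wedge k}(-u)\bigr)^{(21)}$ are given by explicit formulas: writing $\Omega=\sum_{a,b}\bigl(\sum_{i=1}^k E_{ab}^{(i)}\bigr)\big|_{\sV^{\wedge k}}\otimes E_{ba}(-1)^{|b|}\in\End(\sV^{\wedge k})\otimes\EndV$, we have $R_{\wedge k,\wedge 1}(u)=u+\Omega$, while $\bigl(R_{\wedge 1,\wedge k}(-u)\bigr)^{(21)}=-u+k-1-\Omega'$ where $\Omega'$ is the image under the flip of the corresponding operator; the point is that the flip sends the $\End(\scrV)\otimes\End(\sV^{\wedge k})$-operator to an $\End(\sV^{\wedge k})\otimes\End(\scrV)$-operator, and a short computation with signs shows this flipped operator equals $\Omega$ itself. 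Hence the left-hand side becomes $(u+\Omega)(-u+k-1-\Omega)=-u^2+(k-1)u-u\Omega+(k-1-u)\Omega+ u\Omega-\Omega^2=(k-1)u-u^2+(k-1)\Omega-\Omega^2$ — wait, this requires knowing $\Omega^2$. The key algebraic input is then that $\Omega$ satisfies a quadratic relation on $\sV^{\wedge k}\otimes\scrV$, namely $\Omega^2=(k-1)\Omega+ k\cdot(\text{something})$; concretely, because $\sV^{\wedge k}$ is an irreducible $\glMN$-module, $\Omega$ (which is essentially a Casimir-type element coupling the two factors) acts with at most two eigenvalues on $\sV^{\wedge k}\otimes\scrV\cong \sV^{\wedge(k+1)}\oplus(\text{other component})$, giving the quadratic relation $(\Omega - 1)(\Omega + k) = 0$ or similar. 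I expect this eigenvalue/quadratic-relation step to be the main obstacle: one must correctly identify the two constituents of $\sV^{\wedge k}\otimes\scrV$ and the eigenvalue of $\Omega$ on each, keeping careful track of the super signs $(-1)^{|b|}$ and the shift $k-1$. For this reason I would favor the fusion-based argument (steps one and two above), where the super signs are already packaged inside the R-matrix formalism and the identity follows by a transparent telescoping from scalar unitarity of $R(u)=u+P$, with Lemma \ref{lem:antisym-sym-R} doing the final cleanup.
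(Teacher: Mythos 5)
Your primary (fusion-based) argument is exactly the paper's proof: the paper deduces the corollary in one line from the inversion relation $R(u)R^{(21)}(-u)=1-u^2$, applied factorwise to the fused products and then combined with the scalar prefactors of Lemma \ref{lem:antisym-sym-R}, which is precisely your steps one and two; so the method is correct and the same. Two small corrections to your write-up. First, no Yang--Baxter moves and no absorption of ``extra $P$-type terms'' by $\bA_{\{k\}}$ are needed: writing $R^{\wedge k,\wedge 1}(u)=R^{(k,k+1)}(u)\cdots R^{(1,k+1)}(u+1-k)$ and $\big(R^{\wedge 1,\wedge k}(-u)\big)^{(21)}=\big(R^{(21)}(-u+k-1)\big)^{(1,k+1)}\cdots\big(R^{(21)}(-u)\big)^{(k,k+1)}$, the factors pair off from the inside out, each pair contributing $1-(u+i-k)^2$. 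Second, your displayed intermediate identities contain index slips: the telescoped product is $\prod_{i=1}^{k}(u+i-k+1)(k+1-i-u)=(u+1)(k-u)\prod_{i=0}^{k-2}(u-i)\prod_{j=1}^{k-1}(j-u)$, not $\prod_{i}(u+i-k+1)(k-i-u)$, and the prefactor of $\big(R^{\wedge 1,\wedge k}(-u)\big)^{(21)}$ is $\prod_{i=0}^{k-2}(-u+i)$, not $\prod_{i=0}^{k-2}(-u-i)$ (your closing ``sign check'' conflates these two distinct products). With the corrected indices the product of the two Lemma \ref{lem:antisym-sym-R} prefactors, $\prod_{i=1}^{k-1}(u-i)\cdot\prod_{i=0}^{k-2}(i-u)$, equals $\prod_{j=1}^{k-1}(j-u)\cdot\prod_{i=0}^{k-2}(u-i)$ and cancels cleanly, leaving $(u+1)(k-u)$; your alternative route via a quadratic relation for $\Omega$ is unnecessary and, as you note, incomplete.
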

\begin{proof}
This follows from the inversion relation of R-matrix, $R(u)R^{(21)}(-u)=1-u^2$.
\end{proof}

Consider the series $T^{(k,k+1)}(u)\cdots T^{(1,k+1)}(u-k+1)$ with coefficients in $\End(\scrV^{\otimes k})\otimes \YglMN$. By \eqref{eq RTT} and Lemma \ref{lem:anti-R-relation}, we have
\beq\label{eq:Att}
\begin{split}
\bA_{\{k\}}^{(1\dots k)}T^{(1,k+1)}(u-k+1)\cdots &\  T^{(k,k+1)}(u)\\
= &\ T^{(k,k+1)}(u)\cdots T^{(1,k+1)}(u-k+1)\bA_{\{k\}}^{(1\dots k)}.
\end{split}
\eeq
Hence the space $\scrV^{\wedge k}$ is invariant under all coefficients of the series $T^{(k,k+1)}(u)\cdots T^{(1,k+1)}(u-k+1)$. Denote $T^{\wedge k}(u)$ the restriction of this series to $\End(\scrV^{\wedge k})\otimes \YglMN$:
\beq\label{eq:T-wedge-K}
T^{\wedge k}(u)=T^{(k,k+1)}(u)\cdots T^{(1,k+1)}(u-k+1)\big|_{\scrV^{\wedge k}}^{(1\dots k)}.
\eeq
Note that, $T^{\wedge 1}(u)=T(u)$. Moreover, it follows from \eqref{eq RTT} and \eqref{eq-AAR} that
\beq\label{eq:high-RTT}
\begin{split}
\big(R^{\wedge k,\wedge l}(u-v)\big)^{(12)}&\big(T^{\wedge k}(u)\big)^{(13)}\big(T^{\wedge l}(v)\big)^{(23)}\\
= \ &\big(T^{\wedge l}(v)\big)^{(23)}\big(T^{\wedge k}(u)\big)^{(13)}\big(R^{\wedge k,\wedge l}(u-v)\big)^{(12)}.
\end{split}
\eeq
Clearly, by \eqref{eq cop mat}, we have
\beq\label{eq:copro-high}
\begin{split}
(\mathrm{id}\otimes\Delta)\big(T^{\wedge k}(u)\big)=\big(T^{\wedge k}(u)\big)^{(13)}\big(T^{\wedge k}(u)\big)^{(12)},\\
(\mathrm{id}\otimes\widetilde\Delta)\big(T^{\wedge k}(u)\big)=\big(T^{\wedge k}(u)\big)^{(12)}\big(T^{\wedge k}(u)\big)^{(13)}.
\end{split}
\eeq

\section{XXX spin chains}\label{sec:XXX}
\subsection{Higher transfer matrices}
For any even matrix $Q\in\EndV$, define the series
\beq\label{eq:def-transfer}
\mathscr T_{k,Q}(u)=(\str_{\scrV^{\wedge k}}\otimes \mathrm{id})\big(Q^{\wedge k}T^{\wedge k}(u)\big),\qquad k\in\Z_{>0},
\eeq
with coefficients in $\YglMN$. We call these series \emph{transfer matrices}. By convention, we also set $\mathscr T_{0,Q}(u)=1$. Note that transfer matrices are even. 
\begin{lem} Transfer matrices satisfy the following properties.
\begin{enumerate}
\item Transfer matrices commute, $[\mathscr T_{k,Q}(u),\mathscr T_{l,Q}(v)]=0$.
\item If $Q$ is the identity matrix, then coefficients of transfer matrices commute with the subalgebra $\UglMN$ in $\YglMN$.
\item If $Q$ is a diagonal matrix with pairwise distinct diagonal elements, then the subalgebra (the Bethe subalgebra) of $\YglMN$ generated by all coefficients of all transfer matrices contains $\mathrm U(\h)$, where $\h$ is the Cartan subalgebra of $\glMN$.\qed
\end{enumerate}
\end{lem}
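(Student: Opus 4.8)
For item (1) the plan is the standard argument with fused $R$-matrices. Working in $\End(\scrV^{\wedge k})\otimes\End(\scrV^{\wedge l})\otimes\YglMN$, multiply the higher $RTT$-relation \eqref{eq:high-RTT} on the left by $(Q^{\wedge k})^{(1)}(Q^{\wedge l})^{(2)}$ and apply $\str_{\scrV^{\wedge k}\otimes\scrV^{\wedge l}}$. On the left-hand side, \eqref{eq:R-Q} moves $Q^{\wedge k}\otimes Q^{\wedge l}$ past $R^{\wedge k,\wedge l}(u-v)$, and supercyclicity of the supertrace then carries that $R$-factor to the far right, where it already sits on the right-hand side; cancelling it leaves $\mathscr T_{k,Q}(u)\mathscr T_{l,Q}(v)=\mathscr T_{l,Q}(v)\mathscr T_{k,Q}(u)$. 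Cancellation is legitimate because $R^{\wedge k,\wedge l}(u-v)$ is invertible for generic values of the spectral parameter (by Lemma \ref{lem:antisym-sym-R}, Corollary \ref{cor:fusion}, and the fusion construction) and the asserted identity is one of formal power series in $u^{-1},v^{-1}$.

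For item (2) take $Q=\id$, so $Q^{\wedge k}=\id$ on $\scrV^{\wedge k}$ and $\mathscr T_{k,Q}(u)=(\str_{\scrV^{\wedge k}}\otimes\id)\big(T^{\wedge k}(u)\big)$. The plan is to promote \eqref{eq com gll} to $\scrV^{\wedge k}$: since $T^{\wedge k}(u)$ is the restriction to $\scrV^{\wedge k}$ of $T^{(k,k+1)}(u)\cdots T^{(1,k+1)}(u-k+1)$, each factor $T^{(i,k+1)}(u-k+i)$ satisfies $[E_{ab}^{(i)}+e_{ab}^{(k+1)},T^{(i,k+1)}(u-k+i)]=0$ by \eqref{eq com gll} (the remaining $E_{ab}^{(j)}$, $j\ne i$, commuting with it trivially), so the Leibniz rule yields
\[
\big[\,(E_{ab}^{(1)}+\cdots+E_{ab}^{(k)})\big|_{\scrV^{\wedge k}}\otimes 1+1\otimes e_{ab}\,,\,T^{\wedge k}(u)\,\big]=0 .
\]
Applying $\str_{\scrV^{\wedge k}}$, the first summand drops out by supercyclicity of the supertrace; since $1\otimes e_{ab}$ acts trivially on $\scrV^{\wedge k}$ and $\mathscr T_{k,Q}(u)$ is even, what remains is $[e_{ab},\mathscr T_{k,Q}(u)]=0$ for all $a,b$. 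As the $e_{ab}$ generate $\UglMN$, this is item (2).

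For item (3), set $E_k(Q):=\str_{\scrV^{\wedge k}}(Q^{\wedge k})$, so that $\sum_{k\gge0}E_k(Q)z^k=\prod_{a=1}^m(1+Q_a z)\big/\prod_{b=m+1}^N(1+Q_b z)$. Expanding \eqref{eq:T-wedge-K}, the coefficient of $u^{-1}$ in $T^{\wedge k}(u)$ is $\sum_{a,b}(E_{ab}^{(1)}+\cdots+E_{ab}^{(k)})\big|_{\scrV^{\wedge k}}\otimes T_{ab}^{\{1\}}$, so the coefficient $A_k$ of $u^{-1}$ in $\mathscr T_{k,Q}(u)$ equals $\sum_{a,b}\str_{\scrV^{\wedge k}}\!\big(Q^{\wedge k}(E_{ab}^{(1)}+\cdots+E_{ab}^{(k)})|_{\scrV^{\wedge k}}\big)\,T_{ab}^{\{1\}}$. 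The off-diagonal supertraces vanish ($Q^{\wedge k}$ is diagonal in the weight basis of $\scrV^{\wedge k}$ while $E_{ab}$, $a\ne b$, shifts weights), and the diagonal ones equal $Q_a\partial_{Q_a}E_k(Q)$; using $T_{aa}^{\{1\}}=(-1)^{|a|}e_{aa}$ this gives
\[
A_k=\sum_{a=1}^N(-1)^{|a|}\,Q_a\,\partial_{Q_a}E_k(Q)\;e_{aa}\ \in\ \mathcal B_Q ,
\]
where $\mathcal B_Q$ denotes the Bethe subalgebra. It then suffices that $\{A_k\}_{k\gge1}$ spans $\h$: if $\sum_a x_a(-1)^{|a|}Q_a\partial_{Q_a}E_k(Q)=0$ for all $k\gge1$, then the derivation $D=\sum_a(-1)^{|a|}x_aQ_a\partial_{Q_a}$ annihilates $\sum_kE_k(Q)z^k$, hence its logarithm, giving $\sum_{a=1}^N x_a\,Q_a z/(1+Q_a z)=0$; since the $Q_a$ are pairwise distinct (and nonzero, $Q$ being invertible), these rational functions are linearly independent, forcing $x_a=0$. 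Therefore $\h\subseteq\mathcal B_Q$, and hence $\mathrm{U}(\h)\subseteq\mathcal B_Q$.

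Items (1) and (2) are essentially bookkeeping with the fusion machinery and \eqref{eq com gll}. The real content is item (3): identifying precisely which Cartan element appears in the $u^{-1}$-coefficient of each $\mathscr T_{k,Q}(u)$ and checking, through the generating-function computation, that these elements already exhaust $\h$ exactly when the twist has pairwise distinct (nonzero) eigenvalues. The only subtlety elsewhere is the invertibility of $R^{\wedge k,\wedge l}(u-v)$ in item (1), which is why that step is cleanest at the level of formal power series.
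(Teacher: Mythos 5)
The paper states this lemma with the proof omitted (it is considered standard), so there is nothing of the author's to compare against; your write-up supplies exactly the expected standard argument, and items (2) and (3) are correct as written. In particular, the computation in (3) of the $u^{-1}$-coefficient $A_k=\sum_a(-1)^{|a|}Q_a\partial_{Q_a}E_k(Q)\,e_{aa}$ together with the generating-function/logarithm argument showing that these elements span $\h$ exactly when the $Q_a$ are pairwise distinct and nonzero is sound; the nonvanishing of the $Q_a$ is genuinely needed and is covered by the paper's standing invertibility assumption on $Q$.

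The one step that must be repaired is the ``cancellation'' in item (1). As literally described, after multiplying \eqref{eq:high-RTT} by $(Q^{\wedge k})^{(1)}(Q^{\wedge l})^{(2)}$, commuting $Q$ past $R$ via \eqref{eq:R-Q}, applying the supertrace, and using supercyclicity, you arrive at
\[
\str\Big(Q^{(1)}Q^{(2)}\big(T^{\wedge k}(u)\big)^{(13)}\big(T^{\wedge l}(v)\big)^{(23)}\big(R^{\wedge k,\wedge l}(u-v)\big)^{(12)}\Big)
=\str\Big(Q^{(1)}Q^{(2)}\big(T^{\wedge l}(v)\big)^{(23)}\big(T^{\wedge k}(u)\big)^{(13)}\big(R^{\wedge k,\wedge l}(u-v)\big)^{(12)}\Big),
\]
and you then delete the common $R$-factor from both sides. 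That inference is invalid: $\str(Y_1R)=\str(Y_2R)$ with $R$ invertible does not imply $\str(Y_1)=\str(Y_2)$, and invertibility of $R$ is not the issue. The correct (and intended) route is to use invertibility \emph{before} tracing: rewrite \eqref{eq:high-RTT} as $\big(T^{\wedge k}(u)\big)^{(13)}\big(T^{\wedge l}(v)\big)^{(23)}=\big(R^{(12)}\big)^{-1}\big(T^{\wedge l}(v)\big)^{(23)}\big(T^{\wedge k}(u)\big)^{(13)}R^{(12)}$ (legitimate for generic $u-v$, or over the field of rational functions in $u-v$), multiply by $Q^{(1)}Q^{(2)}$ and commute it past $\big(R^{(12)}\big)^{-1}$ using \eqref{eq:R-Q}, and only then take the supertrace, so that supercyclicity collapses the conjugation $R^{-1}(\cdots)R$ and the two factorized supertraces give $\mathscr T_{k,Q}(u)\mathscr T_{l,Q}(v)=\mathscr T_{l,Q}(v)\mathscr T_{k,Q}(u)$. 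With that rearrangement item (1) is fine, and the rest of your proof stands.
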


One can also define another family of transfer matrices associated to symmetrizers
\[
\mathfrak T_{k,Q}(u)=(\str_{\scrV^{\otimes k}}\otimes \mathrm{id})\big(\mathbb H^{(1\dots k)}_{\{k\}}Q^{(1)}\cdots Q^{(k)}T^{(1,k+1)}(u)\cdots T^{(k,k+1)}(u-k+1)\big),\qquad k\in\Z_{>0}.
\]
Again, we set $\mathfrak T_{0,Q}(u)=1$.

Transfer matrices can be compactly combined into a generating series using quantum Berezinian as follows, see \cite{MR14} and cf. \cite{Tal06}.

We follow the convention of \cite{MR14}. Let $\mc A$ be a superalgebra. Consider the operators of the form
\beq\label{eq:matrix}
\mc K=
\sum_{a,b=1}^N(-1)^{|a||b|+|b|}E_{ab}\otimes K_{ab}\in \EndV\otimes \mc A,
\eeq
where $K_{ab}$ are elements of $\mc A$ of parity $|a|+|b|$. We say that $\mc K$ is a \emph{Manin matrix} if
\[
[K_{ab},K_{cd}]=(-1)^{|a||b|+|a||c|+|b||c|}[K_{cb},K_{ad}]
\]
for all $1\lle a,b,c,d\lle N$. 

If $\mc K$ is invertible and has the form
\[
\mc K^{-1}=\sum_{a,b=1}^N(-1)^{|a||b|+|b|}E_{ab}\otimes K'_{ab}\in \EndV\otimes \mc A,
\]
then we define the (quantum) \emph{Berezinian} of $\mc K$ by
\beq\label{eq:ber}
\mathrm{Ber}(\mc K)=\sum_{\sigma\in S_m}\mathrm{sign}(\sigma)K_{\sigma(1)1}\cdots K_{\sigma(m)m}\cdot\sum_{\tl\sigma\in S_n}\mathrm{sign}(\sigma)K'_{m+1,m+\tl\sigma(1)}\cdots K'_{m+n,m+\tl\sigma(n)}.
\eeq

Let $e^{-\pa_u}$ be the difference operator, $(e^{-\pa_u} f)(u)=f(u-1)$ for any function $f$ in $u$. Let $\mc A$ be the superalgebra $\YglMN[[u^{-1},\pa_u]]$, where $\pa_u$ is even. Here $u$ and $\pa_u$ satisfy the relations
$$
\pa_u\cdot u^{-s}=u^{-s}\pa_u-su^{-s-1},\quad s\in\Z_{>0}.
$$
Consider the operator $Z^Q(x,\pa_u)$,
\[
Z_Q(u,\pa_u)=T^\dag(u)Q^\dag e^{-\pa_u}\in \EndV\otimes\YglMN[[u^{-1},\pa_u]].
\]
It follows from \eqref{eq RTT} or \eqref{com relations} that $Z_Q(u,\pa_u)$ is a Manin matrix, see e.g. \cite[Remark 2.12]{MR14} and cf. \cite[Proposition 4]{CF08}. Note that our generating series $T_{ij}(u)$ corresponds to $z_{ji}(u)$ in \cite{MR14}.

Define the rational difference operator $\mc D^Q(u,\pa_u)$,
\beq\label{eq:Ber-XXX}
\mc D_Q(u,\pa_u)=\mathrm{Ber}(1-Z_Q(u,\pa_u)).
\eeq
Applying the supertransposition to all copies of $\EndV$ and using cyclic property of supertrace, see \eqref{eq transpose cyclic}, it follows from \cite[Theorem 2.13]{MR14} that 
\beq\label{eq diff trans}
\mc D_Q(u,\pa_u)=\sum_{k=0}^\infty (-1)^k \mathscr T_{k,Q}(u)e^{-k\pa_u},\quad \big(\mc D_Q(u,\pa_u)\big)^{-1}=\sum_{k=0}^\infty \mathfrak T_{k,Q}(u)e^{-k\pa_u}.
\eeq

\subsection{Universal off-shell Bethe vectors}\label{sec:bv}
In this section, we recall the supertrace formula of Bethe vectors and its properties from \cite{MTV06,BR08,PRS17}.

Let $\bm\xi=(\xi^1,\dots,\xi^{N-1})$ be a sequence of nonnegative integers. Set $\xi^{< a}=\xi^1+\cdots+\xi^{a-1}$, $1< a \lle N$. In particular, we set $|\bm\xi|=\xi^{<N}$. Consider a series in $|\bm\xi|$ variables $$\bm t=(t_1^1,\dots,t^{1}_{\xi^1},\dots,t_1^{N-1},\dots,t^{N-1}_{\xi^{N-1}})$$ with coefficients in $\YglMN$,
\beq\label{EQ:BAEU}
\begin{split}
\widehat{\bB}_{\bm\xi}(\bm t)=\ &(\str\otimes \mathrm{id})\Big(T^{(1,|\bm\xi|+1)}(t_1^1)\cdots T^{(|\bm\xi|,|\bm\xi|+1)}(t_{\xi^{N-1}}^{N-1})\\
&\times \mathop{\overrightarrow\prod}\limits_{(a,i)<(b,j)}R^{(\xi^{<b}+j,\xi^{<a}+i)}(t_j^b-t_i^a)E_{21}^{\otimes \xi^1}\otimes\cdots\otimes E_{N,N-1}^{\otimes \xi^{N-1}}\otimes 1\Big),
\end{split}
\eeq
where the supertrace is taken over all factors and the pairs are ordered lexicographically, namely $(a,i)<(b,j)$ if $a<b$, or $a=b$ and $i<j$. Moreover, the product is taken over the set $\{(c,k)~|~1\lle c < N,\ 1\lle k\lle \xi^c\}$. Note that Bethe vectors are obtained by applying $\widehat{\bB}_{\bm\xi}(\bm t)$ to pseudovaccum vectors (singular $\ell$-weight vectors). Therefore, we call $\widehat{\bB}_{\bm\xi}(\bm t)$ (and its renormalizations) a \emph{universal off-shell Bethe vector}.

This supertrace formula is slightly different from the one in \cite{BR08}. But it turns out that they only differ by a scalar function in $\bm t$, see \cite[Propositions 3.2 \& 3.3]{PRS17}. 

It is clear from the Yang-Baxter equation and the equality $$R(u-v)E_{ab}\otimes E_{ab}=(u-v+(-1)^{|a|})E_{ab}\otimes E_{ab}$$ that $\widehat{\bB}_{\bm\xi}(\bm t)$ is divisible by 
\beq\label{eq-factors}
\prod_{a=1}^{N-1} \prod_{1\lle i<j\lle \xi^a}(t_j^a-t_i^a+(-1)^{|a+1|})
\eeq
in $\YglMN[t_1^1,\dots,t^{N-1}_{\xi^{N-1}}][[(t_1^1)^{-1},\dots,(t^{N-1}_{\xi^{N-1}})^{-1}]]$.

Recall the canonical projection $\theta:\YglMN\to \YglMN/\YMNplus$.
\begin{lem}\label{lem:div-far}
The series $\theta(\widehat{\bB}_{\bm\xi}(\bm t))$ is divisible by
\[
\prod_{a=1}^{N-2}\prod_{b=a+2}^{N-1}\prod_{i=1}^{\xi^a}\prod_{j=1}^{\xi^b}(t_j^b-t_i^a)
\]
in $(\YglMN/\YMNplus)[t_1^1,\dots,t^{N-1}_{\xi^{N-1}}][[(t_1^1)^{-1},\dots,(t^{N-1}_{\xi^{N-1}})^{-1}]]$.\qed
\end{lem}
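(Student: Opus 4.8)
The plan is to prove Lemma \ref{lem:div-far} by a careful analysis of which matrix-unit indices survive after applying the projection $\theta$, combined with the divisibility property \eqref{eq-factors} that is already established for the full universal off-shell Bethe vector. First I would recall that $\theta(\widehat{\bB}_{\bm\xi}(\bm t))$ is obtained by expanding the supertrace formula \eqref{EQ:BAEU} and then killing all terms in which any lowering operator $T_{ba}(u)$, $a<b$, acts on the pseudovacuum. The product $\mathop{\overrightarrow\prod}R^{(\xi^{<b}+j,\xi^{<a}+i)}(t_j^b-t_i^a)$ acts on the tensor slot carrying $E_{21}^{\otimes\xi^1}\otimes\cdots\otimes E_{N,N-1}^{\otimes\xi^{N-1}}$, so after contracting the supertrace each $R$-matrix $R(t_j^b-t_i^a)=(t_j^b-t_i^a)+P$ contributes either the identity term or a permutation term $P$. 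A permutation $P^{(\xi^{<b}+j,\,\xi^{<a}+i)}$ applied to the fixed tensor vector swaps the $a{+}1,a$ entry in one slot with the $b{+}1,b$ entry in another, producing off-diagonal combinations $E_{a+1,b}$, $E_{b+1,a}$ and so on; the key point is that for $b\gge a+2$ these swaps cannot feed into a product of $T$'s that is $\simeq$ something nonzero unless they are accompanied by the identity (scalar) part of the same $R$-matrix. More precisely, I would track the ``raising content'' of each term of the expansion and show that whenever $b-a\gge2$ and the permutation $P$ is selected from $R^{(\xi^{<b}+j,\xi^{<a}+i)}(t_j^b-t_i^a)$, the resulting term lies in $\YMNplus$, hence is killed by $\theta$; therefore the only surviving terms keep the scalar factor $(t_j^b-t_i^a)$ from every such $R$-matrix, which yields the claimed divisibility.

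Concretely, the key steps in order are: (i) expand each $R^{(ij)}(u)=u+P^{(ij)}$ in the product and organize the expansion over subsets $S$ of index pairs at which one picks the permutation term rather than the scalar term; (ii) fix a pair $(a,i),(b,j)$ with $b\gge a+2$ and suppose the corresponding permutation is chosen, i.e. $(\xi^{<a}+i,\xi^{<b}+j)\in S$; (iii) use the structure of the fixed vector $E_{21}^{\otimes\xi^1}\otimes\cdots\otimes E_{N,N-1}^{\otimes\xi^{N-1}}$ together with the R-matrix relation \eqref{eq:Yang-Baxter} and the evident identity $R(u-v)\,E_{ab}\otimes E_{ab}=(u-v+(-1)^{|a|})E_{ab}\otimes E_{ab}$ (already used just above the lemma) to move the chosen $P$ through the remaining $R$-matrices, reducing to the contracted product of $T$-series $T^{(1,*)}(t_1^1)\cdots$ hitting a tensor of matrix units whose combined ``lower-triangular defect'' between levels $a$ and $b$ is forced to involve $T_{cd}$ with $c>d$; (iv) invoke Lemma \ref{lem:plus-proj} and the commutation relations \eqref{com relations} to conclude that such contributions are in $\YMNplus$ and so vanish under $\theta$; (v) collect the surviving terms, each carrying the full factor $\prod_{b\gge a+2}\prod_{i,j}(t_j^b-t_i^a)$, and conclude divisibility in the stated ring of Laurent-type series. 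One can also phrase steps (iii)–(iv) more invariantly: the contracted object has an $\ell$-weight filtration, and the permutation at a pair with $b\gge a+2$ strictly lowers the available raising content by more than one step at a non-adjacent level, which $\theta$ cannot see.

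The main obstacle I anticipate is step (iii): making the combinatorics of ``pushing the selected permutations through the remaining $R$-matrices'' fully rigorous, because the ordered product $\mathop{\overrightarrow\prod}_{(a,i)<(b,j)}$ is long and non-commutative, and one must be careful that after choosing a permutation at a non-adjacent pair the Yang–Baxter moves do not reintroduce a scalar dependence that cancels the factor $(t_j^b-t_i^a)$. The cleanest route is probably to argue by induction on $|\bm\xi|$ (or on $N$), peeling off the last block of variables $t_1^{N-1},\dots,t_{\xi^{N-1}}^{N-1}$: their associated $E_{N,N-1}$ slots can only be linked, through a surviving permutation term, to the $E_{N-1,N-2}$ block (adjacent level), while any link to blocks at levels $\lle N-3$ forces the scalar term to be kept, and one reduces to the $\gl$-rank-$(N-1)$ statement applied after projecting. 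A secondary, more bookkeeping-type obstacle is keeping track of signs $(-1)^{|b|}$ and parities coming from $P=\sum E_{ab}\otimes E_{ba}(-1)^{|b|}$ and the supertrace \eqref{eq:cyc-tr}; these do not affect divisibility (a unit scalar times a divisible series is divisible) but must be handled without error to make the argument airtight. I expect the proof to ultimately be a routine but somewhat lengthy verification, which is presumably why the statement is given with ``\qed'' and the details deferred; in the write-up I would present the induction skeleton and the single key ``non-adjacent permutation $\Rightarrow$ $\YMNplus$'' lemma explicitly, and leave the Yang–Baxter shuffles as a direct computation parallel to \cite{MTV06}.
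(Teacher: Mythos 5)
Your strategy (expand each $R=u+P$ in \eqref{EQ:BAEU} and argue that any term in which the permutation is chosen at a non-adjacent pair is killed by $\theta$) is not the paper's route, and as written it has a genuine gap at exactly the step you flag as the main obstacle, namely (iii)--(iv). After contracting the supertrace, choosing permutations at a subset $S$ of pairs produces a monomial $\prod_k T_{a_k,\,a_{\sigma(k)}+1}(\cdot)$ in the prescribed order, where $\sigma$ is the composite of the selected transpositions of row indices. Your claim is that the presence of a non-adjacent pair in $S$ forces this monomial into $\YMNplus$. Two things are missing. First, a lowering factor sitting in the \emph{middle} of a monomial does not by itself place the monomial in the left ideal $\YMNplus$: e.g.\ $(u-v)[T_{21}(u),T_{12}(v)]=\pm\big(T_{22}(u)T_{11}(v)-T_{22}(v)T_{11}(u)\big)$, so $T_{21}(u)T_{12}(v)\not\simeq 0$. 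What saves you here is the specific positional structure (the column index of the offending lowering operator is strictly smaller than the levels of all factors to its right, so every correction term from \eqref{com relations} again ends in a lowering operator), but this has to be stated and proved; ``invoke Lemma \ref{lem:plus-proj} and \eqref{com relations}'' does not yet do it. Second, and more seriously, membership in $\YMNplus$ depends only on the composite permutation $\sigma$, while your claim is indexed by the \emph{selection} $S$; distinct selections (some containing non-adjacent pairs, some not) can compose to the same $\sigma$. You therefore need the combinatorial statement that no selection containing a non-adjacent pair can compose to a ``drop-free'' $\sigma$ (one with $a_{\sigma(k)}\gge a_k-1$ for all $k$), or else exhibit cancellations; neither is addressed, and this is precisely where the ordered, non-commutative product makes the bookkeeping delicate.

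For comparison, the paper avoids this combinatorics entirely: it uses the recursion of Proposition \ref{prop:BV-rec-XXX} (peeling off the \emph{first} block of variables $t^1_\cdot$ via $\widetilde\psi(t_1^1,\dots,t_{\xi^1}^1)$), together with Lemma \ref{lem11.2}, which says that after applying $\theta$ the part of the nested Bethe vector coming from $\mathrm{Y}(\gl_{\cN-2})$ loses all dependence on the evaluation points $t_i^1$. Hence the only denominators of $\theta(\bB_{\bm\xi}(\bm t))$ are between adjacent levels, and the divisibility of $\theta(\widehat{\bB}_{\bm\xi}(\bm t))$ follows from \eqref{eq:bae-sym}. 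If you want to salvage your approach, the cleanest fix is to replace the term-by-term analysis by this recursion (or your own last-block variant of it), proving the analogue of Lemma \ref{lem11.2} rather than classifying surviving permutation patterns.
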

The lemma will be proved in Section \ref{sec:bv-recur} after Proposition \ref{prop:BV-rec-XXX} where we recall the recursion for the Bethe vector.

Set
\beq\label{eq:bae-sym}
\begin{split}
&\bB_{\bm\xi}(\bm t)=\widehat{\bB}_{\bm\xi}(\bm t)\prod_{a=1}^{N-1} \prod_{1\lle i<j\lle \xi^a}\frac{1}{t_j^a-t_i^a+(-1)^{|a+1 |}} \prod_{1\lle a< b<N}\prod_{i=1}^{\xi^a}\prod_{j=1}^{\xi^b}\frac{1}{t_j^b-t_i^a},\\
&\overline\bB_{\bm\xi}(\bm t)=\bB_{\bm\xi}(\bm t)\prod_{1\lle i<j\lle \xi^m}\frac{1}{t_j^m-t_i^m-(-1)^{|m+1 |}},
\end{split}
\eeq
see \cite[Equation (3.1) \& Proposition 3.3]{PRS17}. Note that $\overline\bB_{\bm\xi}(\bm t)$ corresponds to the Bethe vector used in \cite{PRS17} which is  symmetric in variables $t_i^a$ with the same superscript $a$ for all $1\lle a< N$, see \cite[Proposition 3.2]{PRS17}. Examples of $\overline\bB_{\bm\xi}(\bm t)$ for small $N$ can be found in \cite[Section 3.1]{PRS17}.  

Here we shall mainly use $\bB_{\bm\xi}(\bm t)$ with $\bm t$ satisfying
\beq\label{eq:assm}
t_j^m-t_i^m-(-1)^{|m+1 |}\ne 0,\quad 1\lle i<j\lle \xi^m,
\eeq
due to the equality $T_{m+1,m}(u)T_{m+1,m}(u-1)=0$. Note that equation \eqref{eq:assm} always holds after reordering $t_i^m$ e.g. in increasing order with respect to the real parts. Moreover, $\bB_{\bm \xi}(\bm t)$ is symmetric in variables $t_i^a$ with the same superscript $a$ for all $1\lle a< N$ except when $a=m$ (the corresponding simple root is odd). %We write $\bB_{\bm\xi}^{\langle N\rangle}(\bm t)$ to indicate the dependence on $N$.

%The reason for using \eqref{eq:bae-sym} instead of \eqref{EQ:BAEU} is that $\bB_{\bm\xi}(\bm t)$ is  %Unlike \cite[Equation (3.1)]{PRS17}, to obtain this symmetry, we multiply $\prod_{1\lle i<j\lle \xi^m}(t_j^m-t_i^m-1)$ rather than dividing by $\prod_{1\lle i<j\lle \xi^m}(t_j^m-t_i^m+1)$, cf. \eqref{eq-factors} and \eqref{eq:bae-sym}.

In general, $\bB_{\bm\xi}(\bm t)$ is a sum of the products
\beq\label{eq:summand}
\begin{split}
T_{a_1,b_1}(t_1^1)\cdots T_{a_{|\bm\xi|},b_{|\bm\xi|}}(t_{\xi^{N-1}}^{N-1})p(\bm t)&\prod_{1\lle a< b<N}\prod_{i=1}^{\xi^a}\prod_{j=1}^{\xi^b}\frac{1}{t_j^b-t_i^a}%\\ \times &\prod_{1\lle i<j\lle \xi^m}\frac{1}{t_j^m-t_i^m-(-1)^{|m+1 |}}
\end{split}
\eeq
with various $a_1,\dots, a_{|\bm\xi|}$, $b_1,\dots, b_{|\bm\xi|}$ from $\{1,\dots,N\}$ and polynomials $p(\bm t)$.

\subsection{Main technical result}
We use the same notation as in Section \ref{sec:bv}. Following e.g. \cite{HMVY19}, introduce a sequence of polynomials $\bm y=(y_1,\dots,y_{N-1})$ associated to $\bm t$ and $\bm\xi$,
\beq\label{eq:y-t}
y_a(u)=\prod_{i=1}^{\xi^a}(u-t_i^a).
\eeq
By convention $y_0=y_N=1$. We also set $\ka_a=1$ for $1\lle a\lle m$ and $\ka_b=-1$ for $m<b\lle N$. 

From now on, we assume that $Q=\sum_{a=1}^NQ_aE_{aa}$ is diagonal. Define the series
\beq\label{eq:I}
\begin{split}
\mathscr I_{\bm\xi,Q}^{a,i}(\bm t)=\ka_aQ_aT_{aa}(t_i^a)&y_{a-1}(t_i^a+\ka_a)y_{a}(t_i^a-\ka_a)y_{a+1}(t_i^a)\\
+&\ \ka_{a+1}Q_{a+1}T_{a+1,a+1}(t_i^a)y_{a-1}(t_i^a)y_{a}(t_i^a+\ka_{a+1})y_{a+1}(t_i^a-\ka_{a+1})
\end{split}
\eeq
for $1\lle a< N$, $1\lle i\lle \xi^a$.

Given the data: integers $a_1,\dots,a_{|\bm\xi|+k-1}$, $b_1,\dots,b_{|\bm \xi|+k-1}$, $c\in \{1,\dots,N\}$ and $i\in \{1,\dots,\xi^c\}$, a sequence $s_1,\dots,s_{|\bm\xi|+k-1}$ which is a permutation of the sequence $u,\dots,u-k+1$, $t_1,\dots,\widehat{t_i^c},\dots,t_{\xi^{N-1}}^{N-1}$, where the hat means that the corresponding variable $t_i^c$ is skipped, and a polynomial $p(u;\bm t)$, consider
\beq\label{eq:products-I}
\begin{split}
T_{a_1,b_1}(s_1)\cdots &T_{a_{|\bm\xi|+k-1},b_{|\bm\xi|+k-1}}(s_{|\bm\xi|+k-1})\mathscr I_{\bm \xi,Q}^{c,i}(\bm t)p(u;\bm t)\\
\times &\ \prod_{a=1}^{N-1}\prod_{j=1}^{\xi^a}\Big(\frac{1}{u-t_j^a}\prod_{1\lle j< l\lle \xi^a}\frac{1}{t_j^a-t_l^a}\Big)\prod_{a=1}^{N-2}\prod_{j=1}^{\xi^a}\prod_{l=1}^{\xi^{a+1}}\frac{1}{(t_{l}^{a+1}-t_j^a)^2}.
\end{split}
\eeq
Here the factors $(u-t_j^a)^{-1}$ are considered as power series in $u^{-1}$. Denote by $I_{\bm\xi,\bm t,k,Q}$ the $\C$-span of all products \eqref{eq:products-I} with all possible data. We also denote by $I_{\bm\xi,\bm t,Q}$ the sum of $I_{\bm\xi,\bm t,k,Q}$ for $k\in\Z_{>0}$.

For $1\lle a\lle N$, define the series
\beq\label{eq:chi}
\mathscr X_{\bm\xi,Q}^a(u;\bm t)=Q_aT_{aa}(u)\frac{y_{a-1}(u+\ka_a)y_a(u-\ka_a)}{y_{a-1}(u)y_a(u)},
\eeq
which are regarded as power series in $u^{-1}$ with coefficients in $\YglMN[t_1^1,\dots,t_{\xi^{N-1}}^{N-1}]$. 

\begin{thm}\label{thm:main-tech}
Let $Q$ be a diagonal matrix. Then we have
\[
\mathscr T_{k,Q}(u)\bB_{\bm\xi}(\bm t)\simeq\bB_{\bm\xi}(\bm t)\sum_{\bm a}\prod_{r=1}^k\ka_{a_r}\mathscr X_{\bm\xi,Q}^{a_r}(u-r+1;\bm t)+\mathscr U_{\bm \xi,k,Q}(u;\bm t),
\]
where the sum is taken over all $k$-tuples $\bm a=(1\lle a_1<\cdots<a_b<m+1\lle a_{b+1}\lle \cdots\lle a_k\lle N)$ for various $0\lle b\lle k$ and $\mathscr U_{\bm \xi,k,Q}(u;\bm t)$ is in $I_{\bm\xi,\bm t,k,Q}$.
\end{thm}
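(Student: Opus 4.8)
The plan is to prove Theorem~\ref{thm:main-tech} by induction on $m$, exactly as announced in the introduction. The base case $m=0$ (the algebra $\gl_{0|n}$) is handled by transporting \cite[Theorem~5.2]{MTV06} across the isomorphism relating transfer matrices and Bethe vectors for $\YglMN$ and $\YglNM$; here the first index being $0$ means every $\ka_a=-1$, all ``anti-symmetrizers'' become symmetrizers under the $m\leftrightarrow n$ flip, and the combinatorics of the $k$-tuples $\bm a$ collapse to the ordered tuples $a_1\lle\cdots\lle a_k$ appearing in the even-case statement. So the real content is the inductive step, where one splits off the first row/column of $\glMN$ and reduces to $\gl_{m-1|n}$.

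\textbf{Key steps of the inductive step.}
First I would use the definition \eqref{eq:def-transfer} of $\mathscr T_{k,Q}(u)$ together with the fusion relation \eqref{eq:Att} and the explicit form of $T^{\wedge k}(u)$ in \eqref{eq:T-wedge-K} to expand $\mathscr T_{k,Q}(u)\bB_{\bm\xi}(\bm t)$ as a supertrace over $\scrV^{\wedge k}\otimes\scrV^{\otimes|\bm\xi|}$ of a product of $T$-operators against the $R$-matrix string defining $\widehat{\bB}_{\bm\xi}(\bm t)$ in \eqref{EQ:BAEU}. Next, one performs one step of the nested Bethe ansatz: separate the index $1$ from $\{2,\dots,N\}$, writing $T(u)$ in block form and commuting the ``creation'' operators $T_{1,?}$ past the $\gl_{m-1|n}$-part. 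Because $m\gge 1$, the index $1$ is even, so this commutation is formally identical to the bosonic step in \cite{MTV06}: the wanted terms come from the diagonal pieces $T_{11}$ and $T^{\wedge k}$ of the reduced algebra acting on the reduced Bethe vector, and the unwanted terms organize into two families --- those proportional to $\mathscr I_{\bm\xi,Q}^{1,i}(\bm t)$ (which vanish on-shell and live in $I_{\bm\xi,\bm t,k,Q}$ by construction), and those that, after applying $\theta$, are killed by Lemmas~\ref{lem:plus-proj} and~\ref{lem:div-far} (the latter supplying exactly the divisibility by $\prod(t_j^b-t_i^a)$ needed to absorb the denominators in \eqref{eq:products-I}). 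Then I would invoke the induction hypothesis for $\gl_{m-1|n}$ applied to the reduced Bethe vector $\bB_{\bm\xi'}(\bm t')$ with the shifted data, producing $\sum_{\bm a'}\prod_{r}\ka_{a_r}\mathscr X^{a_r}$ over $k$-tuples in $\{2,\dots,N\}$ modulo $I_{\bm\xi',\bm t',k,Q}$; finally, reassembling with the $T_{11}$-contribution and matching the recursion \eqref{eq:chi} for $\mathscr X_{\bm\xi,Q}^1$ against the $y_0=1$ normalization yields precisely the sum over $k$-tuples $1\lle a_1<\cdots<a_b<m+1\lle\cdots\lle a_k$ with $a_1=1$ allowed or not.

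\textbf{Main obstacle.}
The hard part will be the bookkeeping in the nested-ansatz commutation: controlling which terms in the expansion of $\mathscr T_{k,Q}(u)\bB_{\bm\xi}(\bm t)$ are genuinely in $I_{\bm\xi,\bm t,k,Q}$ versus which must be cancelled against the reduced $\mathscr U_{\bm\xi',k,Q}$ from the induction hypothesis, and verifying that all spurious poles in $u$ and in the $t_i^a$ (coming from the $R$-matrix denominators and the normalizing factors in \eqref{eq:bae-sym}) actually cancel after applying $\theta$. In particular one must check that the polynomial prefactors produced by Lemma~\ref{lem:antisym-sym-R} and Corollary~\ref{cor:fusion}, when the auxiliary space $\scrV^{\wedge k}$ is fused against the $|\bm\xi|$ quantum spaces, combine with \eqref{eq-factors} to leave only the denominators $\prod(u-t_j^a)^{-1}$, $\prod(t_j^a-t_l^a)^{-1}$, $\prod(t_l^{a+1}-t_j^a)^{-2}$ displayed in \eqref{eq:products-I} --- no higher-order poles. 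This is the step where the restriction that the induction runs on the \emph{even} index (so the first simple root is always bosonic) is essential, since it makes the relevant $R$-matrix eigenvalue $u-v+1$ rather than $u-v-1$ and keeps the term count identical to the non-super computation of \cite{MTV06}; I expect the remaining details to be a (lengthy but routine) transcription of that argument with the parities $\ka_a$ inserted.
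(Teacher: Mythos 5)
Your proposal follows essentially the same route as the paper's proof: induction on $m$ with the base case $m=0$ obtained by transporting the even-case result of \cite{MTV06} through the isomorphism $\varpi:\YglMN\to\YglNM$ (which swaps antisymmetrizer- and symmetrizer-type transfer matrices), and the inductive step carried out by the nested Bethe ansatz on the even first index, using the block decomposition of $T(u)$ and $T^{\wedge k}(u)$, the commutation relations of $\widehat A$, $\widehat D$ with strings of $B$'s, the recursion for $\bB_{\bm\xi}(\bm t)$ via $\widetilde\psi(t_1^1,\dots,t_{\xi^1}^1)$, and the collection of unwanted terms into $I_{\bm\xi,\bm t,k,Q}$. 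The bookkeeping you flag as the main obstacle is exactly what the paper resolves by importing \cite[Proposition 11.5, Lemmas 11.6--11.7]{MTV06} and Lemma \ref{lem:div-far}, so your outline is a faithful blueprint of the actual argument.
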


The theorem is proved in Section \ref{app:A}. 

Note that, due to Lemma \ref{lem:plus-proj}, the order of $\mathscr X_{\bm\xi,Q}^{a_r}(u-r+1;\bm t)$ in Theorem \ref{thm:main-tech} is irrelevant.

\begin{cor}\label{cor:main-tech}
Let $Q$ be a diagonal matrix. Then we have \[
\mc D_Q(u,\pa_u)\bB_{\bm\xi}(\bm t)\simeq\bB_{\bm\xi}(\bm t)\mathop{\overrightarrow\prod}\limits_{1\lle a\lle N}\big(1-\mathscr X_{\bm\xi,Q}^a(u;\bm t)e^{-\pa_u}\big)^{\ka_a}+\mathscr U_{\bm \xi,Q}(u;\bm t),
\]	
where $\mathscr U_{\bm \xi,Q}(u;\bm t)$ belongs to $I_{\bm\xi,\bm t,Q}$ and $\mc D_Q(u,\pa_u)$is defined in \eqref{eq:Ber-XXX}.
\end{cor}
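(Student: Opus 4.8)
The plan is to deduce Corollary~\ref{cor:main-tech} from Theorem~\ref{thm:main-tech} by assembling the individual transfer-matrix identities into the generating series $\mc D_Q(u,\pa_u)$. Recall from \eqref{eq diff trans} that $\mc D_Q(u,\pa_u)=\sum_{k\gge 0}(-1)^k\mathscr T_{k,Q}(u)e^{-k\pa_u}$, so applying $\mc D_Q(u,\pa_u)$ to $\bB_{\bm\xi}(\bm t)$ and using Theorem~\ref{thm:main-tech} term by term gives
\[
\mc D_Q(u,\pa_u)\bB_{\bm\xi}(\bm t)\simeq \bB_{\bm\xi}(\bm t)\sum_{k\gge 0}(-1)^k\Big(\sum_{\bm a}\prod_{r=1}^k\ka_{a_r}\mathscr X_{\bm\xi,Q}^{a_r}(u-r+1;\bm t)e^{-k\pa_u}\Big)+\sum_{k\gge 0}(-1)^k\mathscr U_{\bm\xi,k,Q}(u;\bm t)e^{-k\pa_u}.
\]
First I would check that the error terms combine correctly: each $\mathscr U_{\bm\xi,k,Q}(u;\bm t)$ lies in $I_{\bm\xi,\bm t,k,Q}$, hence $\sum_k(-1)^k\mathscr U_{\bm\xi,k,Q}(u;\bm t)e^{-k\pa_u}$ lies in $I_{\bm\xi,\bm t,Q}$ (the sum of the $I_{\bm\xi,\bm t,k,Q}$), which is exactly the space where we allow $\mathscr U_{\bm\xi,Q}(u;\bm t)$ to live; applying $e^{-k\pa_u}$ only shifts $u$, which preserves membership in $I_{\bm\xi,\bm t,k,Q}$ since these spaces are defined by products involving arbitrary polynomials $p(u;\bm t)$ and the relevant denominators are stable under $u\mapsto u-k$ when expanded as power series in $u^{-1}$.

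Next I would identify the main sum with the ordered product. The point is the standard combinatorial identity for noncommuting-but-here-commuting (modulo $\YMNplus$, by Lemma~\ref{lem:plus-proj}(1)) factors: expanding
\[
\mathop{\overrightarrow\prod}\limits_{1\lle a\lle N}\big(1-\mathscr X_{\bm\xi,Q}^a(u;\bm t)e^{-\pa_u}\big)^{\ka_a}
\]
means that for each even index $a\lle m$ the factor $1-\mathscr X^a e^{-\pa_u}$ contributes either $1$ or a single $-\mathscr X^a e^{-\pa_u}$, while for each odd index $a>m$ the factor $(1-\mathscr X^a e^{-\pa_u})^{-1}=\sum_{p\gge 0}(\mathscr X^a e^{-\pa_u})^p$ contributes any number of copies of $\mathscr X^a e^{-\pa_u}$. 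Collecting the coefficient of $e^{-k\pa_u}$ and keeping track of the shifts produced by moving each $e^{-\pa_u}$ past the subsequent $\mathscr X$'s, one gets precisely $(-1)^k\sum_{\bm a}\prod_{r=1}^k\ka_{a_r}\mathscr X_{\bm\xi,Q}^{a_r}(u-r+1;\bm t)$, where $\bm a$ runs over the $k$-tuples with $a_1<\dots<a_b\lle m<a_{b+1}\lle\dots\lle a_k$ — the strict inequalities among the first $b$ indices reflect the at-most-one choice for even factors, the weak inequalities among the last $k-b$ reflect arbitrary multiplicity for odd factors, and the sign $(-1)^k$ factors through $\prod\ka_{a_r}$ up to the $(-1)^k$ already present in $\mc D_Q$. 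I would verify the bookkeeping of the argument shifts carefully: $e^{-\pa_u}\mathscr X^b(u;\bm t)=\mathscr X^b(u-1;\bm t)e^{-\pa_u}$, so in a product of $k$ such terms read left to right the $r$-th surviving factor is evaluated at $u-r+1$, matching the theorem.

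The main obstacle is really just this last bookkeeping step — making sure the ordering conventions in $\mathop{\overrightarrow\prod}$, the placement of the $\ka_a$ exponents, and the $(-1)^k$ signs all line up with the precise form of the sum over $\bm a$ in Theorem~\ref{thm:main-tech}, and that the identity holds as an identity of rational difference operators (i.e.\ as formal series in $e^{-\pa_u}$ with coefficients power series in $u^{-1}$) rather than just formally. Since all the $\mathscr X^a$ commute modulo $\YMNplus$ by Lemma~\ref{lem:plus-proj}(1)–(2) — which is exactly the remark made after Theorem~\ref{thm:main-tech} that the order of the factors is irrelevant — there is no genuine noncommutativity to worry about, and the identification reduces to a routine manipulation of commuting generating series. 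I would conclude by noting that $\bB_{\bm\xi}(\bm t)\cdot(\text{ordered product})$ and $\bB_{\bm\xi}(\bm t)\cdot\mathscr X$ both make sense in the appropriate completed space $\YglMN[t_1^1,\dots,t_{\xi^{N-1}}^{N-1}][[(t^a_i)^{-1}]][[u^{-1},e^{-\pa_u}]]$, so the equality $\simeq$ is an equality after applying $\theta$, with the discrepancy absorbed into $\mathscr U_{\bm\xi,Q}(u;\bm t)\in I_{\bm\xi,\bm t,Q}$.
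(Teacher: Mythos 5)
Your proposal is correct and matches the paper's proof, which is exactly the one-line "direct computation using \eqref{eq diff trans} and Theorem \ref{thm:main-tech}" that you carry out in detail: expanding $\prod_a(1-\mathscr X^a e^{-\pa_u})^{\ka_a}$, matching the strict/weak inequalities in $\bm a$ to the even/odd factors, and checking that $(-1)^k\prod_r\ka_{a_r}$ and the shifts $u-r+1$ line up. The sign and shift bookkeeping you describe is exactly right, so nothing further is needed.
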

\begin{proof}
This follows from direct computations using \eqref{eq diff trans} and Theorem \ref{thm:main-tech}.	
\end{proof}

\subsection{Main results for XXX spin chains}
In this section, we shall obtain eigenvectors and eigenvalues of transfer matrices when the underlying Hilbert spaces are tensor products of evaluation modules of the super Yangian $\YglMN$, proving \cite[Conjecture 5.15]{LM21b}. For more general highest weight representations of $\YglMN$, \cite[Conjecture 5.15]{LM21b} is proved similarly for generic situation.

Let $\ell$ be a positive integer. Note that $\ell$ here has nothing to do with $\ell$ in $\ell$-weight. Let $M_1,\dots,M_{\ell}$ be $\glMN$-modules, $\bm z=(z_1,\dots,z_{\ell})$ a sequence of complex numbers. Consider the tensor product of evaluation $\YglMN$-modules,
$$
M(\bm z):= M_1(z_1)\otimes \cdots \otimes M_\ell(z_\ell).
$$
Then, by \eqref{eq:shift} and \eqref{eq:eval}, the operator
\beq\label{eq:transfer-XXX}
\mathscr T_{k,Q}^{M}(u;\bm z)=\mathscr T_{k,Q}(u)\big|_{M(\bm z)}
\eeq
is a rational function in $u,\bm z$ with denominators $\prod_{i=1}^\ell\prod_{j=0}^{k-1}(u-j-z_i)$. Note that
\[
\mathscr T_{k,Q}^{M}(u;\bm z)=\str_{\scrV^{\wedge k}}Q^{\wedge k}+O(u^{-1}),\quad u\to\infty.
\]
We call the operators $\mathscr T_{k,Q}^{M}(u;\bm z)$, $k\in\Z_{>0}$, the \emph{transfer matrices of the XXX spin chain} on $M(\bm z)$ associated with $\glMN$.

We are interested in the case when $M_1,\dots, M_\ell$ are highest weight $\glMN$-modules with highest weights $\La_1,\dots,\La_\ell$, where $\La_i=(\La_i^1,\dots,\La_i^N)$, and highest weight vectors $v_1,\dots,v_\ell$, respectively.
By convention, we set $\bLa=(\La_1,\dots,\La_\ell)$.

In this case, the vector $ v^+=v_1\otimes\cdots\otimes v_{\ell}$ is a singular $\ell$-weight vector of $\YglMN$ in $M(\bm z)$ whose $\ell$-weight is given as follows,
\beq\label{eq:high-ell-wt}
T_{aa}(u)v^+=v^+\ \prod_{i=1}^\ell\frac{u-z_i+\ka_a\La_i^a}{u-z_i},\quad 1\lle a\lle N.
\eeq

Recall $\bB_{\bm\xi}(\bm t)$ and the notations from Section \ref{sec:bv}. Apply $\bB_{\bm\xi}(\bm t)$ to $v^+$ and renormalize it so that the function
\beq\label{eq:off-shell-bv}
\bB_{\bm \xi}^{v^+}(\bm t;\bm z)=\bB_{\bm \xi}(\bm t)v^+\ \prod_{a=1}^{N-1}\prod_{i=1}^{\xi^a}\prod_{j=1}^{\ell}(t_i^a-z_j)\prod_{a=1}^{N-2}\prod_{i=1}^{\xi^a}\prod_{j=1}^{\xi^{a+1}}(t_j^{a+1}-t_i^a)
\eeq
is a polynomial in $\bm t$, $\bm z$, see Lemma \ref{lem:div-far} and \eqref{eq:summand}. We call $\bB_{\bm \xi}^{v^+}(\bm t;\bm z)$ the \emph{off-shell Bethe vector} for the XXX spin chain on $M(\bm z)$ associated with $\glMN$.

Recall that $Q=\sum_{a=1}^NQ_aE_{aa}$ is diagonal and let $\bm y$ be the sequence of polynomials associated to $\bm t$ and $\bm \xi$. Consider the system of algebraic equations
\beq\label{eq:bae}
\begin{split}
&-\ka_aQ_a\prod_{j=1}^\ell(t_i^a-z_j+\ka_a\La_j^a)y_{a-1}(t_i^a+\ka_a)y_{a}(t_i^a-\ka_a)y_{a+1}(t_i^a)\\
=&\ \ka_{a+1}Q_{a+1}\prod_{j=1}^\ell(t_i^a-z_j+\ka_{a+1}\La_j^{a+1})y_{a-1}(t_i^a)y_{a}(t_i^a+\ka_{a+1})y_{a+1}(t_i^a-\ka_{a+1}),
\end{split}
\eeq
where $1\lle a<N$, $1\lle i\lle \xi^a$, $y_0=y_N=1$. We call \eqref{eq:bae} the \emph{Bethe ansatz equation}. We say that a solution $\tl{\bm t}=(\tl{t}_1^1,\dots,\tl{t}_{\xi^{N-1}}^{N-1})$ of the Bethe ansatz equation \eqref{eq:bae} is \emph{off-diagonal} if $\tl t_i^{a}\ne \tl t_j^{a}$ for any $1\lle a<N$, $1\lle i<j\lle \xi^a$ and $\tl t_i^{a}\ne \tl t_j^{a+1}$ for any $1\lle a<N-1$, $1\lle i\lle \xi^a$, $1\lle j\lle \xi^{a+1}$ (aslo \eqref{eq:assm}).

When $\tl{\bm t}$ is an off-diagonal solution of the Bethe ansatz equation \eqref{eq:bae}, we say that the vector $\bB_{\bm \xi}^{v^+}(\tl{\bm t};\bm z)$ is an \emph{on-shell Bethe vector}.

For $1\lle a\lle N$, define 
\beq\label{eq:chi-special}
\mathscr X_{\bm\xi,Q}^a(u;\bm t;\bm z;\bLa)=Q_a\frac{y_{a-1}(u+\ka_a)y_a(u-\ka_a)}{y_{a-1}(u)y_a(u)}\prod_{i=1}^\ell\frac{u-z_i+\ka_a\La_i^a}{u-z_i},
\eeq
where $\bm y=(y_1,\dots,y_{N-1})$ is the sequence of polynomials associated to $\bm t$ and $\bm \xi$.

\begin{thm}\label{thm:main-tech-xxx}
Let $Q$ be a diagonal matrix. If $M_1,\dots, M_\ell$ are highest weight $\glMN$-modules with highest weights $\La_1,\dots,\La_\ell$ and $\tl{\bm t}$ is an off-diagonal solution of the Bethe ansatz equation \eqref{eq:bae}, then we have
\[
\mathscr T_{k,Q}(u)\bB_{\bm\xi}^{v^+}(\tl{\bm t};\bm z)=\bB_{\bm\xi}^{v^+}(\tl{\bm t};\bm z)\sum_{\bm a}\prod_{r=1}^k\ka_{a_r}\mathscr X_{\bm\xi,Q}^{a_r}(u-r+1;\tl{\bm t};\bm z;\bLa),
\]
where the sum is taken over all $k$-tuples $\bm a=(1\lle a_1<\cdots<a_b<m+1\lle a_{b+1}\lle \cdots\lle a_k\lle N)$ for various $0\lle b\lle k$. 
\end{thm}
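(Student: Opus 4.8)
The plan is to derive the statement from the universal identity of Theorem~\ref{thm:main-tech} by applying it to the highest weight vector $v^+=v_1\otimes\cdots\otimes v_\ell\in M(\bm z)$ and then specializing $\bm t$. Since $v^+$ is a singular $\ell$-weight vector of $\YglMN$ we have $\YMNplus v^+=0$, so the congruence $\simeq$ of Theorem~\ref{thm:main-tech} turns into a genuine equality upon acting on $v^+$:
\[
\mathscr T_{k,Q}(u)\,\bB_{\bm\xi}(\bm t)\,v^+=\bB_{\bm\xi}(\bm t)\Big(\sum_{\bm a}\prod_{r=1}^k\ka_{a_r}\mathscr X^{a_r}_{\bm\xi,Q}(u-r+1;\bm t)\Big)v^+ +\mathscr U_{\bm\xi,k,Q}(u;\bm t)\,v^+ .
\]
The idea is to rewrite this in terms of the normalized Bethe vector $\bB_{\bm\xi}^{v^+}(\bm t;\bm z)$, to prove that the last term vanishes once $\bm t$ is specialized to an off-diagonal Bethe root $\tl{\bm t}$, and then to specialize.

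First I would evaluate the middle term. By~\eqref{eq:high-ell-wt} the vector $v^+$ is a simultaneous eigenvector of all $T_{aa}(u)$, so $\prod_{r=1}^k T_{a_r a_r}(u-r+1)$ acts on $v^+$ by the product of the corresponding scalar eigenvalues; comparing~\eqref{eq:chi} with~\eqref{eq:chi-special} this yields
\[
\Big(\sum_{\bm a}\prod_{r=1}^k\ka_{a_r}\mathscr X^{a_r}_{\bm\xi,Q}(u-r+1;\bm t)\Big)v^+=\Big(\sum_{\bm a}\prod_{r=1}^k\ka_{a_r}\mathscr X^{a_r}_{\bm\xi,Q}(u-r+1;\bm t;\bm z;\bLa)\Big)v^+ ,
\]
the order of factors being irrelevant because $v^+$ is a common eigenvector. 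The bracket is now a scalar rational function of $u$ and $\bm t$, hence commutes with $\bB_{\bm\xi}(\bm t)$. Multiplying the identity by the polynomial $\prod_{a=1}^{N-1}\prod_{i=1}^{\xi^a}\prod_{j=1}^{\ell}(t_i^a-z_j)\prod_{a=1}^{N-2}\prod_{i=1}^{\xi^a}\prod_{j=1}^{\xi^{a+1}}(t_j^{a+1}-t_i^a)$ of~\eqref{eq:off-shell-bv}, which commutes with $\mathscr T_{k,Q}(u)$ and turns $\bB_{\bm\xi}(\bm t)v^+$ into $\bB_{\bm\xi}^{v^+}(\bm t;\bm z)$, we get
\[
\mathscr T_{k,Q}(u)\,\bB_{\bm\xi}^{v^+}(\bm t;\bm z)=\bB_{\bm\xi}^{v^+}(\bm t;\bm z)\sum_{\bm a}\prod_{r=1}^k\ka_{a_r}\mathscr X^{a_r}_{\bm\xi,Q}(u-r+1;\bm t;\bm z;\bLa)+\mathscr R(u;\bm t;\bm z),
\]
where $\mathscr R(u;\bm t;\bm z)$ denotes that normalization polynomial times $\mathscr U_{\bm\xi,k,Q}(u;\bm t)\,v^+$.

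It then remains to prove that $\mathscr R(u;\tl{\bm t};\bm z)=0$ for an off-diagonal solution $\tl{\bm t}$ of~\eqref{eq:bae}; setting $\bm t=\tl{\bm t}$ (the left-hand side and the explicit eigenvalue term being regular there for generic $u$) finishes the proof. By construction $\mathscr U_{\bm\xi,k,Q}(u;\bm t)$ is a $\C$-linear combination of products of the form~\eqref{eq:products-I}, and in each such product $\mathscr I^{c,i}_{\bm\xi,Q}(\bm t)$ stands to the right of all the operators $T_{ab}$, hence hits $v^+$ first. Using~\eqref{eq:high-ell-wt} in~\eqref{eq:I} one computes $\mathscr I^{c,i}_{\bm\xi,Q}(\bm t)\,v^+=\big(\prod_{j=1}^\ell(t_i^c-z_j)\big)^{-1}h_{c,i}(\bm t;\bm z)\,v^+$, where the polynomial $h_{c,i}(\bm t;\bm z)$ is such that $h_{c,i}(\bm t;\bm z)=0$ is precisely the Bethe ansatz equation~\eqref{eq:bae} for the pair $(a,i)=(c,i)$; thus $h_{c,i}(\tl{\bm t};\bm z)=0$. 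Substituting this, and cancelling the denominator $\prod_j(t_i^c-z_j)$ against the matching factor of the normalization polynomial, each summand of $\mathscr R(u;\bm t;\bm z)$ becomes $h_{c,i}(\bm t;\bm z)$ times a vector depending rationally on $u,\bm t,\bm z$, whose only $\bm t$-denominators are the factors $u-t_j^a$, $t_j^a-t_l^a$, $t_l^{a+1}-t_j^a$ occurring in~\eqref{eq:products-I}; here the cancellation of \emph{every} $t_i^a-z_j$ pole, including those produced by the operators $T_{ab}(t_i^a)$ acting on the evaluation module $M(\bm z)$, relies on the fact that each variable $t_i^a$ with $(a,i)\neq(c,i)$ appears exactly once among the arguments $s_1,\dots,s_{|\bm\xi|+k-1}$. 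At an off-diagonal Bethe root $\tl{\bm t}$ (which also obeys~\eqref{eq:assm}, ensuring $\bB_{\bm\xi}^{v^+}(\tl{\bm t};\bm z)$ is a genuine on-shell Bethe vector) the denominators $t_j^a-t_l^a$ and $t_l^{a+1}-t_j^a$ are nonzero, and $u-\tl t_j^a\neq0$ for generic $u$; hence each summand is regular at $\bm t=\tl{\bm t}$ and vanishes there, so $\mathscr R(u;\tl{\bm t};\bm z)=0$ for generic $u$, and therefore identically in $u$.

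The main obstacle is this last step: one has to keep careful track of how the normalization polynomial of $\bB_{\bm\xi}^{v^+}$ cancels the $t_i^a-z_j$ denominators built into the space $I_{\bm\xi,\bm t,k,Q}$ and into the action of $T_{ab}(t_i^a)$ on $M(\bm z)$, so that after the cancellation the surviving rational factors are genuinely regular at an off-diagonal Bethe root and the vanishing $h_{c,i}(\tl{\bm t};\bm z)=0$ is not masked by a pole; everything else is a direct substitution. The same argument works, for generic spectral parameters, when $M_1,\dots,M_\ell$ are arbitrary highest weight $\glMN$-modules, which proves \cite[Conjecture 5.15]{LM21b} in that generality.
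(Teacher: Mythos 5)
Your proposal is correct and takes essentially the same route as the paper: Theorem \ref{thm:main-tech-xxx} is obtained from Theorem \ref{thm:main-tech} by applying the off-shell identity to the singular $\ell$-weight vector $v^+$ (so that $\simeq$ becomes equality), evaluating the $T_{aa}$-eigenvalues via \eqref{eq:high-ell-wt} to pass from \eqref{eq:chi} to \eqref{eq:chi-special}, and killing the unwanted terms because each product in $I_{\bm\xi,\bm t,k,Q}$ carries the factor $\mathscr I^{c,i}_{\bm\xi,Q}(\bm t)v^+$, which is the Bethe ansatz equation \eqref{eq:bae} up to the denominator cleared by the normalization \eqref{eq:off-shell-bv}. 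Your bookkeeping of the pole cancellations at an off-diagonal solution is exactly the point that makes the substitution $\bm t=\tl{\bm t}$ legitimate, so no gap remains.
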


The proof of the theorem is given in Section \ref{app:A}. The statement was shown in \cite[Theorem 5.4]{MTV06} for the general even case and conjectured in \cite[Conjecture 5.15]{LM21b}. The case of $m=n=1$ was previously shown in \cite[Theorem 6.1]{LM21a}. It can be thought as the supersymmetric version of \cite[Theorem 5.11]{FH:2015} and \cite[Theorem 7.5]{FJMM17} for type A. When $k=1$, the statement was obtained in \cite{BR08}.

\begin{cor}\label{cor:main-tech-xxx}
Let $Q$ be a diagonal matrix. If $M_1,\dots, M_\ell$ are highest weight $\glMN$-modules with highest weights $\La_1,\dots,\La_\ell$ and $\tl{\bm t}$ is an off-diagonal solution of the Bethe ansatz equation \eqref{eq:bae}, then we have
\[
\mc D_Q(u,\pa_u)\bB_{\bm\xi}^{v^+}(\tl{\bm t};\bm z)=\bB_{\bm\xi}^{v^+}(\tl{\bm t};\bm z)\mathop{\overrightarrow\prod}\limits_{1\lle a\lle N}\big(1-\mathscr X_{\bm\xi,Q}^a(u;\tl{\bm t};\bm z;\bLa)e^{-\pa_u}\big)^{\ka_a},
\]
where $\mc D_Q(u,\pa_u)$is defined in \eqref{eq:Ber-XXX}.	
\end{cor}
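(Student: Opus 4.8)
The plan is to read off the corollary from Theorem~\ref{thm:main-tech-xxx} together with the expansion \eqref{eq diff trans} of $\mc D_Q(u,\pa_u)=\mathrm{Ber}(1-Z_Q(u,\pa_u))$ into transfer matrices. Write $\bB=\bB_{\bm\xi}^{v^+}(\tl{\bm t};\bm z)$; this is a vector not depending on $u$. By \eqref{eq diff trans}, $\mc D_Q(u,\pa_u)\,\bB=\sum_{k\gge 0}(-1)^k\big(\mathscr T_{k,Q}(u)\,\bB\big)e^{-k\pa_u}$. Since $\tl{\bm t}$ is an off-diagonal solution of \eqref{eq:bae}, Theorem~\ref{thm:main-tech-xxx} says each $\mathscr T_{k,Q}(u)\,\bB$ equals $\bB$ times the scalar rational function
\[
c_k(u)=\sum_{\bm a}\ \prod_{r=1}^k\ka_{a_r}\,\mathscr X_{\bm\xi,Q}^{a_r}(u-r+1;\tl{\bm t};\bm z;\bLa),
\]
the sum being over the $k$-tuples $\bm a=(1\lle a_1<\cdots<a_b<m+1\lle a_{b+1}\lle\cdots\lle a_k\lle N)$, $0\lle b\lle k$. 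Thus $\mc D_Q(u,\pa_u)\,\bB=\bB\cdot\mc L(u,\pa_u)$, where $\mc L(u,\pa_u)=\sum_{k\gge 0}(-1)^k c_k(u)\,e^{-k\pa_u}$ is a difference operator with scalar coefficients.

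It then suffices to prove the formal identity $\mc L(u,\pa_u)=\mathop{\overrightarrow\prod}\limits_{1\lle a\lle N}\big(1-\mathscr X_{\bm\xi,Q}^a(u;\tl{\bm t};\bm z;\bLa)e^{-\pa_u}\big)^{\ka_a}$ of scalar rational difference operators. Expand the ordered product on the right: each of the $m$ factors with $\ka_a=1$ contributes $1$ or $-\mathscr X_{\bm\xi,Q}^a e^{-\pa_u}$, and each of the $n$ factors with $\ka_a=-1$ is expanded as the formal geometric series $\sum_{j\gge 0}(\mathscr X_{\bm\xi,Q}^a e^{-\pa_u})^j$. A term contributing $e^{-k\pa_u}$ consists of a choice of a subset $a_1<\cdots<a_b$ of $\{1,\dots,m\}$ together with a weakly increasing tuple $a_{b+1}\lle\cdots\lle a_k$ of elements of $\{m+1,\dots,N\}$ (recording the odd factors used and their multiplicities); its sign is $(-1)^b$, and pushing all shift operators to the right via $e^{-\pa_u}g(u)=g(u-1)e^{-\pa_u}$ turns it into $(-1)^b\prod_{r=1}^k\mathscr X_{\bm\xi,Q}^{a_r}(u-r+1;\tl{\bm t};\bm z;\bLa)\,e^{-k\pa_u}$. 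Summing over all choices, the coefficient of $e^{-k\pa_u}$ in the product is $\sum_{\bm a}(-1)^b\prod_{r=1}^k\mathscr X_{\bm\xi,Q}^{a_r}(u-r+1;\tl{\bm t};\bm z;\bLa)$; since $\prod_{r=1}^k\ka_{a_r}=(-1)^{k-b}$ this equals $(-1)^k c_k(u)$, i.e.\ exactly the coefficient of $e^{-k\pa_u}$ in $\mc L(u,\pa_u)$. The identity follows, and with it the corollary.

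There is no genuine obstacle: the statement is a formal consequence of Theorem~\ref{thm:main-tech-xxx}, and the only points requiring care are the order of the noncommuting $e^{-\pa_u}$ relative to its coefficients (handled by the intertwining relation above) and the three sign sources $(-1)^k$ from \eqref{eq diff trans}, $(-1)^{k-b}=\prod_r\ka_{a_r}$ from the theorem, and $(-1)^b$ from the even factors of the ordered product, whose product is $1$. (The scalar coefficients $\mathscr X_{\bm\xi,Q}^{a}$ commute, so their order within a product is immaterial, cf.\ the remark after Theorem~\ref{thm:main-tech}.) As an alternative, one may derive the corollary from Corollary~\ref{cor:main-tech} by applying both sides to $v^+$: the ``$\simeq$'' becomes an equality because $\YMNplus v^+=0$, the substitution $T_{aa}(u)v^+=\prod_{i=1}^\ell\frac{u-z_i+\ka_a\La_i^a}{u-z_i}\,v^+$ converts $\mathscr X_{\bm\xi,Q}^a(u;\tl{\bm t})$ into $\mathscr X_{\bm\xi,Q}^a(u;\tl{\bm t};\bm z;\bLa)$ of \eqref{eq:chi-special}, and the error term $\mathscr U_{\bm\xi,Q}(u;\tl{\bm t})$ annihilates $v^+$ since \eqref{eq:bae} is precisely the condition $\mathscr I_{\bm\xi,Q}^{a,i}(\tl{\bm t})\,v^+=0$ for all admissible $a,i$.
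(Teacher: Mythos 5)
Your proposal is correct and follows essentially the route the paper intends: Corollary \ref{cor:main-tech-xxx} is stated as an immediate consequence of Theorem \ref{thm:main-tech-xxx} via the expansion \eqref{eq diff trans}, exactly as Corollary \ref{cor:main-tech} is derived from Theorem \ref{thm:main-tech}. Your explicit verification of the generating-function identity (matching the signed sum over tuples $\bm a$ with the coefficient of $e^{-k\pa_u}$ in the ordered product, after commuting the shift operators to the right) is precisely the ``direct computation'' the paper leaves implicit, and your sign bookkeeping is right.
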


Note that the rational difference operator 
\beq\label{eq:rational-diff-oper}
\mc D_Q(u,\pa_u;\tl{\bm t};\bm z;\bLa):=\mathop{\overrightarrow\prod}\limits_{1\lle a\lle N}\big(1-\mathscr X_{\bm\xi,Q}^a(u;\tl{\bm t};\bm z;\bLa)e^{-\pa_u}\big)^{\ka_a}
\eeq
was introduced in \cite[Equation (5.6)]{HLM19}, cf. \cite[Equations (2.13)]{Tsu97}.

\begin{prop}\label{prop:singular}
Let $Q$ be the identity matrix. If $M_1,\dots, M_\ell$ are highest weight $\glMN$-modules with highest weights $\La_1,\dots,\La_\ell$ and $\tl{\bm t}$ is an off-diagonal solution of the Bethe ansatz equation \eqref{eq:bae}, then the on-shell Bethe vector $\bB_{\bm\xi}^{v^+}(\tl{\bm t};\bm z)$ is a $\glMN$-singular vector in $M_1\otimes\cdots\otimes M_\ell$ with weight
$$
\Big(\sum_{i=1}^\ell\La_i^1-\xi^1,\sum_{i=1}^\ell\La_i^1+\xi^1-\xi^2,\dots,\sum_{i=1}^\ell\La_i^{N}+\xi^{N-1}\Big).
$$
\end{prop}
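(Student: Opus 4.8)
I will establish the weight and the singularity separately, the former by a direct weight count and the latter by the super analogue of the algebraic Bethe ansatz cancellation of unwanted terms. The weight does not use the Bethe ansatz equation: by the zero mode relations \eqref{zero mode com relations} the adjoint action of $e_{aa}$ multiplies a product $T_{a_1,b_1}(s_1)\cdots T_{a_p,b_p}(s_p)$ by $\sum_{r=1}^{p}(\delta_{a,a_r}-\delta_{a,b_r})$, so every term of the monomial expansion \eqref{eq:summand} of $\bB_{\bm\xi}(\bm t)$ carries one and the same ``index weight''; tracking the auxiliary–space weight through the trace formula \eqref{EQ:BAEU}---the R-matrices $R(u)=u+P$ conserve weight in the auxiliary factors, and the only data fed into the supertrace are the factors $E_{21}^{\otimes\xi^1}\otimes\cdots\otimes E_{N,N-1}^{\otimes\xi^{N-1}}$---one finds this index weight to be $-\sum_{a=1}^{N-1}\xi^a(\epsilon_a-\epsilon_{a+1})$, with $\epsilon_1,\dots,\epsilon_N$ the standard weights of $\glMN$. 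Adding the weight $\sum_{i=1}^{\ell}\La_i$ of $v^+$ gives precisely the weight displayed in the statement (cf.\ \cite[Section 3]{PRS17}).

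For the singularity, since $\n_+$ is generated by the simple root vectors $e_{a,a+1}$ and $e_{ab}=[e_{a,a+1},e_{a+1,b}]$ for $a<b$, it suffices to show $e_{a,a+1}\bB_{\bm\xi}^{v^+}(\tl{\bm t};\bm z)=0$ for each $1\lle a<N$. Under the embedding $\UglMN\hookrightarrow\YglMN$ one has $e_{a,a+1}=(-1)^{|a|}T_{a+1,a}^{\{1\}}$, i.e.\ $e_{a,a+1}$ is, up to a sign, the coefficient of $u^{-1}$ in $T_{a+1,a}(u)$. I would therefore expand $T_{a+1,a}(u)\,\widehat{\bB}_{\bm\xi}(\bm t)\,v^+$ for large $u$: commute the entry $T_{a+1,a}(u)$ rightward through the string of monodromy entries in \eqref{EQ:BAEU} by the relations \eqref{com relations}, and let it act on $v^+$, on which $T_{a+1,a}(u)v^+=0$ and $T_{cc}(u)v^+=\la_c(u)v^+$ with $\la_c(u)$ as in \eqref{eq:high-ell-wt}. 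All surviving contributions come from the ``delta'' terms on the right-hand side of \eqref{com relations}; after taking the supertrace over the auxiliary spaces and reorganizing the leftover R-matrix product by the Yang--Baxter equation and Lemma \ref{lem:anti-R-relation}, these collapse into a sum, over $a$ and over $i\in\{1,\dots,\xi^a\}$, of reduced Bethe vectors (with $\xi^a$ replaced by $\xi^a-1$ and the variable $t_i^a$ removed) each multiplied by a scalar. The essential point---the super analogue of the classical cancellation of unwanted terms---is that, up to an explicit factor which is regular and nonvanishing for off-diagonal $\tl{\bm t}$, this scalar equals $\mathscr I_{\bm\xi,Q}^{a,i}(\bm t)\,v^+$ (with $Q=\mathrm{id}$), and a short computation with \eqref{eq:high-ell-wt} shows $\mathscr I_{\bm\xi,Q}^{a,i}(\bm t)\,v^+$ to be a nonzero multiple of the difference of the two sides of the Bethe ansatz equation \eqref{eq:bae} at $t_i^a$, hence it vanishes for $\tl{\bm t}$ on shell. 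Thus $e_{a,a+1}\bB_{\bm\xi}^{v^+}(\tl{\bm t};\bm z)=0$, and the singularity follows.

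Since $e_{a,a+1}$ couples only levels $a-1,a,a+1$ of the nested construction, the cleanest implementation feeds the recursion for $\bB_{\bm\xi}(\bm t)$ (Proposition \ref{prop:BV-rec-XXX}) into this computation, reducing the commutation analysis to a single step at level $a$: for $a\ne m$ this is the bosonic $\gl_2$ residue identity, and for $a=m$ its $\gl_{1|1}$ counterpart, where the assumption \eqref{eq:assm} and the relation $T_{m+1,m}(u)T_{m+1,m}(u-1)=0$ enter. I expect the main obstacle to be exactly this bookkeeping: one must carry the rational prefactors of \eqref{eq:bae-sym} and \eqref{eq:off-shell-bv} through the commutation so that the coefficient of each reduced Bethe vector is identified \emph{on the nose} with the Bethe ansatz expression, rather than up to an opaque factor that could conceal zeros or poles in $\tl{\bm t}$ and $\bm z$; once this identification is made, the vanishing is immediate. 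The hypothesis $Q=\mathrm{id}$ is indispensable here: the coefficient produced by the computation is $Q$-independent and equals the $Q=\mathrm{id}$ value of $\mathscr I_{\bm\xi,Q}^{a,i}$, so for a general diagonal twist the corresponding on-shell vector need not be $\glMN$-singular---consistently with the fact that the transfer matrices commute with $\UglMN\subset\YglMN$ precisely when $Q$ is the identity.
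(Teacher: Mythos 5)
Your plan is sound and would yield a correct proof, but it is organized differently from the paper's. The paper (following \cite[Proposition 6.2]{MTV06}) treats explicitly only the first-row raising operators: with $C(u)=\sum_{a\gge 2}E_{a1}\otimes T_{a1}(u)$ and $C_1$ its first Fourier coefficient (so that $C_1$ packages $e_{12},\dots,e_{1N}$), the single commutator $C_1B(u)-B(u)C_1=A(u)-D(u)$ is pushed through the outer string $B(t_1^1)\cdots B(t^1_{\xi^1})$ of the recursion in Proposition \ref{prop:BV-rec-XXX}; the resulting $A-D$ terms are killed by the first block of the Bethe equations, and all remaining positive root vectors are handled by the induction hypothesis for $\gl_{m-1|n}$ applied to $\bB^{\langle\cN-1\rangle}_{\bar{\bm\xi}}(\bar{\bm t})$ through $\widetilde\psi$. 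Your route instead attacks each simple root $e_{a,a+1}=(-1)^{|a|}T^{\{1\}}_{a+1,a}$ directly by the unwanted-terms cancellation; this works and correctly identifies the obstruction with $\mathscr I^{a,i}_{\bm\xi,\mathrm{id}}(\bm t)v^+$, i.e.\ with the difference of the two sides of \eqref{eq:bae}, and your remark on why $Q=\mathrm{id}$ is essential is exactly right. The trade-off is bookkeeping: your claim that the analysis localizes to ``a single step at level $a$'' is optimistic, since by \eqref{zero mode com relations} one has $[T^{\{1\}}_{a+1,a},T_{k,a+1}(u)]\propto T_{k,a}(u)$ for every level $k\lle a$, so $e_{a,a+1}$ interacts with creation operators at all outer levels, not just level $a$; this is precisely the complication the paper's induction-plus-$C_1$ formulation sidesteps, at the cost of proving singularity for all of $\n_+$ rather than root by root.
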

The proof of the proposition is given in Section \ref{app:A}.

\section{Proof of main results}\label{sec:more}
We start with preparing a few statements which will be used in the proof.

\subsection{Recursion for the Bethe vectors}\label{sec:bv-recur}

Since we shall use nested algebraic Bethe ansatz, many notations will be used for both $\gl_{m|n}$ and $\gl_{m-1|n}$. To simplify the notation, we use $\cN$ and $\cN-1$ to distinguish notations for $\gl_{m|n}$ and $\gl_{m-1|n}$, respectively. We also use $\langle \cN\rangle$ and $\langle \cN-1\rangle$.

Set $\mathscr W=\C^{\cN-1}=\C^{m-1|n}$. Let $\bfw_1,\dots,\bfw_{N-1}$ be the standard basis of $\sW$ and $\bfv_1,\dots,\bfv_{N}$ of $\scrV=\C^{\cN}=\C^{m|n}$. Identify $\sW$ with the subspace of $\sV$ via the embedding $\bfw_a\mapsto \bfv_{a+1}$, $1\lle a< N$.

Let $P^{\langle\cN-1\rangle}\in \End(\sW^{\otimes 2})$ be the graded flip operator and $R^{\langle\cN-1\rangle}(u)=u+P^{\langle\cN-1\rangle}$ be the rational R-matrix used to define the super Yangian $\mathrm{Y}(\gl_{\cN-1})$. The R-matrix $R(u)$ preserves the subspace $\sW^{\otimes 2}\subset \sV^{\otimes 2}$ and the restriction of $R(u)$ on $\sW^{\otimes 2}$ coincides with $R^{\langle\cN-1\rangle}(u)$. Recall that $\sW(x)$ is the evaluation $\mathrm{Y}(\gl_{\cN-1})$-module with the corresponding homomorphism $\pi(x):\mathrm{Y}(\gl_{\cN-1})\to \End(\sW)$,
\beq\label{eq:def:eva-2}
\pi(x): T^{\langle\cN-1\rangle}(u)\mapsto (u-x)^{-1}R^{\langle\cN-1\rangle}(u-x).
\eeq

Define the embedding $\psi:\mathrm{Y}(\gl_{\cN-1})\hookrightarrow\mathrm{Y}(\gl_{\cN})$ by the rule $\psi(T_{ab}^{\langle\cN-1\rangle}(u))=T_{a+1,b+1}(u)$, $1\lle a,b\lle N-1$. Note that $\psi(\mathrm{Y}_+(\gl_{\cN-1}))\subset \mathrm{Y}_+(\gl_{\cN})$. 

Define a map $\psi(x_1,\dots,x_r):\mathrm{Y}(\gl_{\cN-1})\to \mathrm Y(\gl_{\cN})\otimes \End(\sW^{\otimes r})$ by
\beq\label{eq:def-psi}
\psi(x_1,\dots,x_r)=(\psi\otimes\pi(x_r)\otimes\cdots\otimes\pi(x_1))\circ(\widetilde\Delta^{\langle\cN-1\rangle})^{(r)},
\eeq
where $(\widetilde\Delta^{\langle\cN-1\rangle})^{(r)}:\mathrm Y(\gl_{\cN})\to \mathrm Y(\gl_{\cN})^{\otimes(r+1)}$ is the multiple opposite coproduct. Note that here we use the opposite coproduct $\widetilde\Delta$ which is consistent with that in \cite{BR08}.

Define a map $\widetilde \psi:\mathrm{Y}(\gl_{\cN-1})\to \mathrm Y(\gl_{\cN})\otimes\sW^{\otimes r}$ by
\[
\widetilde \psi(x_1,\dots,x_r)= \psi(x_1,\dots,x_r)(1\otimes\bfw_1^{\otimes r}).
\]
The following lemmas are straightforward.
\begin{lem}\label{lem11.1}
	We have $\widetilde \psi(x_1,\dots,x_r)(\mathrm{Y}_+(\gl_{\cN-1}))\subset \mathrm Y_+(\gl_{\cN})\otimes\sW^{\otimes r} $.
\end{lem}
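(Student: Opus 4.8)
The plan is to reduce the statement to the generators of the left ideal $\mathrm{Y}_+(\gl_{\cN-1})$ and then carry out a direct computation with the iterated opposite coproduct. First I would note that, being a composition of superalgebra homomorphisms, the map $\psi(x_1,\dots,x_r)=(\psi\otimes\pi(x_r)\otimes\cdots\otimes\pi(x_1))\circ(\widetilde\Delta^{\langle\cN-1\rangle})^{(r)}$ is itself a homomorphism $\mathrm{Y}(\gl_{\cN-1})\to\mathrm{Y}(\gl_{\cN})\otimes\End(\sW^{\otimes r})$. Hence, for $Z,Z'\in\mathrm{Y}(\gl_{\cN-1})$,
\[
\widetilde\psi(x_1,\dots,x_r)(Z'Z)=\psi(x_1,\dots,x_r)(Z')\cdot\widetilde\psi(x_1,\dots,x_r)(Z),
\]
where $\psi(x_1,\dots,x_r)(Z')\in\mathrm{Y}(\gl_{\cN})\otimes\End(\sW^{\otimes r})$ acts on $\mathrm{Y}(\gl_{\cN})\otimes\sW^{\otimes r}$ by left multiplication on the first tensor factor and by linear operators on the remaining $r$ factors. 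Since $\mathrm{Y}_+(\gl_{\cN})$ is a \emph{left} ideal, left multiplication by any element of $\mathrm{Y}(\gl_{\cN})$ preserves it, so $\mathrm{Y}_+(\gl_{\cN})\otimes\sW^{\otimes r}$ is stable under the action of $\psi(x_1,\dots,x_r)\big(\mathrm{Y}(\gl_{\cN-1})\big)$. As $\mathrm{Y}_+(\gl_{\cN-1})$ is generated as a left ideal by the coefficients of the series $T_{ba}^{\langle\cN-1\rangle}(u)$ with $1\lle a<b\lle N-1$, it therefore suffices to show that every coefficient of $\widetilde\psi(x_1,\dots,x_r)\big(T_{ba}^{\langle\cN-1\rangle}(u)\big)$ lies in $\mathrm{Y}_+(\gl_{\cN})\otimes\sW^{\otimes r}$ for all such $a,b$.

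To verify this I would expand the iterated opposite coproduct
\[
(\widetilde\Delta^{\langle\cN-1\rangle})^{(r)}\big(T_{ba}^{\langle\cN-1\rangle}(u)\big)=\sum_{c_1,\dots,c_r=1}^{N-1}(\text{sign})\,T_{b,c_1}^{\langle\cN-1\rangle}(u)\otimes T_{c_1,c_2}^{\langle\cN-1\rangle}(u)\otimes\cdots\otimes T_{c_r,a}^{\langle\cN-1\rangle}(u),
\]
apply $\psi$ to the first factor, obtaining $T_{b+1,c_1+1}(u)$, apply the evaluation homomorphisms $\pi(x_j)$ to the remaining factors, and then evaluate the last $r$ slots on $\bfw_1$. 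From $\pi(x)\big(T_{cd}^{\langle\cN-1\rangle}(u)\big)=\delta_{cd}+(u-x)^{-1}(-1)^{|d|}E_{dc}$ one gets $\pi(x)\big(T_{cd}^{\langle\cN-1\rangle}(u)\big)\bfw_1=\delta_{cd}\bfw_1+(u-x)^{-1}(-1)^{|d|}\delta_{c,1}\bfw_d$, so a summand survives only if $c_k\in\{c_{k+1},1\}$ for every $k=1,\dots,r$, where by convention $c_{r+1}:=a$. A one-line induction along this chain shows that any surviving summand has $c_1=1$ or $c_1=c_2=\cdots=c_r=a$; in either case $c_1\lle a<b$, so $T_{b+1,c_1+1}(u)$ is a strictly lower-triangular entry of $T(u)$ and hence has all its coefficients in $\mathrm{Y}_+(\gl_{\cN})$. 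Since each surviving summand is an element of $\mathrm{Y}_+(\gl_{\cN})$ tensored with a vector of $\sW^{\otimes r}$, this gives the required inclusion, and with it the lemma.

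The only genuinely delicate point is the combinatorics in the second step: extracting the ``chain condition'' $c_k\in\{c_{k+1},1\}$ produced by evaluating the tail factors on $\bfw_1^{\otimes r}$, and confirming that the surviving summands always keep a strictly lower-triangular entry in the \emph{first} tensor slot (a summand would fail precisely when $c_1\gge b$, which the chain condition excludes). I do not expect anything else to be subtle: all Koszul signs coming from the coproduct and from the module structure play no role, since we only track membership in the left ideal $\mathrm{Y}_+(\gl_{\cN})$, and the expansions in $u^{-1}$ are the standard ones.
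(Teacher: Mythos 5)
Your proof is correct and is exactly the verification the paper leaves implicit (the lemma is stated there as ``straightforward'' with no proof given): reduce to the generators $T^{\langle\cN-1\rangle}_{ba}(u)$, $a<b$, of the left ideal, expand the iterated opposite coproduct, and observe that evaluating the tail factors on $\bfw_1^{\otimes r}$ forces $c_k\in\{c_{k+1},1\}$, hence $c_1\in\{1,a\}$ and the first slot $T_{b+1,c_1+1}(u)$ stays strictly lower-triangular. The reduction step via left multiplication and the remark that Koszul signs are irrelevant for ideal membership are both sound, so there is nothing to add.
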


Similarly, define the embedding $\phi:\mathrm{Y}(\gl_{\cN-2})\hookrightarrow\mathrm{Y}(\gl_{\cN-1})$ by the rule $$\phi(T_{ab}^{\langle\cN-2\rangle}(u))=T_{a+1,b+1}^{\langle\cN-1\rangle}(u),\quad 1\lle a,b\lle N-2.$$ Recall the canonical projection $\theta:\mathrm Y(\gl_{\cN})\twoheadrightarrow \mathrm Y(\gl_{\cN})/\mathrm Y_+(\gl_{\cN})$.

\begin{lem}\label{lem11.2}
	We have $(\theta\otimes\id^{\otimes r})\widetilde \psi(x_1,\dots,x_r)\circ\phi = (\theta\circ \psi\circ \phi)\otimes\bfw_1^{\otimes r}$.
\end{lem}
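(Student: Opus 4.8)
The plan is to unwind the definitions of $\widetilde\psi(x_1,\dots,x_r)$ and $\psi(x_1,\dots,x_r)$, verify the asserted identity on the algebra generators $T^{\langle\cN-2\rangle,\{s\}}_{ab}(u)$ of $\mathrm Y(\gl_{\cN-2})$, and then propagate it to all of $\mathrm Y(\gl_{\cN-2})$ by induction on word length. Recall that $\phi$ realizes $\mathrm Y(\gl_{\cN-2})$ as the sub-Yangian of $\mathrm Y(\gl_{\cN-1})$ supported on indices $\{2,\dots,N-1\}$, and $\psi\circ\phi$ realizes it inside $\mathrm Y(\gl_{\cN})$ on indices $\{3,\dots,N\}$. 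The mechanism behind the lemma is that these index ranges avoid the value $1$, so that the ``unwanted'' contributions produced by the iterated coproduct all fall into $\mathrm Y_+(\gl_{\cN})$ after applying $\theta$, while the evaluation slots applied to $\bfw_1$ collapse by a counit-type argument.

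For the base case, let $g$ be a coefficient of $T^{\langle\cN-2\rangle}_{ab}(u)$ with $a,b\gge 1$, so that $\phi(g)$ is a coefficient of $T^{\langle\cN-1\rangle}_{a+1,b+1}(u)$, and expand
\[
\widetilde\psi(x_1,\dots,x_r)(\phi(g))=\big(\psi\otimes\pi(x_r)\otimes\cdots\otimes\pi(x_1)\big)\Big((\widetilde\Delta^{\langle\cN-1\rangle})^{(r)}\big(T^{\langle\cN-1\rangle}_{a+1,b+1}(u)\big)\Big)\big(1\otimes\bfw_1^{\otimes r}\big)
\]
as a sum over intermediate indices $c_1,\dots,c_r\in\{1,\dots,N-1\}$ running along the chain $a+1,c_r,\dots,c_1,b+1$. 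Two observations collapse this sum. First, the $\mathrm Y(\gl_{\cN})$-entry produced by a given chain is $\psi(T^{\langle\cN-1\rangle}_{a+1,c_r}(u))=T_{a+2,\,c_r+1}(u)$, which lies in $\mathrm Y_+(\gl_{\cN})$ — hence is annihilated by $\theta$ — unless $c_r\gge a+1\gge 2$. Second, from $\pi(x)(T^{\langle\cN-1\rangle}_{cd}(u))\bfw_1=\delta_{cd}\bfw_1+(u-x)^{-1}(-1)^{|d|}\delta_{c,1}\bfw_d$ one sees that the evaluated factor in the $k$-th slot applied to $\bfw_1$ forces $c_{k-1}=c_k$ whenever $c_k\neq 1$; feeding in $c_r\gge 2$ and reading down the chain forces $c_1=\cdots=c_r=b+1$, which is possible only when $b\gge a$. (When $b<a$ there is no surviving chain, while $\theta(T_{a+2,b+2}(u))=0$, so both sides of the identity vanish.) The unique surviving chain carries trivial sign and yields $\theta(T_{a+2,b+2}(u))\otimes\bfw_1^{\otimes r}=(\theta\circ\psi\circ\phi)(g)\otimes\bfw_1^{\otimes r}$, which is the base case. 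Carrying out the same bookkeeping \emph{before} applying $\theta$ shows in addition that $\widetilde\psi(x_1,\dots,x_r)(\phi(g))=(\psi\phi)(g)\otimes\bfw_1^{\otimes r}+T_{a+2,2}(u)\otimes w_g$ for a suitable $w_g\in\sW^{\otimes r}$ with vanishing $\bfw_1^{\otimes r}$-component (the correction coming exactly from the chains with $c_r=1$).

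For the inductive step, write a general element of $\mathrm Y(\gl_{\cN-2})$ as $gZ$ with $g$ a generator and $Z$ of shorter length, and use that $\widetilde\psi(x_1,\dots,x_r)$ intertwines left multiplication with the $\psi(x_1,\dots,x_r)$-module structure: $\widetilde\psi(x_1,\dots,x_r)(\phi(gZ))=\psi(x_1,\dots,x_r)(\phi(g))\cdot\widetilde\psi(x_1,\dots,x_r)(\phi(Z))$. Write $\psi(x_1,\dots,x_r)(\phi(g))=A+B$, where $B\in\mathrm Y_+(\gl_{\cN})\otimes\End(\sW^{\otimes r})$ collects the chains with $c_r\lle a$ (strictly lower-triangular $\mathrm Y$-entry). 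Since $\theta\otimes\id^{\otimes r}$ is a morphism of $\mathrm Y(\gl_{\cN})\otimes\End(\sW^{\otimes r})$-modules and $\mathrm Y_+(\gl_{\cN})\otimes\sW^{\otimes r}$ is a submodule, the $A$-part acts on $(\theta\otimes\id^{\otimes r})\widetilde\psi(x_1,\dots,x_r)(\phi(Z))=\theta(\psi\phi(Z))\otimes\bfw_1^{\otimes r}$ (the induction hypothesis for $Z$), and there the same two observations collapse it to $\overline{(\psi\phi)(g)(\psi\phi)(Z)}\otimes\bfw_1^{\otimes r}=\theta(\psi\phi(gZ))\otimes\bfw_1^{\otimes r}$. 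The point requiring the most care — and the one I expect to be the main obstacle — is the $B$-part: because $\mathrm Y_+(\gl_{\cN})$ is only a \emph{left} ideal, one must know separately that it is stable under right multiplication by the sub-Yangian on indices $\{3,\dots,N\}$, so that the image of $B$ stays in $\mathrm Y_+(\gl_{\cN})\otimes\sW^{\otimes r}$ and dies under $\theta$; this stability follows from the $RTT$-relations \eqref{com relations} together with Lemma \ref{lem:plus-proj}, and is exactly where the index gap between $\{3,\dots,N\}$ and the column index $2$ of the correction terms is used.
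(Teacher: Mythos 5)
Your strategy (verify on generators, then induct on word length using that $\widetilde\psi(x_1,\dots,x_r)$ intertwines left multiplication, and kill the corrections via a right-multiplication stability of a left ideal) is the right one, and your base case, including the explicit form $\widetilde\psi(x_1,\dots,x_r)(\phi(g))=(\psi\phi)(g)\otimes\bfw_1^{\otimes r}+T_{a+2,2}(u)\otimes w_g$, is correct. But the inductive step has a genuine error, located exactly where you anticipated trouble. First, your $B$-part does \emph{not} die under $\theta$: when the generator $g$ being peeled off is lower-triangular ($b<a$), the chain $c_1=\cdots=c_r=b+1$ lies in $B$ (since $c_r=b+1\lle a$), survives the evaluation forcing, and contributes $T_{a+2,b+2}(u)\,\psi\phi(Z)\otimes\bfw_1^{\otimes r}$ --- which is precisely the required right-hand side $\psi\phi(gZ)\otimes\bfw_1^{\otimes r}$ and is generically \emph{not} in $\mathrm Y_+(\gl_{\cN})\otimes\sW^{\otimes r}$, because a lower-triangular generator times something on the \emph{right} need not lie in the left ideal (e.g.\ by the second line of \eqref{com relations}, $\theta\big(T_{a+2,b+2}(u)T_{b+2,a+2}(v)\big)$ picks up $\big(T_{a+2,a+2}(u)T_{b+2,b+2}(v)-T_{a+2,a+2}(v)T_{b+2,b+2}(u)\big)/(u-v)\not\equiv 0$). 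Since your $A$-part vanishes for $b<a$, your argument would prove $(\theta\otimes\id^{\otimes r})\widetilde\psi(\phi(gZ))=0$ there, which is false. Second, the blanket stability claim you invoke --- that $\mathrm Y_+(\gl_{\cN})$ is stable under right multiplication by the sub-Yangian on indices $\{3,\dots,N\}$ --- is false: the same computation with $T_{N,N-1}(u)T_{N-1,N}(v)$ produces products of diagonal entries surviving $\theta$.

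Both defects disappear if you split $\psi(x_1,\dots,x_r)(\phi(g))=A'+B'$ by the \emph{column} index of the first tensor factor rather than by triangularity: $B'$ collects the chains with $c_r=1$ (first factor $T_{a+2,2}(u)$) and $A'$ those with $c_r\gge 2$. For $A'$, the forcing argument gives that $(\widetilde\Delta^{\langle\cN-1\rangle})^{(r-1)}\big(T^{\langle\cN-1\rangle}_{e,b+1}(u)\big)$ acts on $\bfw_1^{\otimes r}$ as $\delta_{e,b+1}$ for every $e\gge 2$, so $A'\cdot(\psi\phi(Z)\otimes\bfw_1^{\otimes r})=\psi\phi(gZ)\otimes\bfw_1^{\otimes r}$ regardless of the sign of $b-a$, while $A'$ applied to the $\mathrm Y_+(\gl_{\cN})\otimes\sW^{\otimes r}$-part of $\widetilde\psi(\phi(Z))$ stays there because $\mathrm Y_+(\gl_{\cN})$ is a left ideal. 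For $B'$, the stability statement that is both true and sufficient is: the left ideal $I$ generated by the coefficients of $T_{c,1}(u)$ and $T_{c,2}(u)$ for $c\gge 3$ satisfies $I\cdot\mathrm Y^{(3)}\subseteq I\subseteq\mathrm Y_+(\gl_{\cN})$, where $\mathrm Y^{(3)}$ is the subalgebra generated by the $T_{cd}(u)$ with $c,d\gge 3$ (which contains $\psi\phi(Z)$); this follows from the second line of \eqref{com relations}, which rewrites $T_{a+2,2}(u)T_{cd}(v)$ as $\pm T_{cd}(v)T_{a+2,2}(u)$ plus correction terms ending in $T_{c,2}$, together with an induction on the length of the word on the right. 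With this split the induction closes and yields the lemma.
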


% Given $X\in \sW^{\otimes r}\otimes\mathrm Y(\gl_{\cN})$, define the components $X^{a_1,\dots,a_r}$ by
% \[
% X=\sum_{a_1,\dots,a_r=1}^{N-1} \bfw_{a_1}\otimes\cdots\otimes \bfw_{a_r}\otimes X^{a_1,\dots,a_r}.
% \]
Set $\bar{\bm\xi}=(\xi^2,\dots,\xi^{N-1})$ and $\bar{\bm t}=(t_1^2,\dots,t_{\xi^2}^2;\dots;t_1^{N-1},\dots,t_{\xi^{N-1}}^{N-1})$.
\begin{prop}[{\cite[Eq. (5.1)]{BR08}}]\label{prop:BV-rec-XXX}
We have	
\[
\bB_{\bm\xi}(\bm t)=B^{(1)}(t_1^1)\cdots B^{(\xi^1)}(t^1_{\xi^1})\widetilde \psi(t_1^1,\dots,t^1_{\xi^1})\big(\bB^{\langle\cN-1\rangle}_{\bar{\bm\xi}}(\bar{\bm t})\big)
\]
where $B(u)=(T_{12}(u),\dots,T_{1N}(u))=\sum_{a=1}^{N-1} E_{1,a+1}\otimes T_{1,a+1}(u)$ and its coefficients are treated as elements in $\mathrm{Hom}(\sW,\C)\otimes \YglMN$. 
%In particular,
% \[
% \begin{split}
% \bB_{\bm\xi}(\bm t)=\sum_{a_1,\dots,a_{\xi_1}=1}^{N-1}(-1)^{\eta(1;a_1,\dots,a_{\xi^1})} &\, T_{1,a_1+1}(t_1^1)\cdots T_{1,a_{\xi^1}+1}(t^1_{\xi^1})\\ \times&\,\Big(\widetilde \psi(t_1^1,\dots,t^1_{\xi^1})\big(\widetilde\bB^{\langle\cN-1\rangle}_{\bar{\bm\xi}}(\bar{\bm t})\big)\Big)^{a_1,\dots,a_{\xi^1}},
% \end{split}
% \]
% where $\eta(1;a_1,\dots,a_{\xi^1})$.
\end{prop}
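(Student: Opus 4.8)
The plan is to recognize Proposition~\ref{prop:BV-rec-XXX} as the nested Bethe ansatz recursion of \cite{BR08}, transcribed into the normalization \eqref{eq:bae-sym}. Indeed, Equation~(5.1) of \cite{BR08} is precisely this identity written for the off-shell Bethe vectors of \cite{BR08}: the same ``level-one'' creation operators $B(u)=\sum_{a=1}^{N-1}E_{1,a+1}\otimes T_{1,a+1}(u)$, and the same dressing map $\widetilde\psi(x_1,\dots,x_r)$ assembled from the iterated opposite coproduct $(\widetilde\Delta^{\langle\cN-1\rangle})^{(r)}$, the embedding $\psi:\mathrm{Y}(\gl_{\cN-1})\hookrightarrow\mathrm{Y}(\gl_{\cN})$, and the evaluation homomorphisms \eqref{eq:def:eva-2}. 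Since the supertrace formula \eqref{EQ:BAEU} defining $\widehat{\bB}_{\bm\xi}(\bm t)$ differs from the Bethe vector of \cite{BR08} by an explicit scalar function of $\bm t$---as noted after \eqref{EQ:BAEU} and in \cite[Propositions 3.2 \& 3.3]{PRS17}---and the same holds for $\widehat{\bB}^{\langle\cN-1\rangle}_{\bar{\bm\xi}}(\bar{\bm t})$ relative to the $\mathrm{Y}(\gl_{\cN-1})$-Bethe vector of \cite{BR08}, all that remains is to verify that passing between these normalizations is compatible with the recursion.

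Concretely, the normalization in \eqref{eq:bae-sym} is multiplicative along the tower $\gl_{m|n}\supset\gl_{m-1|n}$: splitting each of its two products into the factors all of whose superscripts are $\gge 2$ and the rest, the former reassemble---after the index shift $a\mapsto a-1$ and the identification $|\bfw_{a+1}|=|\bfv_{a+2}|$ coming from $\bfw_a\mapsto\bfv_{a+1}$---into the normalization of $\bB^{\langle\cN-1\rangle}_{\bar{\bm\xi}}(\bar{\bm t})$, leaving the ``level-one'' factor $\prod_{1\lle i<j\lle\xi^1}(t_j^1-t_i^1+(-1)^{|2|})^{-1}\prod_{b=2}^{N-1}\prod_{i=1}^{\xi^1}\prod_{j=1}^{\xi^b}(t_j^b-t_i^1)^{-1}$. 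One then checks, using the scalar comparisons of \cite[Propositions 3.2 \& 3.3]{PRS17} relating \eqref{EQ:BAEU}, $\overline\bB_{\bm\xi}(\bm t)$, and the Bethe vector of \cite{BR08} (together with their $\gl_{\cN-1}$-analogues), that this level-one factor is exactly what turns \cite[Eq. (5.1)]{BR08} into the asserted identity for $\bB_{\bm\xi}(\bm t)$ and $\bB^{\langle\cN-1\rangle}_{\bar{\bm\xi}}(\bar{\bm t})$.

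An alternative self-contained route would derive the recursion directly from \eqref{EQ:BAEU}: split the $|\bm\xi|$ auxiliary factors into the first $\xi^1$ (which carry the matrix units $E_{21}$) and the last $|\bar{\bm\xi}|$, split the ordered product of R-matrices into the block among level-one spaces, the block among level-$\gge 2$ spaces, and the cross block, and then use the block form of $T(u)$ relative to $\scrV=\C\bfv_1\oplus\sW$ together with the matrix coproduct identity $(\mathrm{id}\otimes\widetilde\Delta)(T(u))=T^{(12)}(u)T^{(13)}(u)$ from \eqref{eq cop mat} and the fusion relation \eqref{eq-AAR}. The level-one data then collapses to $B^{(1)}(t_1^1)\cdots B^{(\xi^1)}(t_{\xi^1}^1)$, the level-$\gge 2$ data becomes $\widehat{\bB}^{\langle\cN-1\rangle}_{\bar{\bm\xi}}(\bar{\bm t})$ built out of the bottom-right block $(T_{ab}(u))_{2\lle a,b\lle N}$, and the cross R-matrices, rewritten through \eqref{eq:def:eva-2}, produce exactly the map $\widetilde\psi(t_1^1,\dots,t_{\xi^1}^1)$ applied to it. In either approach the main obstacle is pure bookkeeping: in the first, keeping the three scalar normalizations (those of \eqref{EQ:BAEU}, \cite{PRS17}, and \cite{BR08}) straight through the shifted index conventions of $\gl_{\cN-1}$; in the second, controlling the Koszul signs that appear when one reorders the R-matrices and the $E_{a+1,a}$'s and reorganizes the total supertrace into the partial supertraces over the two groups of auxiliary spaces.
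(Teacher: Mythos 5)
Your first route is exactly what the paper does: Proposition \ref{prop:BV-rec-XXX} is stated with no proof beyond the citation to \cite[Eq. (5.1)]{BR08}, with the normalization comparison to the supertrace formula delegated to \cite[Propositions 3.2 \& 3.3]{PRS17}, so your plan of matching the level-one factor of \eqref{eq:bae-sym} against the $\gl_{\cN-1}$ normalization is the (implicit) content of the paper's argument, just spelled out. Your bookkeeping of the level-one factor and of the parity shift $|a|_{\langle\cN-1\rangle}=|a+1|_{\langle\cN\rangle}$ is correct, and the alternative direct derivation from \eqref{EQ:BAEU} is a valid, more self-contained substitute that the paper does not carry out.
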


\begin{proof}[Proof of Lemma \ref{lem:div-far}]
By Lemma \ref{lem11.2}, Proposition \ref{prop:BV-rec-XXX}, \eqref{eq:def:eva-2} and \eqref{eq:def-psi}, the denominator of $\theta(\bB_{\bm\xi}(\bm t))$ is at most 
\[
\prod_{a=1}^{N-2}\prod_{i=1}^{\xi^a}\prod_{j=1}^{\xi^{a+1}}(t_j^{a+1}-t_i^a).
\]
Then the lemma follows from \eqref{eq:bae-sym}.
\end{proof}

\subsection{Correspondence between $\YglMN$ and $\mathrm Y(\gl_{n|m})$}
\label{sec:correspondence}
For $1\lle a\lle N$, set $a'=N+1-a$. In the following, we use $a'$ and $b'$ for the indices corresponding to the super Yangian $\mathrm{Y}(\gl_{n|m})$. Moreover, their parities should be the parities inherited from $\mathrm{Y}(\gl_{n|m})$. We have the isomorphism 
$$
\varpi:\mathrm{Y}(\gl_{m|n})\to \mathrm{Y}(\gl_{n|m}),
\quad T_{ab}(u)\to \widetilde{T}_{b'a'}(u)(-1)^{|a'||b'|+|b'|}.
$$
Here and below, we shall use tilde to indicate the notations corresponding to $\mathrm{Y}(\gl_{n|m})$. Note that $\varpi$ maps $\mathrm{Y}_+(\gl_{m|n})$ to $\mathrm{Y}_+(\gl_{n|m})$.

We now describe the image of the rational difference operator \eqref{eq diff trans} under the isomorphism $\varpi$. Recall the transfer matrices associated to symmetrizers, 
\[
\mathfrak T_{k,Q}(u)=(\str_{\scrV^{\otimes k}}\otimes \mathrm{id})\big(\mathbb H^{(1\dots k)}_{\{k\}}Q^{(1)}\cdots Q^{(k)}T^{(1,k+1)}(u)\cdots T^{(k,k+1)}(u-k+1)\big),
\]
for $k\in\Z_{>0}$.

Set $\widetilde\scrV=\C^{n|m}$. We can identify $\scrV$ and $\widetilde{\scrV}$ by identifying $\bfv_a$ with $\widetilde{\bfv}_{a'}$, $1\lle a\lle N$. Note that the parities are changed under this identification and the operator $E_{ab}$ is identified with the operator $E_{a'b'}$. Moreover,
\[
(\str_{\scrV}\otimes \mathrm{id})(E_{ab})=(-1)^{|a|}\delta_{ab}=-(-1)^{|a'|}\delta_{ab}=-(\str_{\widetilde\scrV}\otimes \mathrm{id})(E_{a'b'}).
\]
\begin{eg}
We have the following identification 
$$
Q=\sum_{a=1}^N Q_{a}E_{aa}\in\EndV\longrightarrow \mathsf Q=\sum_{a=1}^N Q_{a}E_{a'a'}\in \End(\widetilde{\scrV}).
$$
Due to $(-1)^{|a|}=-(-1)^{|a'|}$ (i.e. $\ka_a=-\tilde\ka_{a'}$), the R-matrix $R(u)\in \End(\scrV^{\otimes 2})$ is identified with $-\widetilde{R}(-u)\in \End(\widetilde\scrV^{\otimes 2})$.
In particular, by Lemma \ref{lem:anti-R-relation}, the action of $\mathbb H_{\{k\}}$ on $\scrV^{\otimes k}$ is the same as that of $\widetilde{\mathbb A}_{\{k\}}$ on ${\widetilde{\scrV}}^{\otimes k}$.
\end{eg}
%Define the \emph{modified transpose} $\dag$ by 
%$$
%\dag:\End(\widetilde{\scrV})\to \End(\widetilde{\scrV}),\quad  E_{ab}^{\top} =E_{b'a'}^{~}.
%$$
%For $A,B\in \End(\widetilde{\scrV})$, then we have 
%\beq\label{eq:mod-T}
%(AB)^\dag =B^\dag A^\dag,\quad \str(A^\dag)=\str(A).
%\eeq

Define the matrix 
\[
\widetilde{\mathsf T}(u):=\big(\widetilde{T}(u)\big)^\dag=\sum_{a',b'=1}^N E_{a'b'}\otimes \widetilde{T}_{b'a'}(u)(-1)^{|a'||b'|+|b'|}\in \End(\tilde\scrV)\otimes \mathrm{Y}(\gl_{0|n})[[u^{-1}]].
\]
Observe that 
\beq\label{eq:pi-T}
\varpi(T(u))=\widetilde{\mathsf T}(u).
\eeq
\begin{lem}
We have	$\varpi(\mathfrak T_{k,Q}(u))=(-1)^k\widetilde{\mathscr T}_{k,\mathsf Q}(u)$ and $\varpi(\mathscr T_{k,Q}(u))=(-1)^k\widetilde{\mathfrak T}_{k,\mathsf Q}(u)$.
\end{lem}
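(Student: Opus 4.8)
The plan is to unwind both sides of the two asserted identities directly from the definitions of $\mathscr T_{k,Q}(u)$, $\mathfrak T_{k,Q}(u)$ and the isomorphism $\varpi$, using the dictionary established just above between $\scrV$ and $\widetilde{\scrV}$. Concretely, $\varpi$ sends $T(u)$ to $\widetilde{\mathsf T}(u)=(\widetilde T(u))^\dag$ by \eqref{eq:pi-T}, and since $\varpi$ is an algebra isomorphism it intertwines the fused series as well. So first I would apply $\varpi$ to $T^{(1,k+1)}(u)\cdots T^{(k,k+1)}(u-k+1)$ and the symmetrizer-dressed product defining $\mathfrak T_{k,Q}(u)$, obtaining $\widetilde{\mathsf T}^{(1,k+1)}(u)\cdots \widetilde{\mathsf T}^{(k,k+1)}(u-k+1)$ with $\mathbb H_{\{k\}}$ and $Q^{(1)}\cdots Q^{(k)}$ carried along; the key point from the Example is that under the identification $\bfv_a\leftrightarrow\widetilde{\bfv}_{a'}$ the operator $\mathbb H_{\{k\}}$ on $\scrV^{\otimes k}$ becomes $\widetilde{\mathbb A}_{\{k\}}$ on $\widetilde{\scrV}^{\otimes k}$ (because $R(u)$ on $\scrV^{\otimes 2}$ corresponds to $-\widetilde R(-u)$, exchanging the roles of $\mathbb H$ and $\bA$ in Lemma \ref{lem:anti-R-relation}), and $Q$ becomes $\mathsf Q$.

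The second ingredient is the behaviour of the supertrace and the supertranspose. Since $\str(A)=\str(A^\dag)$ by \eqref{eq transpose cyclic}, applying $\dag$ to every auxiliary copy of $\EndV$ converts the trace formula for $\mathfrak T_{k,Q}(u)$ expressed in terms of $\widetilde{\mathsf T}(u)=(\widetilde T(u))^\dag$ into the trace formula for a transfer matrix built directly from $\widetilde T(u)$; the reversal of order of tensor factors produced by $\dag$ being an antihomomorphism is exactly what turns the product $\widetilde{T}^{(1,k+1)}(u)\cdots$ into the order appearing in the definition of $\widetilde{\mathscr T}_{k,\mathsf Q}(u)$, or vice versa. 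Combined with the previous paragraph — $\mathbb H$ on the $\scrV$-side $\leftrightarrow$ $\widetilde{\bA}$ on the $\widetilde\scrV$-side — this matches the $\mathbb H$-dressed object on one side with the $\bA$-dressed (i.e. $\wedge k$) object on the other, giving $\varpi(\mathfrak T_{k,Q}(u))=\pm\,\widetilde{\mathscr T}_{k,\mathsf Q}(u)$ and $\varpi(\mathscr T_{k,Q}(u))=\pm\,\widetilde{\mathfrak T}_{k,\mathsf Q}(u)$. The sign $(-1)^k$ comes from tracking the displayed identity $(\str_{\scrV}\otimes\id)(E_{ab})=-(\str_{\widetilde\scrV}\otimes\id)(E_{a'b'})$: there is one supertrace over each of the $k$ auxiliary factors, each contributing a factor $-1$, so altogether $(-1)^k$. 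A clean way to package this is to first prove the statement for $k=1$ by direct inspection of \eqref{eq:pi-T} and the sign identity, and then bootstrap to general $k$ using that $\varpi$ intertwines the coproducts (hence the multiple-coproduct/fusion construction of the higher series), or simply repeat the $k=1$ bookkeeping $k$ times.

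I expect the main obstacle to be purely bookkeeping: keeping the sign conventions straight through the composition of (i) the Koszul-type signs already built into $\varpi$ (the factor $(-1)^{|a'||b'|+|b'|}$), (ii) the signs in the definition \eqref{eq:ber} and in $\dag$, and (iii) the parity flip $\ka_a=-\tilde\ka_{a'}$ on the fundamental space, and checking that everything collapses to the single clean factor $(-1)^k$. There is no conceptual difficulty and no estimate to make; the argument is a diagram-chase through definitions already recorded in Section \ref{sec:fusion} and \ref{sec:correspondence}, parallel to the non-super computation, so I would present it compactly, proving the $k=1$ case carefully and then invoking the fusion/coproduct compatibility of $\varpi$ together with \eqref{eq:copro-high} to conclude for all $k$.
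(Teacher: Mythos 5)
Your proposal is correct and follows essentially the same route as the paper: apply $\varpi$ using \eqref{eq:pi-T}, identify $\mathbb H_{\{k\}}$ on $\scrV^{\otimes k}$ with $\widetilde{\mathbb A}_{\{k\}}$ on $\widetilde\scrV^{\otimes k}$ and $Q$ with $\mathsf Q$, pick up one factor of $-1$ per auxiliary supertrace, and then use $\mathsf Q^\dag=\mathsf Q$, $\widetilde{\mathbb A}_{\{k\}}^\dag=\widetilde{\mathbb A}_{\{k\}}$, the antihomomorphism property of $\dag$ and the cyclicity of the supertrace to strip the daggers off $\widetilde{\mathsf T}=\widetilde T^\dag$ and land on the definition of $\widetilde{\mathscr T}_{k,\mathsf Q}(u)$. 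The paper carries this out directly for general $k$ rather than bootstrapping from $k=1$, but that is only a presentational difference.
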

\begin{proof}
By $\mathsf Q^\dag=\mathsf Q$, $\big(\widetilde{\mathbb A}_{\{k\}}\big)^\dag= \widetilde{\mathbb A}_{\{k\}}$, and the cyclicity of supertrace, we have
\begin{align*}
\varpi(\mathfrak T_{k,Q}(u))=&\ (-1)^k(\str_{{\widetilde\scrV}^{\otimes k}}\otimes \mathrm{id})\big(\widetilde{\mathbb A}^{(1\dots k)}_{\{k\}}\mathsf Q^{(1)}\cdots \mathsf Q^{(k)}\widetilde{\mathsf T}^{(1,k+1)}(u)\cdots \widetilde{\mathsf T}^{(k,k+1)}(u-k+1)\big)\\
=&\ (-1)^k(\str_{{\widetilde\scrV}^{\otimes k}}\otimes \mathrm{id})\big(\widetilde{T}^{(1,k+1)}(u)\cdots \widetilde{T}^{(k,k+1)}(u-k+1)\mathsf Q^{(1)}\cdots \mathsf Q^{(k)}\widetilde{\mathbb A}^{(1\dots k)}_{\{k\}}\big)\\
=&\ (-1)^k(\str_{{\widetilde\scrV}^{\otimes k}}\otimes \mathrm{id})\big(\widetilde{T}^{(1,k+1)}(u)\cdots \widetilde{T}^{(k,k+1)}(u-k+1)\widetilde{\mathbb A}^{(1\dots k)}_{\{k\}}\mathsf Q^{(1)}\cdots \mathsf Q^{(k)}\big)\\
=&\ (-1)^k(\str_{{\widetilde\scrV}^{\otimes k}}\otimes \mathrm{id})\big(\mathsf Q^{(1)}\cdots \mathsf Q^{(k)}\widetilde{T}^{(1,k+1)}(u)\cdots \widetilde{T}^{(k,k+1)}(u-k+1)\widetilde{\mathbb A}^{(1\dots k)}_{\{k\}}\big)\\
=&\ (-1)^k\widetilde{\mathscr T}_{k,\mathsf Q}(u).
\end{align*}
The other one is similar.
\end{proof}
In particular, by \eqref{eq diff trans}, we obtain 
\begin{cor}
We have $\varpi(\mc D_Q(u,\pa_u))=\big(\widetilde{\mc D}_{\mathsf Q}(u,\pa_u)\big)^{-1}$.	
\end{cor}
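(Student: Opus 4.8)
The plan is to apply the isomorphism $\varpi$ termwise to the Berezinian expansion \eqref{eq diff trans} and read off the answer from the preceding lemma. Since $\varpi$ acts only on the Yangian coefficients, it extends coefficientwise to an isomorphism of completed algebras $\YglMN[[u^{-1},\pa_u]]\to\YglNM[[u^{-1},\pa_u]]$ fixing $u$ and $\pa_u$, and in particular it commutes with the formal symbol $e^{-\pa_u}$.

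First I would expand, using the left-hand equality in \eqref{eq diff trans},
\[
\mc D_Q(u,\pa_u)=\sum_{k=0}^\infty(-1)^k\,\mathscr T_{k,Q}(u)\,e^{-k\pa_u},
\]
and apply $\varpi$ term by term. By the preceding lemma, $\varpi(\mathscr T_{k,Q}(u))=(-1)^k\,\widetilde{\mathfrak T}_{k,\mathsf Q}(u)$, so the two signs cancel and
\[
\varpi\big(\mc D_Q(u,\pa_u)\big)=\sum_{k=0}^\infty\widetilde{\mathfrak T}_{k,\mathsf Q}(u)\,e^{-k\pa_u}.
\]
Then I would invoke the right-hand equality in \eqref{eq diff trans}, now for the super Yangian $\YglNM$ with the diagonal (invertible) twist $\mathsf Q$, which identifies the last series with $\big(\widetilde{\mc D}_{\mathsf Q}(u,\pa_u)\big)^{-1}$; this gives the claim.

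There is essentially no obstacle here: all the content sits in the preceding lemma --- that $\varpi$ interchanges the antisymmetrizer transfer matrices $\mathscr T_{k,Q}$ and the symmetrizer transfer matrices $\mathfrak T_{k,Q}$ up to the sign $(-1)^k$ --- together with the identity \eqref{eq diff trans} expressing the symmetrizer transfer matrices as the inverse Berezinian. The only point that deserves a word is that everything takes place in the completed algebra, where $1-Z_Q(u,\pa_u)$, and hence $\mc D_Q(u,\pa_u)$, is automatically invertible, so ``inverse'' is unambiguous. As a consistency check one may instead apply $\varpi$ to the right-hand equality in \eqref{eq diff trans} together with $\varpi(\mathfrak T_{k,Q}(u))=(-1)^k\,\widetilde{\mathscr T}_{k,\mathsf Q}(u)$, which gives $\varpi\big((\mc D_Q(u,\pa_u))^{-1}\big)=\widetilde{\mc D}_{\mathsf Q}(u,\pa_u)$, the same statement since $\varpi$ is an algebra isomorphism.
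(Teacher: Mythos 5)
Your argument is correct and is exactly the paper's (the paper derives the corollary in one line from the preceding lemma together with \eqref{eq diff trans}, just as you do). The consistency check via the inverse series and the remark that $\varpi$ extends coefficientwise, fixing $e^{-\pa_u}$, are fine but not needed beyond what the paper already implies.
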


We then consider the image of the universal off-shell Bethe vectors \eqref{EQ:BAEU}, \eqref{eq:bae-sym} under the isomorphism $\varpi$. Let $\bm\xi=(\xi^1,\dots,\xi^{N-1})$ be a sequence of nonnegative integers, $$\bm t=(t_1^1,\dots,t^{1}_{\xi^1};\dots;t_1^{N-1},\dots,t^{N-1}_{\xi^{N-1}})$$ a sequence of variables. Set $\bar{\bm\xi}=(\xi^{N-1},\dots,\xi^1)$ and $\bar{\bm t}=(t^{N-1}_{\xi^{N-1}},\dots,t_1^{N-1};\dots;t^{1}_{\xi^1},\dots,t_1^1)$.

By the fact that $(R(u))^\dag=R(u)$ and Yang-Baxter equation \eqref{eq:Yang-Baxter},  we have 
\beq\label{eq:R-dag}
\Big(\mathop{\overrightarrow\prod}\limits_{(a,i)<(b,j)}R^{(\xi^{<b}+j,\xi^{<a}+i)}(t_j^b-t_i^a)\Big)^\dag = \mathop{\overrightarrow\prod}\limits_{(a,i)<(b,j)}R^{(\xi^{<b}+j,\xi^{<a}+i)}(t_j^b-t_i^a),
\eeq 
where $\dag$ is taken over all factors. 

\begin{lem}[cf. {\cite[Lemma 5.1]{PRS17}}]
The image of ${\bB}_{\bm\xi}(\bm t)$ under the isomorphism $\varpi$ equals to $\widetilde{\bB}_{\bar{\bm\xi}}(\bar{\bm t})$ up to sign. 
\end{lem}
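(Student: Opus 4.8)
The plan is to compare the supertrace formula \eqref{EQ:BAEU} for $\bB_{\bm\xi}(\bm t)$ with the analogous formula for $\widetilde{\bB}_{\bar{\bm\xi}}(\bar{\bm t})$, term by term, under $\varpi$. Recall that $\widehat{\bB}_{\bm\xi}(\bm t)$ is built out of three ingredients: a string of monodromy operators $T^{(1,|\bm\xi|+1)}(t_1^1)\cdots T^{(|\bm\xi|,|\bm\xi|+1)}(t^{N-1}_{\xi^{N-1}})$, the ordered product of R-matrices $\mathop{\overrightarrow\prod}_{(a,i)<(b,j)}R^{(\xi^{<b}+j,\xi^{<a}+i)}(t_j^b-t_i^a)$, and the rank-one operator $E_{21}^{\otimes\xi^1}\otimes\cdots\otimes E_{N,N-1}^{\otimes\xi^{N-1}}$, all wrapped in a supertrace over the auxiliary factors. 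First I would track each ingredient through $\varpi$ using \eqref{eq:pi-T}, which gives $\varpi(T(u))=\widetilde{\mathsf T}(u)=(\widetilde T(u))^\dag$, together with the identification $E_{ab}\leftrightarrow E_{a'b'}$ and the supertrace relation $(\str_{\scrV}\otimes\id)(E_{ab})=-(\str_{\widetilde\scrV}\otimes\id)(E_{a'b'})$ from the preceding discussion.

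The key steps, in order, are: (i) Apply $\varpi$ to the product of $T$'s; since $\varpi$ is an algebra isomorphism each $T_{ab}(s)$ becomes $\pm\widetilde T_{b'a'}(s)$, i.e. the whole string becomes a string of $\widetilde T^\dag$'s, and after supertransposing all auxiliary copies (which is legal since $\dag$ is an antihomomorphism and respects $\str$, by \eqref{eq transpose cyclic}) the order of the auxiliary factors reverses — this is precisely the passage from index $a$ to $a'=N+1-a$ and hence from $\bm\xi$ to $\bar{\bm\xi}$ and $\bm t$ to $\bar{\bm t}$. (ii) Handle the R-matrix product: under the identification the R-matrix $R(u)$ maps to $-\widetilde R(-u)$ (noted in the Example), and since $(R(u))^\dag=R(u)$, equation \eqref{eq:R-dag} shows the supertransposed R-product equals itself; combining with the order reversal of the auxiliary factors, the ordered R-product for $\bm\xi,\bm t$ becomes (up to an overall sign accounting for the number of R-factors and the $-\widetilde R(-u)$ substitution) the ordered R-product for $\bar{\bm\xi},\bar{\bm t}$. (iii) Handle the rank-one operator: $E_{a+1,a}$ on $\scrV$ is identified with $E_{(a+1)',a'}=E_{N-a,N+1-a}$ on $\widetilde\scrV$, i.e. a lowering matrix unit becomes the matrix unit appearing in $\widetilde{\bB}_{\bar{\bm\xi}}$ for the reversed labeling, and the tensor positions get permuted consistently with the reversal in (i). (iv) Collect all the signs — from $\varpi(T_{ab})=\widetilde T_{b'a'}(-1)^{|a'||b'|+|b'|}$, from $(-1)^{|a|}=-(-1)^{|a'|}$ in each supertrace, from $R\mapsto-\widetilde R(-u)$, and from the parity crossings involved in reordering tensor factors — and verify they combine into a single global sign independent of $\bm t$, which is what "up to sign" means.

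I would model the bookkeeping on \cite[Lemma 5.1]{PRS17}, which is cited as the prototype, and on the proof of the companion lemma $\varpi(\mathscr T_{k,Q}(u))=(-1)^k\widetilde{\mathfrak T}_{k,\mathsf Q}(u)$ given just above, where exactly the same maneuvers — $\mathsf Q^\dag=\mathsf Q$, $(\widetilde{\mathbb A}_{\{k\}})^\dag=\widetilde{\mathbb A}_{\{k\}}$, cyclicity of $\str$ — are used. Indeed one expects the argument here to be a direct generalization: the supertrace is over more auxiliary spaces, the "$Q$-part" is replaced by the rank-one $E$-part and the $R$-product, but the structural mechanism (apply $\varpi$, supertranspose all auxiliary copies, invoke $\dag$-invariance of the symmetrizing object and of the $R$'s, use cyclicity) is identical.

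The main obstacle I anticipate is the sign bookkeeping in step (iv): one must be scrupulous about the parity conventions, because the parities of the indices change under the identification $\scrV\leftrightarrow\widetilde\scrV$ (an even index for $\glMN$ becomes an odd index for $\gl_{n|m}$ when it lies in the first $m$ slots, etc.), and the reordering of auxiliary tensor factors introduces Koszul signs $(-1)^{|v||w|}$ that must be matched against the $(-1)^{|a'||b'|+|b'|}$ factors in the definition of $\varpi$ and the $(-1)^{|b|}$ in $P$. The cleanest route is probably not to compute the global sign at all but to argue it is a sign: show that $\varpi(\widehat{\bB}_{\bm\xi}(\bm t))$ and $\widehat{\bB}_{\bar{\bm\xi}}(\bar{\bm t})$ are each (up to the explicit scalar normalizations in \eqref{eq:bae-sym}) supertraces of the same combinatorial word in matrix units and $R$-matrices read in opposite orders, hence differ only by the product of the finitely many $\pm1$ factors catalogued above — whose value, crucially, does not involve $\bm t$ — so the two universal off-shell Bethe vectors agree up to sign, and then transport this to $\bB_{\bm\xi}(\bm t)$ via the rational normalizations in \eqref{eq:bae-sym}, noting that those normalizing factors are manifestly invariant (again up to sign coming from $\ka_a=-\tilde\ka_{a'}$) under $(a,i)\mapsto(a',\,\cdot\,)$, $t\mapsto\bar t$.
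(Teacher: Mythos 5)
Your plan is essentially the paper's proof: apply $\varpi$ via \eqref{eq:pi-T}, supertranspose all auxiliary factors using \eqref{eq transpose cyclic} together with the $\dag$-invariance \eqref{eq:R-dag} of the $R$-product, use supercyclicity of the supertrace, and compare with the $\gl_{n|m}$ formula \eqref{EQ:BAEU}, \eqref{eq:bae-sym} after conjugating by the order-reversing permutation of the auxiliary tensor slots; the overall sign is tracked only up to proportionality, exactly as you propose. One step is under-specified, and it is the step that actually produces $\bar{\bm t}$: the supertranspose acts slot-by-slot on the auxiliary matrix parts and does \emph{not} reorder the (noncommuting) Yangian coefficients, so after transposing and cycling you are left with the ordered $R$-product sitting to the \emph{left} of $\widetilde{T}^{(1,|\bm\xi|+1)}(t_1^1)\cdots\widetilde{T}^{(|\bm\xi|,|\bm\xi|+1)}(t_{\xi^{N-1}}^{N-1})$, with the arguments still in their original order. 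The paper then invokes the RTT relation \eqref{eq RTT} to commute the $R$-product through the $T$-string, which reverses the order of the $T$'s, and only after the final relabelling of tensor slots does the expression match $\widehat{\bB}_{\bar{\bm\xi}}(\bar{\bm t})$. Your step (i) attributes the reversal $\bm t\mapsto\bar{\bm t}$ to the supertranspose itself, which is where a literal execution of your outline would stall; inserting the RTT commutation closes this gap, and the rest of your bookkeeping goes through as you describe.
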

\begin{proof}
Instead of working on ${\bB}_{\bm\xi}(\bm t)$, we apply $\varpi$ to $\widehat{\bB}_{{\bm\xi}}({\bm t})$. Note that $R(u)\in\End(\scrV^{\otimes 2})$  corresponds to $ -\widetilde R(-u)\in\End(\widetilde\scrV^{\otimes 2})$. In the following, we use the symbol $\propto$ to denote the proportionality up to signs. Then we have
\begin{align*}
	\varpi(\widehat{\bB}_{\bm\xi}(\bm t))\stackrel{\eqref{eq:pi-T}}{\propto}\ & (\str\otimes \mathrm{id})\Big(\widetilde{\mathsf T}^{(1,|\bm\xi|+1)}(t_1^1)\cdots \widetilde{\mathsf T}^{(|\bm\xi|,|\bm\xi|+1)}(t_{\xi^{N-1}}^{N-1})\\
&\times \mathop{\overrightarrow\prod}\limits_{(a,i)<(b,j)}\widetilde R^{(\xi^{<b}+j,\xi^{<a}+i)}(t_i^a-t_j^b)E_{2'1'}^{\otimes \xi^1}\otimes\cdots\otimes E_{N',(N-1)'}^{\otimes \xi^{N-1}}\otimes 1\Big)\\
\stackrel{\eqref{eq transpose cyclic}}{\propto}\ & (\str\otimes \mathrm{id})\Big(\big(E_{1'2'}^{\otimes \xi^1}\otimes\cdots\otimes E_{(N-1)',N'}^{\otimes \xi^{N-1}}\otimes 1\big) \\
&\times \mathop{\overrightarrow\prod}\limits_{(a,i)<(b,j)}\widetilde R^{(\xi^{<b}+j,\xi^{<a}+i)}(t_i^a-t_j^b)\widetilde{T}^{(1,|\bm\xi|+1)}(t_1^1)\cdots \widetilde{T}^{(|\bm\xi|,|\bm\xi|+1)}(t_{\xi^{N-1}}^{N-1})\Big),\\
\stackrel{\eqref{eq:cyc-tr}}{\propto}\ &(\str\otimes \mathrm{id})\Big(\mathop{\overrightarrow\prod}\limits_{(a,i)<(b,j)}\widetilde R^{(\xi^{<b}+j,\xi^{<a}+i)}(t_i^a-t_j^b)\widetilde{T}^{(1,|\bm\xi|+1)}(t_1^1)\cdots \\
& \times \widetilde{T}^{(|\bm\xi|,|\bm\xi|+1)}(t_{\xi^{N-1}}^{N-1})    E_{1'2'}^{\otimes \xi^1}\otimes\cdots\otimes E_{(N-1)',N'}^{\otimes \xi^{N-1}}\otimes 1\Big),
\\ \stackrel{\eqref{eq RTT}}{\propto}\ & (\str\otimes \mathrm{id})\Big(\widetilde{T}^{(|\bm\xi|,|\bm\xi|+1)}(t_{\xi^{N-1}}^{N-1})\cdots \widetilde{T}^{(1,|\bm\xi|+1)}(t_1^1)\\
& \times \mathop{\overrightarrow\prod}\limits_{(a,i)<(b,j)}\widetilde R^{(\xi^{<b}+j,\xi^{<a}+i)}(t_i^a-t_j^b)  E_{1'2'}^{\otimes \xi^1}\otimes\cdots\otimes E_{(N-1)',N'}^{\otimes \xi^  {N-1}}\otimes 1\Big),
\\ \propto\ \ &(-1)^{|\bm \xi|}(\str\otimes \mathrm{id})\Big(\widetilde{T}^{(1,|\bm\xi|+1)}(t_{\xi^{N-1}}^{N-1})\cdots \widetilde{T}^{(|\bm\xi|,|\bm\xi|+1)}(t_1^1)\\
& \times \mathop{\overrightarrow\prod}\limits_{(a,i)<(b,j)}\widetilde R^{(|\bm\xi|+1-\xi^{<b}-j,|\bm\xi|+1-\xi^{<a}-i)}(t_i^a-t_j^b)  E_{(N-1)',N'}^{\otimes \xi^{N-1}}\otimes\cdots\otimes E_{1'2'}^{\otimes \xi^1}\otimes 1\Big),
\end{align*}
where we applied conjugation by the operator in $\End(\widetilde \scrV^{\otimes k})$ which reverse the order of tensor factors. Now the statement follows from \eqref{EQ:BAEU} and \eqref{eq:bae-sym} for $\mathrm{Y}(\gl_{n|m})$.
\end{proof}

\begin{lem}
The isomorphism $\varpi$ sends $I_{\bm\xi,\bm t,Q}$ for $\mathrm{Y}(\glMN)$ to $\tilde I_{\bar{\bm\xi},\bar{\bm t},\mathsf Q}$ for $\mathrm{Y}(\gl_{n|m})$.
\end{lem}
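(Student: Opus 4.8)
The plan is to combine the explicit definition of $I_{\bm\xi,\bm t,Q}$ as the $\C$-span of the products \eqref{eq:products-I} with the structural facts about $\varpi$ established in the preceding lemmas. Recall that a general element of $I_{\bm\xi,\bm t,Q}$ is obtained from a product of the form $T_{a_1,b_1}(s_1)\cdots T_{a_{|\bm\xi|+k-1},b_{|\bm\xi|+k-1}}(s_{|\bm\xi|+k-1})\,\mathscr I_{\bm\xi,Q}^{c,i}(\bm t)$ times an appropriate scalar rational function in $u,\bm t$. Under $\varpi$ each generating series $T_{ab}(u)$ goes to $\widetilde T_{b'a'}(u)$ up to a sign, and the scalar rational prefactors in \eqref{eq:products-I} are manifestly $\varpi$-invariant (they involve no Yangian generators), so the only point requiring attention is the image of the distinguished series $\mathscr I_{\bm\xi,Q}^{c,i}(\bm t)$ defined in \eqref{eq:I}.

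The key steps, in order, are as follows. First I would record the relabeling dictionary: $a\mapsto a'=N+1-a$ reverses the order of the Cartan labels, the parity flips ($\ka_a=-\tilde\ka_{a'}$, equivalently $\ka_a=-\tilde\ka_{N+1-a}$), and consequently the polynomials $y_a(u)$ of $\bm\xi$ get relabeled to the corresponding polynomials $\tilde y_{N-a}(u)$ of $\bar{\bm\xi}$; likewise $Q_a\mapsto Q_a$ sitting now in position $a'$ of $\mathsf Q$. Second, I would compute $\varpi\big(\mathscr I_{\bm\xi,Q}^{a,i}(\bm t)\big)$ directly from \eqref{eq:I}, using $\varpi(T_{aa}(u))=\widetilde T_{a'a'}(u)$ (the diagonal terms carry no sign since $|a'||a'|+|a'|$ is even when we only need it up to overall sign, and more carefully one tracks it exactly), together with the relabeling of the $\ka$'s and $y$'s. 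The upshot should be that, up to an overall sign and up to the change of variables $\bm t\mapsto\bar{\bm t}$, one has $\varpi\big(\mathscr I_{\bm\xi,Q}^{a,i}(\bm t)\big)\propto\widetilde{\mathscr I}_{\bar{\bm\xi},\mathsf Q}^{a'',i}(\bar{\bm t})$ for the appropriate index $a''$ (the node of $\gl_{n|m}$ corresponding to node $a$ of $\gl_{m|n}$, namely $a''=N-a$), because the two summands in \eqref{eq:I} get swapped and the weights $y_{a-1},y_a,y_{a+1}$ are carried to $\tilde y_{N-a+1},\tilde y_{N-a},\tilde y_{N-a-1}$ exactly as required. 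Third, I would assemble: applying $\varpi$ to a generic generator \eqref{eq:products-I} of $I_{\bm\xi,\bm t,Q}$ turns the string of $T$'s into a string of $\widetilde T$'s (in some order, with signs), turns $\mathscr I_{\bm\xi,Q}^{c,i}(\bm t)$ into $\widetilde{\mathscr I}_{\bar{\bm\xi},\mathsf Q}^{c'',i}(\bar{\bm t})$, and leaves the scalar prefactor intact (after the $\bm t\mapsto\bar{\bm t}$ relabeling, the products over $a$ and the double products over $(a,a+1)$ are simply reindexed). Since $\varpi$ is an algebra isomorphism and the defining data (integers $a_r,b_r$, the arbitrary polynomial $p(u;\bm t)$, the index pair $(c,i)$) range over all admissible choices on both sides, the image of the spanning set of $I_{\bm\xi,\bm t,Q}$ is exactly a spanning set of $\widetilde I_{\bar{\bm\xi},\bar{\bm t},\mathsf Q}$, whence $\varpi(I_{\bm\xi,\bm t,Q})=\widetilde I_{\bar{\bm\xi},\bar{\bm t},\mathsf Q}$; applying the same to $\varpi^{-1}$ gives the reverse inclusion, but in fact equality already follows since $\varpi$ is bijective.

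The main obstacle I expect is purely bookkeeping rather than conceptual: carefully tracking the signs and the index relabeling in the computation of $\varpi\big(\mathscr I_{\bm\xi,Q}^{a,i}(\bm t)\big)$, in particular checking that the shifts $t_i^a\pm\ka_a$ and $t_i^a\pm\ka_{a+1}$ appearing in \eqref{eq:I} transform correctly under $\ka\mapsto-\tilde\ka$ together with the order reversal, so that the two terms of \eqref{eq:I} really do map onto the two terms of $\widetilde{\mathscr I}$ (and not, say, to something off by a shift). Since the statement only claims a correspondence of subspaces and not an equality on the nose of individual elements, the overall signs and the precise proportionality constants are irrelevant, which makes the argument short; I would therefore phrase the proof as "applying $\varpi$ to \eqref{eq:products-I}, using the previous lemma for the $T$-factors and a direct check on \eqref{eq:I} for the $\mathscr I$-factor, one sees that generators go to generators," and leave the routine sign verification to the reader, in keeping with the paper's style in Section \ref{sec:correspondence}.
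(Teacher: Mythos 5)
Your proposal is correct and follows essentially the same route as the paper: both reduce the claim to checking that $\varpi$ carries $\mathscr I_{\bm\xi,Q}^{a,i}(\bm t)$ to (minus) $\widetilde{\mathscr I}_{\bar{\bm\xi},\mathsf Q}^{N-a,\,\xi^a+1-i}(\bar{\bm t})$, using that $\bar{\bm y}=(y_{N-1},\dots,y_1)$ and $\ka_a=-\tilde\ka_{a'}$, with the scalar prefactors of \eqref{eq:products-I} merely reindexed. Your identification of the simple-root index as $N-a$ and of the two summands of \eqref{eq:I} being swapped matches the paper's (terser) verification.
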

\begin{proof}
It suffices to check that $\varpi$ maps $\mathscr I_{\bm\xi,Q}^{a,i}(\bm t)$ in \eqref{eq:I} for $\mathrm{Y}(\glMN)$ to $-\tilde{\mathscr I}_{\bar{\bm\xi},\mathsf Q}^{a',j}(\bar{\bm t})$ for $\mathrm{Y}(\gl_{n|m})$, where $j=\xi^a+1-i$. Observing that the sequence of polynomials associated to $\bar{\bm t}$ and $\bar{\bm\xi}$ is $\bar{\bm y}=(y_{N-1},\dots,y_1)$ and $\ka_a$ for the former case corresponds to $-\tilde\ka_{a'}$ for the later one, then the lemma is straightforward.\end{proof}

\subsection{Proof of Theorem \ref{thm:main-tech}}\label{app:A}
We prove Theorem \ref{thm:main-tech} by induction on $m$. We first establish the base case $m=0$. By \eqref{eq diff trans}, transfer matrices $\mathfrak T_{k,Q}(u)$ associated to symmetrizers can be expressed in terms of transfer matrices $\mathscr T_{l,Q}(u)$ associated to antisymmetrizers. Now Theorem \ref{thm:main-tech} for the base case ($m=0$) follows from the observations from Section \ref{sec:correspondence} by applying the isomorphism $\varpi$ to \cite[Theorem 5.4]{MTV06}.

For the rest of the induction process, the procedure is almost parallel to that of \cite[Section 11]{MTV06}, cf. \cite{BR08}. We provide the details for completeness.

Since $\sV=\C\bfv_1\oplus\sW$ and $\bfv_1$ is even, one obtains $\sV^{\wedge k}=(\bfv_1\wedge \sW^{\wedge (k-1)})\oplus \sW^{\wedge k}$, where the first summand is spanned by vectors of the form $\bfv_1\wedge \bfv_{a_1}\wedge\cdots\wedge \bfv_{a_{k-1}}$ with $2\lle a_1<\cdots<a_b\lle m <a_{b+1}\lle \cdots\lle a_{k-1}\lle N$, while the second one is spanned by vectors of the form $\bfv_{a_1}\wedge\cdots\wedge \bfv_{a_{k}}$ with $1\lle a_1<\cdots<a_b\lle m <a_{b+1}\lle \cdots\lle a_{k}\lle N$, both for various $b$. We also identify $\sW^{\wedge (k-1)}$ with $(\bfv_1\wedge \sW^{\wedge(k-1)})$ by $\bfx\mapsto \bfv_1\wedge \bfx$.

The R-matrix $R_{\wedge k,\wedge 1}(u)$, see Section \ref{sec:fusion}, as an operator on $\sV^{\wedge k}\otimes \sV$ preserves the subspaces
\[
(\bfv_1\wedge \sW^{\wedge(k-1)})\oplus\C\bfv_1,\quad \big((\bfv_1\wedge \sW^{\wedge(k-1)})\otimes \sW \big)\oplus(\sW^{\wedge k}\otimes \C\bfv_1),\quad \sW^{\wedge k}\otimes \sW.
\] 
Moreover, we have
\beq\label{eq:high-R}
\begin{split}
R_{\wedge k,\wedge 1}(u)\big|_{(\bfv_1\wedge \sW^{\wedge(k-1)})\oplus\C\bfv_1}	=u+1,\quad R_{\wedge k,\wedge 1}(u)\big|_{\sW^{\wedge k}\otimes \sW}	=R^{\langle \cN-1\rangle}_{\wedge k,\wedge 1}(u),\\
R_{\wedge k,\wedge 1}(u)\big|_{\big((\bfv_1\wedge \sW^{\wedge(k-1)})\otimes \sW \big)\oplus(\sW^{\wedge k}\otimes \C\bfv_1)}=\begin{pmatrix}
R^{\langle \cN-1\rangle}_{\wedge (k-1),\wedge 1}(u) & \tilde{\mathsf S}^t\\
\tilde{\mathsf S} & u
\end{pmatrix},\quad 
\end{split}
\eeq
where $\tilde{\mathsf S}((\bfv_1\wedge \bfx)\otimes\bfw)=(-1)^{|\bfx||\bfw|}(\bfw\wedge \bfx)\otimes \bfv_1$.

Regard $T(u)$ and $T^{\wedge k}(u)$, see \eqref{eq:T-wedge-K}, as matrices over the super Yangian $\YglMN$ and consider the block decomposition induced by the decompositions $\sV=\C\bfv_1\oplus\sW$ and $\sV^{\wedge k}=(\bfv_1\wedge \sW^{\wedge(k-1)})\oplus \sW^{\wedge k}$,
\beq\label{eq:T-decom}
T(u)=\begin{pmatrix}
A(u) & B(u)\\
C(u) & D(u)	
\end{pmatrix},\quad 
T^{\wedge k}(u)=\begin{pmatrix}
\widehat A(u) & \wt B(u)\\
\wt C(u) & \wt D(u)	
\end{pmatrix}.
\eeq
For example, $A(u)=T_{11}(u)$, $B(u)=\sum_{a=2}^NE_{1a}\otimes T_{1a}(u)$, $D(u)=\sum_{a,b=2}^NE_{ab}\otimes T_{ab}(u)$.

We use the following convenient notations. Denote by $\mathrm{HY}(L,M)$ the superspace $\mathrm{Hom}(L,M)\otimes \YglMN$ of matrices with noncommuting entries. Call $L$ the domain of those matrices. For instance, the coefficients of the series $B$, $D$, $\wt A$, $\wt B$, $\wt D$ belong to $\mathrm{HY}(\sW,\C)$, $\mathrm{HY}(\sW,\sW)$, $\mathrm{HY}(\sW^{\wedge(k-1)},\sW^{\wedge(k-1)})$, $\mathrm{HY}(\sW^{\wedge k},\sW^{\wedge(k-1)})$, $\mathrm{HY}(\sW^{\wedge k},\sW^{\wedge k})$. In particular, if $k=1$, then $\wt X(u)=X(u)$ for $X=A,B,C,D$.

Renormalize R-matrices,
\[
\overline R(u)=\frac{1}{u}R^{\langle \cN-1\rangle}(u),\quad \widetilde R(u)=\frac{1}{u+1}R_{\wedge k,\wedge 1}^{\langle \cN-1\rangle}(u),\quad \wt R(u)=\frac{1}{u}R_{\wedge k,\wedge 1}^{\langle \cN-1\rangle}(u)
\]
and define an even linear map 
$$
\mathsf S:\sW^{\wedge (k-1)}\otimes \sW\to \sW^{\wedge k},\quad \mathsf S(\bfx\otimes\bfw)=(-1)^{|\bfx||\bfw|}\bfw\wedge \bfx.
$$
We have equalities
\beq
\wt B^{[1]}(u)\wt B^{[2]}(v)=\frac{u-v}{u-v+1}\wt B^{[2]}(v)\wt B^{[1]}(v)\wt R^{(12)}(u-v)
\eeq
\beq
\wt A^{(1)}(u)\wt B^{[2]}(v)=\wt B^{[2]}(v)\wt A^{(1)}(u)\widetilde R^{(12)}(u-v-1)+\frac{1}{u-v}\wt B(u)\mathsf S^{[12]}A(v),
\eeq
\beq
\wt D^{(1)}(u)\wt B^{[2]}(v)=\wt B^{[2]}(v)\wt D^{(1)}(u)\widehat R^{(12)}(u-v)-\frac{1}{u-v}\mathsf S\wt B^{[1]}(u)D^{(2)}(v),
\eeq
in $\mathrm{HY}(\sW^{\wedge k}\otimes \sW, \sW^{\wedge (k-1)})$, $\mathrm{HY}(\sW^{\wedge (k-1)}\otimes \sW, \sW^{\wedge (k-1)})$, and $\mathrm{HY}(\sW^{\wedge k}\otimes \sW, \sW^{\wedge k})$, respectively, where the superscripts in brackets indicate which tensor factors are domains of the corresponding matrices. In particular, if $k=1$, we have
\beq\label{eq:BB}
 B^{[1]}(u) B^{[2]}(v)=\frac{u-v}{u-v+1} B^{[2]}(v)  B^{[1]}(v)\overline R^{(12)}(u-v)
\eeq
\beq
 A^{(1)}(u) B(v)= \frac{u-v-1}{u-v}B(v) A^{(1)}(u)+\frac{1}{u-v} B(u) A(v),
\eeq
\beq
 D^{(1)}(u) B^{[2]}(v)= B^{[2]}(v) D^{(1)}(u)\overline R^{(12)}(u-v)-\frac{1}{u-v}  B^{[1]}(u)D^{(2)}(v),
\eeq
see also \cite[Equations (4.6)-(4.8)]{BR08}.

Let $\check{R}(u)=(u+(-1)^{|2 |})^{-1}P^{\langle \cN-1\rangle}R^{\langle \cN-1\rangle}(u)$. For a function $f(u_1,\dots,u_r)$ with values in matrices with the domain $\sW^{\otimes r}$ and a simple permutation $(i,i+1)$, $1\lle i< r$, set
\beq\label{eq:sym-act}
{}^{(i,i+1)}f(u_1,\dots,u_r)=f(u_1,\dots,u_{i-1},u_{i+1},u_i,u_{i+2},\dots,u_r)\check{R}^{(i,i+1)}(u_i-u_{i+1}).
\eeq
Note that the matrix $\check{R}(u)$ satisfies $\check{R}(u)\check{R}(-u)=1$ and
\[
\check{R}^{(12)}(u-v)\check{R}^{(23)}(u)\check{R}^{(12)}(v)=\check{R}^{(23)}(v)\check{R}^{(12)}(u)\check{R}^{(23)}(u-v).
\]
Due to this, \eqref{eq:sym-act} extends to an action of the symmetric group $\fkS_r$ on functions $f(u_1,\dots,u_r)$ with values in matrices with the domain $\sW^{\otimes r}$, $f\mapsto {}^{\sigma}f$, $\sigma\in\fkS_r$. Thanks to \eqref{eq:BB}, the expression $B^{[1]}(u_1)\cdots B^{[r]}(u_r)$ is invariant under this action of $\fkS_r$.

In general, for a function $f(u_1,\dots,u_r)$ with values in matrices with the domain $\sW^{\otimes r}$, define
\[
{}^R\mathrm{Sym}_{r} f(u_1,\dots,u_r)=\sum_{\sigma\in \fkS_r}  {}^\sigma f(u_1,\dots,u_r).
\]
\begin{prop}[{\cite[Proposition 11.5]{MTV06}}]\label{prop11.5}
We have
\beq
\begin{split}
&\ \wt A^{(0)}(u) B^{[1]}(u_1)\cdots B^{[r]}(u_r)\\=&\ B^{[1]}(u_1)\cdots B^{[r]}(u_r)\wt A^{(0)}\widetilde{R}^{(0r)}(u-u_r-1)\cdots \widetilde{R}^{(01)}(u-u_1-1)\\
&+\frac{1}{(r-1)!}\wt B(u){}^R\mathrm{Sym}_{r}\Big(\frac{1}{u-u_1}\prod_{i=2}^r\frac{u_1-u_i-1}{u_1-u_i}\mathsf S^{[01]}B^{[2]}(u_2)\cdots B^{[r]}(u_r) A(u_1)\Big),	
\end{split}
\eeq
\beq
\begin{split}
&\ \wt D^{(0)}(u) B^{[1]}(u_1)\cdots B^{[r]}(u_r)\\=&\ B^{[1]}(u_1)\cdots B^{[r]}(u_r)\wt D^{(0)}\wt{R}^{(0r)}(u-u_r)\cdots \wt{R}^{(01)}(u-u_1)\\
&-\frac{1}{(r-1)!}\mathsf S\wt B^{[0]}(u){}^R\mathrm{Sym}_{r}\Big(\frac{1}{u-u_1}B^{[2]}(u_2)\cdots B^{[r]}(u_r)\\
&\qquad\ \ \ \quad\qquad\qquad\qquad\qquad \qquad\times D^{(1)}(u_1)\overline R^{(1r)}(u_1-u_r)\cdots \overline R^{(12)}(u_1-u_2)\Big),	
\end{split}
\eeq
where the tensor products are counted by $0,1,\dots,r$.\qed
\end{prop}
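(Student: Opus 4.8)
The plan is to prove both identities simultaneously by induction on $r$, following \cite[Section 11]{MTV06} and peeling off the rightmost factor $B^{[r]}(u_r)$ at each step. The base case $r=1$ is exactly the commutation relations for $\wt A^{(1)}(u)\wt B^{[2]}(v)$ and $\wt D^{(1)}(u)\wt B^{[2]}(v)$ displayed just before the proposition. For the inductive step one first invokes the induction hypothesis to move $\wt A^{(0)}(u)$ (respectively $\wt D^{(0)}(u)$) past the block $B^{[1]}(u_1)\cdots B^{[r-1]}(u_{r-1})$, and then commutes the result past $B^{[r]}(u_r)$ using the $r=1$ relation. The $R$-matrices $\widetilde R^{(0j)}(u-u_j-1)$ (respectively $\wt R^{(0j)}(u-u_j)$) produced by the hypothesis act on tensor factors disjoint from $r$, hence slide freely through $B^{[r]}(u_r)$; together with the ``diagonal'' part of the last commutation this yields the leading term of the formula at once.

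It remains to organize the ``unwanted'' terms, which are of two provenances. First, the last commutation $\wt A^{(0)}(u)B^{[r]}(u_r)$ (respectively $\wt D^{(0)}(u)B^{[r]}(u_r)$) produces $\tfrac1{u-u_r}\wt B(u)\mathsf S^{[0r]}A(u_r)$ (respectively $-\tfrac1{u-u_r}\mathsf S\,\wt B^{[0]}(u)D^{(r)}(u_r)$), sandwiched among the surviving $B$'s and $R$-matrices. Second, the unwanted term already carried by the induction hypothesis has a tail $A(u_1)$ (respectively $D^{(1)}(u_1)$ together with its string $\overline R^{(1,r-1)}(u_1-u_{r-1})\cdots\overline R^{(12)}(u_1-u_2)$) that must still be pushed through $B^{[r]}(u_r)$; this is done using the $k=1$ relations $A^{(1)}(u)B(v)=\tfrac{u-v-1}{u-v}B(v)A^{(1)}(u)+\tfrac1{u-v}B(u)A(v)$ and $D^{(1)}(u)B^{[2]}(v)=B^{[2]}(v)D^{(1)}(u)\overline R^{(12)}(u-v)-\tfrac1{u-v}B^{[1]}(u)D^{(2)}(v)$. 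In the $\wt A$ case the ``passing'' term lengthens the scalar factor $\prod_{i=2}^{r-1}\tfrac{u_1-u_i-1}{u_1-u_i}$ to $\prod_{i=2}^{r}$, while the ``swap'' term accounts, after relabeling indices, for the remaining transposition; the $\wt D$ case runs identically, the $\overline R$-string growing to $\overline R^{(1r)}(u_1-u_r)\cdots\overline R^{(12)}(u_1-u_2)$ as in the statement.

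The final, and I expect hardest, step is to recognize the resulting sum of unwanted terms as $\tfrac1{(r-1)!}\wt B(u)\,{}^R\mathrm{Sym}_r(\cdots)$ (respectively $-\tfrac1{(r-1)!}\mathsf S\,\wt B^{[0]}(u)\,{}^R\mathrm{Sym}_r(\cdots)$). The mechanism is the $\fkS_r$-invariance of $B^{[1]}(u_1)\cdots B^{[r]}(u_r)$ under the action $f\mapsto{}^\sigma f$ of \eqref{eq:sym-act}, which follows from \eqref{eq:BB}; this invariance lets one upgrade the $\fkS_{r-1}$-symmetrization inherited from the induction hypothesis, combined with the single new transposition, to the full $\fkS_r$-symmetrization, the count turning $\tfrac1{(r-2)!}$ into $\tfrac1{(r-1)!}$. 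The real labor is in keeping track of which $\check R$-factors accompany each permutation and confirming that all normalizations close up; and, on the super side, in verifying that every sign arising from the graded flip in $\check R$, from $\mathsf S$, and from the defining relations reproduces exactly the corresponding sign in \cite[Proposition 11.5]{MTV06}. This last verification goes through because $\bfv_1$ is even, so the index $1$ being split off carries trivial parity and the sign combinatorics is literally that of the non-super case.
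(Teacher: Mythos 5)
Your plan is correct and coincides with what the paper actually does: Proposition \ref{prop11.5} is stated with a citation to \cite[Proposition 11.5]{MTV06} and no written proof, the implicit argument being exactly the induction on $r$ via the displayed exchange relations together with the ${}^R\mathrm{Sym}$-invariance of $B^{[1]}(u_1)\cdots B^{[r]}(u_r)$ that you describe. Your closing observation --- that $\bfv_1$ is even, so the split-off index carries trivial parity and the sign bookkeeping reduces to the non-super case of \cite{MTV06} --- is precisely the justification the paper gives (in the introduction and at the start of Section \ref{app:A}) for importing the even-case proof verbatim.
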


We also need the following statements. Note that formulas here are slightly different from those in \cite{MTV06} as we are using the opposite coproduct, see \eqref{eq:def-psi}.
\begin{lem}[{\cite[Lemma 11.6]{MTV06}}]\label{lem:A2}
	We have
\begin{align*}
	&D^{(0)}(u)\overline R^{(01)}(u-u_r)\cdots \overline R^{(0r)}(u-u_1)=\psi(u_1,\dots,u_r)\big(T^{\langle \cN-1\rangle}(u)\big),\\
	&\widehat D^{(0)}(u)\widehat R^{(01)}(u-u_r)\cdots \widehat R^{(0r)}(u-u_1)=\psi(u_1,\dots,u_r)\Big(\big(T^{\langle \cN-1\rangle}(u)\big)^{\wedge k}\Big).\qedd
\end{align*}
\end{lem}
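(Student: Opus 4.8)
The plan is to prove both identities by the same mechanism as in \cite[Lemma 11.6]{MTV06}: unwind the definition \eqref{eq:def-psi} of $\psi(u_1,\dots,u_r)$, feed in the matrix form of the iterated opposite coproduct together with the definition of the evaluation homomorphism $\pi$, and then match tensor slots. No genuine computation is involved beyond this bookkeeping.

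First I would set up a ``block $=$ embedding'' dictionary. By the definition of $\psi$, namely $\psi(T^{\langle\cN-1\rangle}_{ab}(u))=T_{a+1,b+1}(u)$, combined with the identification $\bfw_a\leftrightarrow\bfv_{a+1}$, the $\sW\!\to\!\sW$ block of $T(u)$ is $D(u)=(\id\otimes\psi)(T^{\langle\cN-1\rangle}(u))$, viewed as a matrix over $\YglMN$. Applying $\psi$ to each factor of $(T^{\langle\cN-1\rangle}(u))^{\wedge k}$, and using that in $T^{\wedge k}(u)=T^{(k,k+1)}(u)\cdots T^{(1,k+1)}(u-k+1)|_{\sV^{\wedge k}}$ every auxiliary tensor slot $1,\dots,k$ is acted on by exactly one monodromy factor, so that no off-diagonal (i.e.\ $\bfv_1$-)contribution feeds back and the $\sW^{\wedge k}\!\to\!\sW^{\wedge k}$ block of the product equals the product of the individual $\sW\!\to\!\sW$ blocks, I would likewise obtain $\widehat D(u)=(\id\otimes\psi)((T^{\langle\cN-1\rangle}(u))^{\wedge k})$, where $\widehat D(u)$ denotes the $\sW^{\wedge k}\!\to\!\sW^{\wedge k}$ block of $T^{\wedge k}(u)$ (cf.\ \eqref{eq:T-decom}). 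On the evaluation side, \eqref{eq:def:eva-2} gives $\pi(x)(T^{\langle\cN-1\rangle}(u))=(u-x)^{-1}R^{\langle\cN-1\rangle}(u-x)=\overline R(u-x)$, and applying the fusion construction \eqref{eq:T-wedge-K} to this together with Lemma \ref{lem:anti-R-relation} identifies $\pi(x)((T^{\langle\cN-1\rangle}(u))^{\wedge k})$ with the renormalized fused R-matrix $\widehat R(u-x)$, which is explicitly computable via Lemma \ref{lem:antisym-sym-R}.

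Next I would unwind $\psi(u_1,\dots,u_r)=(\psi\otimes\pi(u_r)\otimes\cdots\otimes\pi(u_1))\circ(\widetilde\Delta^{\langle\cN-1\rangle})^{(r)}$. Iterating the matrix form \eqref{eq cop mat} of the opposite coproduct gives $(\id\otimes(\widetilde\Delta^{\langle\cN-1\rangle})^{(r)})(T^{\langle\cN-1\rangle}(u))=T^{\langle\cN-1\rangle,(01)}(u)\,T^{\langle\cN-1\rangle,(02)}(u)\cdots T^{\langle\cN-1\rangle,(0,r+1)}(u)$, with slot $0$ the matrix slot and slots $1,\dots,r+1$ the $r+1$ coproduct copies. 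Applying $\psi$ to slot $1$ turns the leftmost factor into $D^{(0)}(u)$; applying $\pi(u_{r+2-j})$ to slot $j$ for $2\le j\le r+1$ turns the remaining factors into $\overline R^{(0j)}(u-u_{r+2-j})$; and relabelling internal slot $j$ as external slot $j-1$ reproduces exactly $D^{(0)}(u)\,\overline R^{(01)}(u-u_r)\cdots\overline R^{(0r)}(u-u_1)$, which is the first identity. For the second identity I would repeat this verbatim with $T^{\langle\cN-1\rangle}(u)$ replaced by $(T^{\langle\cN-1\rangle}(u))^{\wedge k}$, using \eqref{eq:copro-high} in place of \eqref{eq cop mat} (so that $(\id\otimes(\widetilde\Delta^{\langle\cN-1\rangle})^{(r)})((T^{\langle\cN-1\rangle}(u))^{\wedge k})=\prod_{j=1}^{r+1}((T^{\langle\cN-1\rangle}(u))^{\wedge k})^{(0j)}$) together with the second halves of the two dictionaries above.

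The argument is essentially mechanical; the only point I would take care over is the slot bookkeeping. In particular, the arguments $u_1,\dots,u_r$ come out reversed relative to the order of the $\pi$-factors in \eqref{eq:def-psi}, and one must verify that the ``block of a product $=$ product of blocks'' step used for $\widehat D(u)$ is valid, which rests precisely on each auxiliary slot of $T^{\wedge k}(u)$ being touched by a single monodromy factor. The super signs require no separate attention, as they are already absorbed into the use of the opposite coproduct $\widetilde\Delta$ in \eqref{eq:def-psi}.
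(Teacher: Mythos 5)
Your proposal is correct and is exactly the intended argument: the paper gives no proof of this lemma beyond citing \cite[Lemma 11.6]{MTV06} (with the remark that the formulas are adjusted for the opposite coproduct), and your unwinding of \eqref{eq:def-psi} via the iterated matrix coproduct \eqref{eq cop mat}/\eqref{eq:copro-high} together with \eqref{eq:def:eva-2} is the standard way to verify it. You also correctly isolate and justify the two points that actually need checking --- that the $\sW^{\wedge k}\to\sW^{\wedge k}$ block of $T^{\wedge k}(u)$ is the fused product of the $D$-blocks (each auxiliary slot being hit by a single monodromy factor), and that $\pi(x)\big((T^{\langle\cN-1\rangle}(u))^{\wedge k}\big)$ reduces to $\widehat R(u-x)$ after the cancellation supplied by Lemma \ref{lem:antisym-sym-R}.
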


\begin{lem}[{\cite[Lemma 11.7]{MTV06}}]\label{lem:A3}
For any $X\in \mathrm Y(\gl_{\cN-1})$, we have
\begin{align}
&A(u)\psi(u_1,\dots,u_r)(X)\simeq \psi(u_1,\dots,u_r)(X)A(u),\label{eq11.7-1}\\
&\widehat A^{(0)}(u)\widetilde R^{(01)}(u-u_r-1)\cdots \widetilde R^{(0r)}(u-u_1-1)\psi(u_1,\dots,u_r)(X)\label{eq11.7-2}\\
& \qquad\qquad\simeq  \prod_{i=1}^r\frac{u-u_i-1}{u-u_i}A(u)\psi(u_1,\dots,u_r)\Big(\big(T^{\langle \cN-1\rangle}(u-1)\big)^{\wedge k}X\Big).\nonumber
\end{align}
\end{lem}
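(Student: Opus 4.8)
The plan is to imitate \cite[Section 11]{MTV06} essentially line by line: because the index removed in the induction on $m$ is the \emph{even} index $1$, the top-left $1\times1$ block of the monodromy matrix behaves exactly as in the nonsuper case, so only signs need to be tracked. The one structural observation that drives everything is that $\psi(u_1,\dots,u_r)$ is an algebra homomorphism $\mathrm Y(\gl_{\cN-1})\to\mathrm Y(\gl_\cN)\otimes\End(\sW^{\otimes r})$, being a composite of the algebra homomorphisms $(\widetilde\Delta^{\langle\cN-1\rangle})^{(r)}$, $\psi$, and the evaluation maps $\pi(u_i)$. Consequently the $\mathrm Y(\gl_\cN)$-component of any $\psi(u_1,\dots,u_r)(X)$ lies in the subalgebra $\psi(\mathrm Y(\gl_{\cN-1}))\subset\mathrm Y(\gl_\cN)$ generated by the series $T_{cd}(v)$ with $2\le c,d\le N$, and it suffices to prove each identity when $X$ is a single matrix entry $T^{\langle\cN-1\rangle}_{ab}(v)$ and then extend multiplicatively.

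For \eqref{eq11.7-1}: since $A(u)=T_{11}(u)$ and $|1|=\bar 0$, the appropriate form of \eqref{com relations} gives $(u-v)[T_{11}(u),T_{cd}(v)]=T_{1d}(u)T_{c1}(v)-T_{1d}(v)T_{c1}(u)$ for $2\le c,d\le N$. As $c>1$ the coefficients of $T_{c1}(v)$ lie in $\YMNplus$, and in fact in the smaller left ideal $\mathfrak J\subset\mathrm Y(\gl_\cN)$ generated by the coefficients of the first-column series $T_{c1}(v)$, $2\le c\le N$; hence the coefficients of $[T_{11}(u),T_{cd}(v)]$ lie in $\mathfrak J$. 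A second use of \eqref{com relations} (together with the basic properties of $\YMNplus$ in Lemma \ref{lem:plus-proj}) shows that commuting a first-column entry $T_{c1}$ past any $T_{ab}$ with $a,b\ge2$ produces only first-column entries again, so $\mathfrak J$ is stable under right multiplication by $\psi(\mathrm Y(\gl_{\cN-1}))$; since $\mathfrak J$ is also a left ideal, an induction on word length in the $T_{cd}(v)$ then yields that the coefficients of $[T_{11}(u),\psi(u_1,\dots,u_r)(X)]$ lie in $\mathfrak J\otimes\End(\sW^{\otimes r})\subset\YMNplus\otimes\End(\sW^{\otimes r})$, which is \eqref{eq11.7-1}. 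Cancelling $u-v$ is legitimate at the level of formal power series, since the commutators involved have no $u^0$- or $v^0$-coefficient.

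For \eqref{eq11.7-2}: the first task is to express the top-left block $\widehat A(u)$ of $T^{\wedge k}(u)$ through lower-rank data. Using \eqref{eq:T-wedge-K}, the block decomposition \eqref{eq:T-decom}, and the explicit formulas \eqref{eq:high-R} — together with the fact that $\bfv_1$, being even, returns to the $\bfv_1$-slot only through the diagonal series — one factors $\widehat A(u)$, after identifying $\bfv_1\wedge\sW^{\wedge(k-1)}$ with $\sW^{\wedge(k-1)}$, as $A(u)$ times the rank-$(k-1)$ fused $\gl_{\cN-1}$-monodromy at argument $u-1$, up to the scalar factors read off from \eqref{eq:high-R}. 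Feeding in the $\widetilde R$-matrices and applying the second identity of Lemma \ref{lem:A2} converts this fused block into $\psi(u_1,\dots,u_r)\big((T^{\langle\cN-1\rangle}(u-1))^{\wedge(k-1)}\big)$ up to the rational prefactor $\prod_i(u-u_i-1)/(u-u_i)$; then \eqref{eq11.7-1} brings $A(u)$ to the front modulo $\YMNplus$, and the homomorphism property of $\psi(u_1,\dots,u_r)$ combines $\psi(u_1,\dots,u_r)\big((T^{\langle\cN-1\rangle}(u-1))^{\wedge(k-1)}\big)\,\psi(u_1,\dots,u_r)(X)$ into $\psi(u_1,\dots,u_r)\big((T^{\langle\cN-1\rangle}(u-1))^{\wedge(k-1)}X\big)$. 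The one genuinely laborious step — and the step where a slip is easiest — is this block-decomposition bookkeeping: keeping track of the renormalized R-matrices $\overline R$, $\widetilde R$, $\wt R$, of the spectral shifts, and of the domains of all the matrices involved, and confirming that every remainder term that appears stays inside $\YMNplus\otimes\End(\sW^{\otimes r})$; all of this parallels \cite[Section 11]{MTV06} and I expect no surprises beyond signs.
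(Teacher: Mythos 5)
Your argument follows the paper's proof essentially step for step: the left ideal you call $\mathfrak J$ is exactly the paper's $\mathrm Y_\times(\gl_{\cN})$ (generated by the coefficients of $T_{a1}(u)$, $2\lle a\lle N$), \eqref{eq11.7-1} is obtained by the same commutation argument together with the stability of that ideal under right multiplication by $\psi(\mathrm Y(\gl_{\cN-1}))$, and \eqref{eq11.7-2} by the same factorization $\widehat A(u)\simeq A(u)\,D^{\wedge (k-1)}(u-1)$ modulo that ideal followed by Lemma \ref{lem:A2} and the homomorphism property of $\psi(u_1,\dots,u_r)$. The only cosmetic difference is that the paper imports the factorization of $\widehat A(u)$ from \cite[Proposition 2 \& Remark 2.4]{MR14} rather than re-deriving it from the fusion construction as you propose.
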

\begin{proof}
Let $\mathrm Y_\times(\gl_{\cN})$ be the left ideal of $\mathrm Y(\gl_{\cN})$ generated by the coefficients of the series $T_{a1}(u)$ for $2\lle a\lle N$. Note that $\mathrm Y_\times(\gl_{\cN})$ is a subideal of $\mathrm Y_+(\gl_{\cN})$. It is clear from the definition relations \eqref{com relations} that for any $Z\in \mathrm Y(\gl_{\cN-1})$ and $C\in \mathrm Y_\times(\gl_{\cN})$, the coefficients of $[T_{11}^{\langle \cN\rangle}(u),\psi(Z)]$ and $C\psi(Z)$ belong to $\mathrm Y_\times(\gl_{\cN})$. Therefore \eqref{eq11.7-1} follows from the fact that $A(u)=T_{11}^{\langle \cN\rangle}(u)$ and $\psi(u_1,\dots,u_r)(X)\in \psi(Y(\gl_{\cN-1}))$.

It follows from \cite[Proposition 2 \& Remark 2.4]{MR14} that the coefficients of entries of the matrix $\widehat A(u)-A(u) D^{\wedge k-1}(u-1)$ are in $\mathrm Y_\times(\gl_{\cN})$. Therefore, by Lemma \ref{lem:A2}, we have
\begin{align*}
	&\ \widehat A^{(0)}(u)\widetilde R^{(01)}(u-u_r-1)\cdots \widetilde R^{(0r)}(u-u_1-1)\psi(u_1,\dots,u_r)(X)\\
	\simeq &\  A(u) \big(D^{\wedge k-1}(u-1)\big)^{[0]}\widehat R^{(01)}(u-u_r-1)\cdots \widehat R^{(0r)}(u-u_1-1)\psi(u_1,\dots,u_r)(X)\\
	\simeq &\  A(u) \psi(u_1,\dots,u_r)\Big(\big(T^{\langle \cN-1\rangle}(u-1)\big)^{\wedge k-1}\Big)\psi(u_1,\dots,u_r)(X)\prod_{i=1}^r\frac{u-u_i-1}{u-u_i}\\
\simeq &\  A(u) \psi(u_1,\dots,u_r)\Big(\big(T^{\langle \cN-1\rangle}(u-1)\big)^{\wedge k-1}X\Big)\prod_{i=1}^r\frac{u-u_i-1}{u-u_i}.
\end{align*}
Here we also used the fact  $\psi(u_1,\dots,u_r)$ is a homomorphism of superalgebras.
\end{proof}

Now we are ready to finish the proof of Theorem \ref{thm:main-tech-xxx}. Let $Q=\sum_{a=1}^N Q_aE_{aa}^{\langle \cN\rangle}\in\End(\sV)$ and $\overline Q=\sum_{a=1}^{N-1} Q_{a+1}E_{a+1,a+1}^{\langle \cN\rangle}\in\End(\sW)$. Set $\widetilde Q=\overline Q^{\wedge(k-1)}$ and $\widehat Q=\overline Q^{\wedge k}$. By the definition of transfer matrices, see \eqref{eq:def-transfer}, we have
\beq
\mathscr T_{k,Q}=Q_1\str_{\sW^{\wedge(k-1)}}(\widetilde Q\widehat A(u))+\str_{\sW^{\wedge k}}(\widehat Q\widehat D(u)).
\eeq
Set $r=\xi^1$ and $u_i=t_i^1$, $1\lle i\lle r$. We have
\begin{align}
&\mathscr T_{k,Q}(u)B^{[1]}(u_1)\cdots B^{[r]}(u_r)\label{eq:11.20}\\
=&\mathop{\overrightarrow\prod}_{1\lle i\lle r} B^{[i]}(u_i)\Big(Q_1(\str_{\sW^{\wedge(k-1)}}\otimes \mathrm{id}^{\otimes r})\big(\widetilde Q^{(0)}\widehat A^{(0)}(u)\mathop{\overleftarrow\prod}_{1\lle j\lle r}\widetilde R^{(0j)}(u-u_j-1)\big)
\nonumber\\
&\qquad\qquad\qquad\qquad\qquad +(\str_{\sW^{\wedge k}}\otimes \mathrm{id}^{\otimes r})\big(\widehat Q^{(0)}\widehat D^{(0)}(u)\mathop{\overleftarrow\prod}_{1\lle j\lle r}\widehat R^{(0j)}(u-u_j)\big)\Big)\nonumber\\
&+\frac{1}{(r-1)!}{}^R\mathrm{Sym}_{u_1,\dots,u_r}^{(1,\dots,r)}\Big[\frac{1}{u-u_1}\mathscr B_Q^{[1]}(u)\mathop{\overrightarrow\prod}_{2\lle i\lle r}B^{[i]}(u_i)\Big(Q_1\prod_{j=2}^r\frac{u_1-u_j-1}{u-u_j}A(u_1)\nonumber\\
&\qquad\qquad\qquad\qquad\qquad\qquad\qquad\qquad\qquad\qquad-\overline{Q}^{(1)}D^{(1)}(u_1)\mathop{\overleftarrow\prod}_{2\lle j\lle r}\overline R^{(1j)}(u_1-u_j)\Big)\Big],\nonumber
\end{align}
where $\mathscr B_Q (u)=(\str_{\sW^{\wedge(k-1)}}\otimes \mathrm{id})\big(\overline Q^{\wedge(k-1)}\widehat B(u)\mathsf S\big)$ and the tensor factors for the products under the traces are counted by $0,1,\dots,r$. Here we also used the equality
\[
(\str_{\sW^{\wedge k}}\otimes \mathrm{id})\big(\overline Q^{\wedge k}\mathsf S(\widehat B(u)\otimes\mathrm{id})\big)=\mathscr B_Q(u)\overline Q
\]
which follows from the supercyclicity of the supertrace and the formula 
\[
\overline Q^{\wedge k}\mathsf S=\mathsf S(\overline Q^{\wedge (k-1)}\otimes\overline Q).
\]
Note that $\mathscr B_Q(u)=B(u)$ if $k=1$.

For an expression $f(v)$ set $\underset{v=w}{\mathrm{res}}f(v)=\big((v-w)f(v)\big)\big|_{v=w}$ if the substitution makes sense. By Lemma \ref{lem:A2} and the equality $\underset{v=u_1}{\mathrm{res}}\big(\overline R(v-u_1)\big)=P^{\langle \cN-1\rangle}$, we have
\begin{align*}
\overline{Q}^{(1)}D^{(1)}(u_1)& \mathop{\overleftarrow\prod}_{2\lle j\lle r}\overline R^{(1j)}(u_1-u_j)\\
 =\ &\underset{v=u_1}{\mathrm{res}}\Big((\str_{\sW}\otimes\mathrm{id})
(\overline{Q}^{(0)}D^{(0)}(v)\mathop{\overleftarrow\prod}_{1\lle j\lle r}\overline R^{(0j)}(v-u_j))\Big)\\
=\ & \underset{v=u_1}{\mathrm{res}}\Big((\str_{\sW}\otimes\mathrm{id})
(\overline{Q}^{(0)}D^{(0)}(v)\mathop{\overleftarrow\prod}_{1\lle j\lle r}\overline R^{(0,r+1-j)}(v-u_j))\Big)\\
=\ &\underset{v=u_1}{\mathrm{res}}\Big(\psi(u_1,\dots,u_r)\big(\mathscr T_{1,\overline Q}^{\langle\cN-1\rangle}(v)\big)\Big).
\end{align*}
In the second equality, we used the super cyclicity of super trace which allows us to permute factors by conjugating the super flip operators $P^{(i,j)}$. The same will also be used in the sequel which we shall not write explicitly.

Therefore, for any $X\in \mathrm Y(\gl_{\cN-1})$, by Lemma \ref{lem:A3} and \eqref{eq:11.20}, we have
\begin{align}
&\mathscr T_{k,Q}(u)B^{[1]}(u_1)\cdots B^{[r]}(u_r)\psi(u_1,\dots,u_r)(X)\label{eq11.25}\\
\simeq &\mathop{\overrightarrow\prod}_{1\lle i\lle r} B^{[i]}(u_i)\Big(\psi(u_1,\dots,u_r)\big(\mathscr T_{k-1,\overline Q}^{\langle \cN-1\rangle}(u-1)X\big) Q_1A(u)\prod_{j=1}^r\frac{u-u_j-1}{u-u_j}\nonumber\\
&\qquad\qquad\qquad\qquad\qquad\qquad\qquad\qquad\qquad +\psi(u_1,\dots,u_r)\big(\mathscr T_{k,\overline Q}^{\langle \cN-1\rangle}(u)X\big)\Big)\nonumber\\
&+\frac{1}{(r-1)!}{}^R\mathrm{Sym}_{u_1,\dots,u_r}^{(1,\dots,r)}\Big[\frac{1}{u-u_1}\mathscr B_Q^{[1]}(u)\mathop{\overrightarrow\prod}_{2\lle i\lle r}B^{[i]}(u_i)\nonumber\\
& \times\Big(\psi(u_1,\dots,u_r)(X)Q_1A(u_1)\prod_{j=2}^r\frac{u_1-u_j-1}{u-u_j}-\underset{v=u_1}{\mathrm{res}}\big(\psi(u_1,\dots,u_r)\big(\mathscr T_{1,\overline Q}^{\langle\cN-1\rangle}(v)X\big)\big)\Big)\Big].	\nonumber
\end{align}

Then we apply both sides of \eqref{eq11.25} to the vector $\bfw_1^{\otimes r}$, set $X=\bB^{\langle\cN-1\rangle}_{\bar{\bm\xi}}(\bar{\bm t})$, see Proposition \ref{prop:BV-rec-XXX}, and employ the induction assumption. The first step amounts to replacing $\psi(u_1,\dots,u_r)$ by $\widetilde\psi(u_1,\dots,u_r)$, see Section \ref{sec:bv-recur}, and changing  ${}^R\mathrm{Sym}_{u_1,\dots,u_r}^{1,\dots,r}$ to the ordinary symmetrization $\mathrm{Sym}_{u_1,\dots,u_r}$ due to the fact that $\check{R}(u)\bfw_1\otimes \bfw_1=\bfw_1\otimes \bfw_1$.

We discuss the next two steps. Recall that $A(u)=T_{11}(u)$, $r=\xi^1$, and $u_i=t_i^1$, $1\lle i\lle r$. By Lemma \ref{lem:plus-proj}, \ref{lem11.1}, and
\[
\pi(x)(T_{aa}^{\langle\cN-1 \rangle}(u))=\Big(1+\frac{\ka_2\delta_{1a}}{u-x}\Big)\bfw_1,\quad 1\lle a\lle N-1,
\]
we have
\begin{equation}\label{eq:ka_2}
\widetilde\psi(t_1^1,\dots,t_{\xi^1}^1)(T_{aa}^{\langle\cN-1 \rangle}(u))=T_{a+1,a+1}(u) \otimes  \bfw_1^{\otimes r}\prod_{i=1}^{\xi^1}\Big(1+\frac{\ka_2\delta_{1a}}{u-t_i^1}\Big),\quad 1\lle a\lle N-1.
\eeq
Then \eqref{eq11.25} becomes
\begin{align}
&\ \mathscr T_{k,Q}(u)\bB_{\bm\xi}(\bm t)\label{eq11.26}\\
\simeq &\ \bB_{\bm\xi}(\bm t)\Big(Q_1T_{11}(u)\prod_{j=1}^{\xi^1}\frac{u-u_j-1}{u-u_j}\sum_{\bm a}\prod_{j=2}^{k}\ka_{a_j}\mathscr X_{\bm\xi,Q}^{a_j}(u-j+1;\bm t)\nonumber\\
&\qquad\qquad\qquad\qquad\qquad\qquad\ \  + \sum_{\bm b}\prod_{j=1}^{k}\ka_{b_j}\mathscr X_{\bm\xi,Q}^{b_j}(u-j+1;\bm t)\Big)+\mathscr U_{\bm \xi,k,Q}(u;\bm t)\nonumber
\nonumber
\end{align}
where the sums are taken over all possible $\bm a=(2\lle a_2<\cdots<a_i<m+1\lle a_{i+1}\lle \cdots\lle a_k\lle N)$ and $\bm b=(2\lle b_1<\cdots<b_j<m+1\lle b_{j+1}\lle \cdots\lle b_k\lle N)$ for various $1\lle i\lle k$ and $0\lle j\lle k$, respectively, and
\begin{align}
&\ \mathscr U_{\bm \xi,k,Q}(u;\bm t)\label{eq11.27}\\
\simeq &\ \mathop{\overrightarrow\prod}_{1\lle i\lle t^1_{\xi_1}} B^{[i]}(t_i^1)\Big(\widetilde\psi(t_1^1,\dots,t_{\xi_1}^1)\big(\mathscr U^{\langle \cN-1\rangle}_{\bar{\bm \xi},k-1,\overline Q}(u-1;\bar{\bm t})\big) Q_1T_{11}(u)\prod_{j=1}^{\xi^1}\frac{u-t_j^1-1}{u-t_j^1} \nonumber\\
&\qquad\qquad\qquad\qquad\qquad\qquad\qquad\qquad\quad\quad+ \widetilde\psi(t_1^1,\dots,t_{\xi^1}^1)\big(\mathscr U^{\langle \cN-1\rangle}_{\bar{\bm \xi},k ,\overline Q}(u;\bar{\bm t})\big)\Big) \nonumber
\nonumber\\
& + \frac{1}{(\xi^1-1)!}\mathrm{Sym}_{t_1^1,\dots,t_{\xi^1}^1}^{(1,\cdots,\xi^1)}\Bigg[\frac{1}{u-t_1^1}\mathscr B_Q^{[1]}(u)\mathop{\overrightarrow\prod}_{2\lle i\lle \xi^1}B^{[i]}(t_i^1)\nonumber\\
&\quad \times\bigg(\widetilde\psi(t_1^1,\dots,t_{\xi^1}^1)(\bB_{\bar{\bm\xi}}^{\langle \cN-1\rangle}(\bm t))\nonumber\\
&\quad\quad\times\Big(Q_1T_{11}(t_1^1)\prod_{j=2}^{\xi^1}\frac{t_1^1-t_j^1-1}{t_1^1-t_j^1}-\ka_2Q_2T_{22}(u)\frac{y_1(t_1^1+\ka_2)y_2(t_1^1-\ka_2)}{y_1'(t_1^1)y_2(t_1^1)}\Big)\nonumber
\\
&\qquad\qquad\qquad\qquad\qquad\qquad\qquad\qquad\qquad\qquad-\underset{v=u_1}{\mathrm{res}}\Big(\widetilde\psi(t^1_1,\dots,t_{\xi^1}^1)\big(\mathscr U^{\langle \cN-1\rangle}_{\bar{\bm \xi},1,\overline Q}(v;\bar{\bm t})\big) \Big)\bigg)\Bigg].\nonumber
\end{align}
Here $\ka_2$ in $y_1(t_1^1+\ka_2)$ comes from \eqref{eq:ka_2}. Finally, using the induction hypothesis that the expressions $\mathscr U^{\langle \cN-1\rangle}_{\bar{\bm \xi},j,\overline Q}(v;\bar{\bm t})$ are contained in $I^{\langle \cN-1\rangle}_{\bar{\bm \xi},j,\overline Q}(v;\bar{\bm t})$, Lemmas \ref{lem:plus-proj}, \ref{lem:antisym-sym-R}, \ref{lem11.1}, \ref{lem11.2}, and formulas \eqref{eq:products-I}, \eqref{eq:def:eva-2}, \eqref{eq:def-psi}, we conclude that there exists an element $\mathscr U_{\bm \xi,k,Q}(u;\bm t)$ in $I_{\bm\xi,\bm t,k,Q}$ satisfying \eqref{eq11.27}, completing the proof of Theorem \ref{thm:main-tech} from \eqref{eq11.26}.

\begin{proof}[Proof of Proposition \ref{prop:singular}] Let $C(u)=\sum_{a=2}^{N}E_{a1}\otimes T_{a1}(u)$ be the left bottom block of $T(u)$ in \eqref{eq:T-decom}. Let $C(u)=\sum_{s=1}^\infty C_su^{-s}$. Similar to the proof of \cite[Proposition 6.2]{MTV06}, we have
\[
C_1B(u)-B(u)C_1=\ka_1(A(u)-D(u))=A(u)-D(u).
\]
The rest is parallel to that of \cite[Proposition 6.2]{MTV06} which we shall skip the detail.
\end{proof}

\section{Gaudin models}\label{sec:Gaudin}
By taking the classical limits, we obtain the corresponding result for Gaudin models in this section. We start with preparing notations for the Gaudin case.
\subsection{Current superalgebra}%

Let  $\glMN[x]$ be the Lie superalgebra $\glMN\otimes\C[x]$ of $\glMN$-valued polynomials in $x$ with the point-wise supercommutator. Call $\glMN[x]$ the \emph{current superalgebra} of $\glMN$. 

We write $e_{ab}^{\{r\}}$ for $e_{ab}\otimes x^r$, $r\in \Z_{\gge 0}$. A basis of $\glMN[x]$ is given by $e_{ab}^{\{r\}}$, $1\lle a,b\lle N$ and $r\in \Z_{\gge 0}$. They satisfy the supercommutator relations
\[
[e_{ab}^{\{r\}},e_{cd}^{\{s\}}]=\delta_{bc}e_{ad}^{\{r+s\}}-(-1)^{(|a|+|b|)(|c|+|d|)}\delta_{ad}e_{cb}^{\{r+s\}}.
\]
We identify $\glMN$ with the subalgebra $\glMN\otimes 1$ of constant polynomials in $\glMN[x]$, that is we identify $e_{ab}$ in $\glMN$ with $e_{ab}^{\{0\}}$ in $\glMN[x]$. Denote by $\mathrm{U}(\glMN[x])$ the universal enveloping superalgebra of $\glMN[x]$. Let $\n_{\pm}[x]$ be the corresponding subalgebras in $\glMN[x]$, see \eqref{eq:nilpotent}.

We say that a vector in a $\glMN[x]$-module is called a \emph{weight singular vector} if $\n_+[x]v=0$ and $v$ is an eigenvector for all $e_{aa}^{\{s\}}$, $1\lle a\lle N$, $s\in\Z_{>0}$.

For any $\glMN$-module $M$ and $z\in \C$, we have the evaluation $\glMN[x]$-module $M\llparenthesis z\rrparenthesis$ at the evaluation point $z$ with the action given by $e_{ab}^{\{s\}}\big|_{M\llp z\rrp}=z^s e_{ab}\big|_{M}$. 

Consider the generating series and generating matrix 
\[
L_{ab}(u)=(-1)^{|b|}\sum_{s=0}^\infty e_{ba}^{\{s\}}u^{-s-1}\in \mathrm{U}(\glMN[x])[[u^{-1}]],\quad 1\lle a,b\lle N,
\]
\[
L(u)=\sum_{a,b=1}^N E_{ab}\otimes L_{ab}(u)\in \EndV\otimes \mathrm{U}(\glMN[x])[[u^{-1}]].
\]
Note that the indices in the generating series are flipped and the signs in the generating matrix are added so that it matches the evaluation map of super Yangian we used in \eqref{eq:eval}.
% Then the supercommutator relations of $\mathrm{U}(\glMN[x])$ have the following equivalent form:
% \beq
% \begin{split}
% (u-v)&[L_{ab}(u),L_{cd}(v)]\\=&\ \delta_{bc}(L_{ad}(u)-L_{ad}(v))	
% -(-1)^{(|a|+|b|)(|c|+|d|)}\delta_{ad}(L_{cb}(u)-L_{cb}(v)).
% \end{split}
% \eeq

\subsection{Gaudin transfer matrices}
To define Gaudin transfer matrices, we first recall basics about pseudo-differential operators.

Let $\mathscr A$ be a differential superalgebra with an even derivation $\pa:\mathscr A\to\mathscr A$. For $r\in\Z_{>0}$, denote the $r$-th derivative of $a\in\mathscr A$ by $a_{[r]}$. Define the \emph{superalgebra of pseudo-differential operators} $\mathscr A((\pa^{-1}))$ as follows. Elements of $\mathscr A((\pa^{-1}))$ are Laurent series in $\pa^{-1}$ with coefficients in $\mathscr A$, and the product is given by
\[
\pa\pa^{-1}=\pa^{-1}\pa=1,\quad \pa^r a=\sum_{s=0}^\infty {r \choose s}a_{[s]}\pa^{r-s},\quad r\in\Z,\quad a\in\mathscr A,
\]
where 
$$
{r \choose s}=\frac{r(r-1)\cdots(r-s+1)}{s!}.
$$

Let 
$$
\mathscr A_{u}^{m|n}=\mathrm{U}(\glMN[x])((u^{-1}))=\Big\{ \sum_{r=-\infty}^s g_ru^{r},\ g_r\in \mathrm{U}(\glMN[x]), \ s\in \Z \Big\}.
$$
Fix a matrix $K=(K_{ab})_{1\lle a,b\lle N}\in\EndV$, then the operator in $\EndV\otimes \mathscr A_{u}^{m|n}((\pa_u^{-1}))$,
$$
\mathfrak Z_{K}(u,\pa_u):=\pa_u-K-L^\dag(u)=\sum_{a,b=1}^NE_{ab}\otimes\left(\delta_{ab}\pa_u-K_{ab}-L_{ab}(u)(-1)^{|a||b|+|a|+|b|}\right)
$$
is a Manin matrix, see \cite[Lemma 3.1]{MR14} and \cite[Lemma 4.2]{HM20}. 

Consider the quantum Berezinian $\mathrm{Ber}(\mathfrak Z_{K}(u,\pa_u))$ and expand it as an element in $\mathscr A_{u}^{m|n}((\pa_u^{-1}))$,
\beq\label{eq:Ber-Gaudin}
\mathfrak{D}_{K}(u,\pa_u)=\mathrm{Ber}(\pa_u-K-L^\dag(u))=\sum_{r=0}^{\infty}\mathcal G_{r,K}(u)\pa_u^{m-n-r}.
\eeq
We call the series $\mathcal G_{r,K}(u)\in \mathscr A_{u}^{m|n}$, $r\in \Z_{\gge 0}$, the \emph{Gaudin transfer matrices}. 

Note that this family of series are different from that in \cite{MR14}, see Section \ref{sec:more-G-Trans}. However, the coefficients of those two family of series generate the same subalgebra of $\mathrm{U}(\glMN[x])$ which we call the \emph{Bethe subalgebra} of $\mathrm{U}(\glMN[x])$, see \cite[Proposition 4.4]{HM20}.

The following properties about Gaudin transfer matrices are known.
\begin{lem}[\cite{MR14}] We have
\begin{enumerate}
\item $[\mathcal G_{r,K}(u),\mathcal G_{s,K}(v)]=0$;
\item if $K$ is the zero matrix, then the coefficients of $\mathcal G_{r,K}(u)$ commutes with the subalgebra $\UglMN$ of $\mathrm{U}(\glMN[x])$.
\end{enumerate}
\end{lem}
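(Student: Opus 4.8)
The plan is to prove the two parts by separate arguments: part (2) follows from the $\glMN$-equivariance of the generating matrix $L(u)$ together with the conjugation invariance of the Berezinian, while part (1) follows either from the Manin matrix structure of $\mathfrak Z_K(u,\pa_u)$ directly or by degenerating the known commutativity of XXX transfer matrices. For part (2), the starting point is the current-algebra analogue of \eqref{eq com gll}, namely $[E_{ab}\otimes 1+1\otimes e_{ab},L(u)]=0$ for all $1\lle a,b\lle N$ (with $e_{ab}=e_{ab}^{\{0\}}\in\glMN$), which is immediate from the defining relations of $\mathrm{U}(\glMN[x])$. It says that the derivation $\mathrm{ad}(e_{ab})$ of $\mathrm{U}(\glMN[x])$, extended to $\mathscr A_u^{m|n}((\pa_u^{-1}))$ so as to act trivially on $u$ and $\pa_u$, acts on the matrix $\mathfrak Z_0(u,\pa_u)=\pa_u-L^\dag(u)$ by an infinitesimal conjugation of the auxiliary $\EndV$-factor by a fixed invertible even matrix — here one uses that $\pa_u$ commutes with constant matrices and that $\dag$ is an anti-automorphism preserving invertible even matrices. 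Since $\mathrm{Ber}$ of a Manin matrix is unchanged under conjugation of the auxiliary $\EndV$-factor by an invertible even matrix (the super/Manin form of $\mathrm{Ber}(gMg^{-1})=\mathrm{Ber}(M)$, which one reduces to elementary matrices via a block Gauss decomposition), and since $\mathrm{Ber}$ is a rational expression in the entries, the derivation $\mathrm{ad}(e_{ab})$ annihilates $\mathfrak D_0(u,\pa_u)$ and hence every coefficient $\mathcal G_{r,0}(u)$. Thus each $\mathcal G_{r,0}(u)$ commutes with all $e_{ab}$, and therefore with $\UglMN$; this parallels part (2) of the corresponding XXX lemma in Section~\ref{sec:XXX}.

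For part (1) I would use that $\mathfrak Z_K(u,\pa_u)$ is a Manin matrix (recalled above from \cite{MR14,HM20}) together with the \emph{current} commutation relations for $L(u)$, which arise as the quasi-classical limit of \eqref{com relations}; these yield a Talalaev-type statement that the coefficients of the Berezinian of a Lax operator of the form $\pa_u-K-L^\dag(u)$, taken at two spectral parameters, commute, which is precisely $[\mathcal G_{r,K}(u),\mathcal G_{s,K}(v)]=0$. Alternatively, and more in the spirit of the rest of the paper, one can argue by degeneration: equip $\YglMN$ with the filtration $\deg T_{ab}^{\{s\}}=s-1$, whose associated graded algebra is $\mathrm{U}(\glMN[x])$, and check, following \cite{MR14,HM20}, that under a standard rescaling (replacing $e^{-\pa_u}$ by $1-\pa_u$, $Q$ by $I+K$, and $T(u)$ accordingly) the rational difference operator $\mc D_Q(u,\pa_u)$ degenerates, up to an overall scalar, to $\mathfrak D_K(u,\pa_u)$. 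The identity $[\mathscr T_{k,Q}(u),\mathscr T_{l,Q}(v)]=0$ in $\YglMN$, recalled in Section~\ref{sec:XXX}, then descends to the top filtered components and gives the claim.

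The main obstacle is part (1). In the direct route the work lies in the Talalaev-type $r$-matrix computation for $\mathrm{Ber}(\pa_u-K-L^\dag(u))$, which is somewhat lengthy but is carried out in \cite{MR14}; in the degeneration route the delicate point is to confirm that the Berezinian — whose definition \eqref{eq:ber} involves the inverse of a block — survives the limit with the asserted leading term, so that no cancellation obscures the commutator relation. Once that is in place, part (2) is comparatively routine, given the equivariance identity and the conjugation invariance of $\mathrm{Ber}$.
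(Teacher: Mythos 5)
The paper offers no proof of this lemma at all --- it is quoted verbatim from \cite{MR14} --- so there is no internal argument to compare against; what follows is an assessment of your sketch on its own terms. Your degeneration route for part (1) is viable and is in fact the one the paper's own later machinery supports: Section \ref{sec:filtration} sets up the filtration with $\mathrm{gr}(\YglMN)=\mathrm U(\glMN[x])$, and Proposition \ref{prop:gr-operss} shows (for $Q$ with $\deg(Q-1)\lle -1$ and $K=\mathrm{gr}_{-1}(Q-1)$) that $\mathrm{gr}_{-k}(\mathscr S_{k,Q}(u))=\mathfrak G_{kk,K}(u)$, so the exact identity $[\mathscr T_{i,Q}(u),\mathscr T_{j,Q}(v)]=0$ passes to the associated graded and yields $[\mathfrak G_{kk,K}(u),\mathfrak G_{ll,K}(v)]=0$. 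This resolves the ``does the Berezinian survive the limit'' worry you raise. You do, however, omit one step: commutativity is obtained for the family $\mathfrak G_{lk,K}(u)$ of \eqref{eq:diff-L-Gaudin2}, not directly for the $\mathcal G_{r,K}(u)$ of \eqref{eq:Ber-Gaudin}, and one must invoke that the two families generate the same subalgebra (the paper cites \cite[Proposition 4.4]{HM20} for this in Section \ref{sec:more-G-Trans}). Your first route for part (1) is not really a proof but a deferral to the Talalaev-type computation in \cite{MR14}, which is exactly what the paper does.

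For part (2) the equivariance identity you start from is correct --- it is the image of \eqref{eq com gll} under $\mathrm{gr}_{-1}$ --- but the phrase ``infinitesimal conjugation of the auxiliary factor by a fixed invertible \emph{even} matrix'' is not literally available for the odd generators $e_{ab}$ with $|a|\neq|b|$: there the auxiliary matrix $E_{ba}$ is odd, so $\exp(tE_{ba})$ is not an even matrix and the conjugation-invariance of $\mathrm{Ber}$ cannot be applied as stated. As written your argument only establishes commutation with $\mathrm U(\gl_m\oplus\gl_n)$, not with all of $\UglMN$. The gap is fixable by the standard device of adjoining an odd parameter $\theta$ and conjugating by the even invertible matrix $1+\theta E_{ba}$ (checking that this preserves the Manin property so that the Berezinian identity applies), or more economically by deducing part (2) from part (2) of the XXX lemma with $Q$ the identity via the same degeneration used for part (1), since $\mathrm{gr}$ carries the copy of $\UglMN$ inside $\YglMN$ onto the constant subalgebra $\UglMN\subset\mathrm U(\glMN[x])$. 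Either repair should be made explicit before the argument is complete.
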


\subsection{Bethe vectors}
Let $M_1,\dots,M_\ell$ be $\glMN$-modules, $\bm z=(z_1,\dots,z_\ell)$ a sequence of complex numbers. Consider the tensor product of evaluation $\glMN[x]$-modules $ M\llp\bm z\rrp =M_1\llp z_1\rrp\otimes\cdots \otimes M_\ell\llp z_\ell\rrp$. Then we have
\[
L_{ab}(u)\big|_{M(\bm z)}=\ka_b\sum_{i=1}^\ell\frac{e_{ba}^{(i)}}{u-z_i}
\]
and
 the operator $\mathcal G_{r,K}^{M}(u;\bm z)=\mathcal G_{r,K}(u)|_{ M(\bm z)}$, for each $r\in\Z_{\gge 0}$, is a rational function in $u,\bm z$ with the denominator $\prod_{i=1}^\ell(u-z_i)^{r}$. We call the operators $\mathcal G_{r,K}^{M}(u;\bm z)$ the \emph{transfer matrices of the Gaudin model} on $M\llp\bm z\rrp$ associated with $\glMN$. 
 
We are interested in the case when $M_1,\dots, M_\ell$ are highest weight $\glMN$-modules with highest weights $\La_1,\dots,\La_\ell$, where $\La_i=(\La_i^1,\dots,\La_i^N)$, and highest weight vectors $v_1,\dots,v_\ell$. Set $\bLa=(\La_1,\dots,\La_\ell)$. In this case, the vector $ v^+=v_1\otimes\cdots\otimes v_{\ell}$ is a singular weight vector of $\mathrm{U}(\glMN[x])$ in $M\llp\bm z\rrp$,
\beq\label{eq:wt}
L_{aa}(u)v^+=v^+\kappa_a\ \Big(\sum_{i=1}^\ell\frac{\La_i^a}{u-z_i}\Big),\quad 1\lle a\lle N.
\eeq

Now we assume that $K=\sum_{a=1}^N K_aE_{aa}\in\EndV$ is diagonal and use similar notations as in Section \ref{sec:bv}. 

Let $\bm\xi=(\xi^1,\dots,\xi^{N-1})$ be a sequence of nonnegative integers. Consider an expression $\mathbb F_{\bm\xi}(\bm t)$ in $|\bm\xi|$ variables $\bm t=(t_1^1,\dots,t^{1}_{\xi^1},\dots,t_1^{N-1},\dots,t^{N-1}_{\xi^{N-1}})$ with coefficients in $\mathrm{U}(\glMN[x])$ which will be defined later in \eqref{eq:BV-gaudin-recur} of Section \ref{sec:bv-recur-Gaudin}. Here we only need to notice that $\mathbb F_{\bm\xi}(\bm t)$ is obtained from $\bB_{\bm\xi}(\bm t)$ by taking certain gradation, see Proposition \ref{prop:bv-gr}.
Apply $\mathbb F_{\bm\xi}(\bm t)$ to $v^+$ and renormalize it so that the function
\beq\label{eq:off-shell-bv-gaudin}
\mathbb F_{\bm \xi}^{v^+}(\bm t;\bm z)=\mathbb F_{\bm \xi}(\bm t)v^+\ \prod_{a=1}^{N-1}\prod_{i=1}^{\xi^a}\prod_{j=1}^{\ell}(t_i^a-z_\ell)\prod_{a=1}^{N-2}\prod_{i=1}^{\xi^a}\prod_{j=1}^{\xi^{a+1}}(t_j^{a+1}-t_i^a)
\eeq
is a polynomial in $\bm t$, $\bm z$, see Section \ref{sec:bv-recur-Gaudin}. We call $\mathbb F_{\bm \xi}^{v^+}(\bm t;\bm z)$ an \emph{off-shell Bethe vector} for the Gaudin model on $M\llp\bm z\rrp $ associated with $\glMN$.

Let $\bm y=(y_1,\dots,y_{N-1})$ be the sequence of polynomials associated to $\bm t$ and $\bm \xi$. The system of algebraic equations in $|\bm\xi|$ variables $\bm t$,
\beq\label{eq:Gaudinbae}
\begin{split}
K_a-K_{a+1}+\sum_{j=1}^\ell&\frac{\ka_a\La_j^a-\ka_{a+1}\La_j^{a+1}}{t_i^a-z_j}+\frac{\ka_ay_{a-1}'(t_i^a)}{y_{a-1}(t_i^a)}\\-&\frac{(\ka_a+\ka_{a+1})y_{a}''(t_i^a)}{2y_a'(t_i^a)}+\frac{\ka_{a+1}y_{a+1}'(t_i^a)}{y_{a+1}(t_i^a)}=0,
\end{split}
\eeq
$1\lle a\lle N-1$, $1\lle i\lle \xi^a$, is called the \emph{Bethe ansatz equation}, see \cite[Equation (4.2)]{MVY15}, which is usually written in the form,
\beq\label{eq:Gaudinbae2}
\mathfrak K_{\bm\xi,K}^{a,i}(\bm t;\bm z;\bLa)=0,
\eeq
where
\beq\label{eq:define-BAE-func}
\begin{split}
\mathfrak K_{\bm\xi,K}^{a,i}(\bm t;\bm z;\bLa)=K_a-K_{a+1}+\sum_{j=1}^\ell\frac{\ka_a\La_j^a-\ka_{a+1}\La_j^{a+1}}{t_i^a-z_j}+&\sum_{j=1}^{\xi^{a-1}}\frac{\ka_a}{t_i^a-t_{j}^{a-1}}\\-\sum_{j=1,j\ne i}^{\xi^{a}}\frac{\ka_a+\ka_{a+1}}{t_i^a-t_{j}^{a}}+&\sum_{j=1}^{\xi^{a+1}}\frac{\ka_{a+1}}{t_i^a-t_{j}^{a+1}},
\end{split}
\eeq
$1\lle a<N$, and $1\lle i\lle \xi^a$. We always assume that for a solution $\tl{\bm t}=(\tl t_1^1,\dots,\tl t_{\xi^{N-1}}^{N-1})$ of system \eqref{eq:Gaudinbae2}, any denominators in these equations does not vanish unless the corresponding numerator is zero. 

When $\tl{\bm t}$ is a solution of the Bethe ansatz equation \eqref{eq:bae}, we say that the vector $\mathbb F_{\bm \xi}^{v^+}(\tl{\bm t};\bm z)$ is an \emph{on-shell Bethe vector}.

\subsection{Main results for Gaudin models}

For $1\lle a\lle N$, define 
\beq\label{eq:chi-special-Gaudin}
\mathfrak X_{\bm\xi,K}^a(u;\bm t;\bm z;\bLa)=K_a+\ka_a\Big(\sum_{j=1}^\ell \frac{\La_j^a}{u-z_j}+\frac{y_{a-1}'(u)}{y_{a-1}(u)}-\frac{y_{a}'(u)}{y_{a}(u)}\Big),
\eeq
where $\bm y=(y_1,\dots,y_{N-1})$ is the sequence of polynomials associated to $\bm t$ and $\bm \xi$.

\begin{thm}\label{thm:main-tech-Gaudin}
Let $K$ be a diagonal matrix. If $M_1,\dots, M_\ell$ are highest weight $\glMN$-modules with highest weights $\La_1,\dots,\La_\ell$ and $\tl{\bm t}$ is an isolated solution of the Bethe ansatz equation \eqref{eq:Gaudinbae2}, then we have
\[
\mathfrak D_K(u,\pa_u )\mathbb F_{\bm\xi}^{v^+}(\tl{\bm t};\bm z)=\mathbb F_{\bm\xi}^{v^+}(\tl{\bm t};\bm z)\mathop{\overrightarrow\prod}\limits_{1\lle a\lle N}\big(\pa_u-\mathfrak X_{\bm\xi,K}^a(u;\tl{\bm t};\bm z;\bLa)\big)^{\ka_a},
\]	
where $\mathfrak D_K(u,\pa_u)$ is defined in \eqref{eq:Ber-Gaudin}.
\end{thm}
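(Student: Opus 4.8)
The plan is to derive Theorem~\ref{thm:main-tech-Gaudin} from the XXX result by passing to the quasi-classical (Gaudin) limit of the super Yangian, starting from the off-shell identity of Corollary~\ref{cor:main-tech}. Introduce a formal parameter $\hbar$ and the substitution $\varrho_\hbar$ sending $u\mapsto\hbar^{-1}u$, $z_j\mapsto\hbar^{-1}z_j$, $t_i^a\mapsto\hbar^{-1}t_i^a$, with twist $Q=Q(\hbar)$, $Q_a=e^{\hbar K_a}$. On a tensor product of evaluation modules the sign conventions chosen for $L(u)$ to match \eqref{eq:eval} give the power-series expansion
\[
T_{ab}(\hbar^{-1}u)\big|_{M(\hbar^{-1}\bm z)}=\delta_{ab}+\hbar\,L_{ab}(u)\big|_{M\llp\bm z\rrp}+O(\hbar^2),
\]
while $Q^\dag=1+\hbar K+O(\hbar^2)$ (here $K^\dag=K$) and $e^{-\hbar\pa_u}=1-\hbar\pa_u+O(\hbar^2)$; hence $\varrho_\hbar(1-Z_Q(u,\pa_u))\big|_{M(\hbar^{-1}\bm z)}=\hbar\,\mathfrak Z_K(u,\pa_u)\big|_{M\llp\bm z\rrp}+O(\hbar^2)$ as an identity of Manin matrices. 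Since the quantum Berezinian \eqref{eq:ber} of an $N\times N$ Manin matrix is homogeneous of weight $\mathscr N=m-n$ under scaling of the matrix and varies continuously with it, applying $\mathrm{Ber}$ and using \eqref{eq:Ber-XXX}, \eqref{eq:Ber-Gaudin} yields
\[
\varrho_\hbar\big(\mc D_Q(u,\pa_u)\big)\big|_{M(\hbar^{-1}\bm z)}=\hbar^{\mathscr N}\Big(\mathfrak D_K(u,\pa_u)\big|_{M\llp\bm z\rrp}+O(\hbar)\Big).
\]

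Next I would record the limits of the remaining data, all by direct expansion. Comparing \eqref{eq:chi-special} with \eqref{eq:chi-special-Gaudin} one gets $\varrho_\hbar\big(1-\mathscr X_{\bm\xi,Q}^a(u;\bm t;\bm z;\bLa)e^{-\pa_u}\big)=\hbar\big(\pa_u-\mathfrak X_{\bm\xi,K}^a(u;\bm t;\bm z;\bLa)\big)+O(\hbar^2)$, so, since $\sum_a\ka_a=\mathscr N$,
\[
\varrho_\hbar\big(\mc D_Q(u,\pa_u;\bm t;\bm z;\bLa)\big)=\hbar^{\mathscr N}\Big(\mathop{\overrightarrow\prod}\limits_{1\lle a\lle N}\big(\pa_u-\mathfrak X_{\bm\xi,K}^a(u;\bm t;\bm z;\bLa)\big)^{\ka_a}+O(\hbar)\Big),
\]
which is the operator on the right-hand side of the theorem. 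Expanding the two sides of the XXX Bethe equation \eqref{eq:bae} under $\varrho_\hbar$, their leading $\hbar$-coefficients agree (they cancel in the difference because $\ka_a^2=\ka_{a+1}^2=1$), and the next coefficient of the difference is, up to an explicit nonzero factor, the Gaudin Bethe function $\mathfrak K_{\bm\xi,K}^{a,i}(\bm t;\bm z;\bLa)$ of \eqref{eq:define-BAE-func}; equivalently, the scalar $S_{c,i}(\bm t)$ defined by $\mathscr I_{\bm\xi,Q}^{c,i}(\bm t)v^+=v^+S_{c,i}(\bm t)$ through \eqref{eq:I} and \eqref{eq:high-ell-wt} degenerates under $\varrho_\hbar$, up to a shift of $\hbar$-degree, to a nonzero multiple of $\mathfrak K_{\bm\xi,K}^{c,i}(\bm t;\bm z;\bLa)$. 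Finally, by Proposition~\ref{prop:bv-gr}, $\varrho_\hbar\big(\bB_{\bm\xi}^{v^+}(\bm t;\bm z)\big)=\hbar^{-d}\big(\mathbb F_{\bm\xi}^{v^+}(\bm t;\bm z)+O(\hbar)\big)$ for a fixed integer $d$ (the top gradation degree, absorbing also the difference between the renormalizations \eqref{eq:off-shell-bv} and \eqref{eq:off-shell-bv-gaudin}).

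To finish, I would evaluate the identity of Corollary~\ref{cor:main-tech} on the highest weight vector $v^+$ and renormalize: since $\YMNplus v^+=0$ and $T_{aa}(u)v^+=\zeta_a(u)v^+$, it reads $\mc D_Q(u,\pa_u)\bB_{\bm\xi}^{v^+}(\bm t;\bm z)=\bB_{\bm\xi}^{v^+}(\bm t;\bm z)\,\mc D_Q(u,\pa_u;\bm t;\bm z;\bLa)+\widetilde{\mathscr U}(u;\bm t;\bm z)$ with $\widetilde{\mathscr U}$ the renormalized image of $\mathscr U_{\bm\xi,Q}(u;\bm t)v^+$. By the description of $I_{\bm\xi,\bm t,Q}$, $\widetilde{\mathscr U}$ is a finite sum of expressions each carrying a factor $S_{c,i}(\bm t)$; therefore, applying $\varrho_\hbar$, dividing by $\hbar^{\mathscr N-d}$ and extracting the coefficient of the leading power of $\hbar$, the contribution of $\widetilde{\mathscr U}$ lies in the ideal generated by the functions $\mathfrak K_{\bm\xi,K}^{c,i}(\bm t;\bm z;\bLa)$ and hence vanishes after specializing to an isolated solution $\tl{\bm t}$ of \eqref{eq:Gaudinbae2}. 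Combining this with the three displays above gives
\[
\mathfrak D_K(u,\pa_u)\,\mathbb F_{\bm\xi}^{v^+}(\tl{\bm t};\bm z)=\mathbb F_{\bm\xi}^{v^+}(\tl{\bm t};\bm z)\mathop{\overrightarrow\prod}\limits_{1\lle a\lle N}\big(\pa_u-\mathfrak X_{\bm\xi,K}^a(u;\tl{\bm t};\bm z;\bLa)\big)^{\ka_a},
\]
which is the assertion.

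I expect the main obstacle to be the last step: controlling the $\hbar$-expansion of $\widetilde{\mathscr U}$ precisely enough to see that its leading contribution disappears on the Bethe solution. This requires, first, Proposition~\ref{prop:bv-gr}, namely that $\mathbb F_{\bm\xi}^{v^+}$ is genuinely the principal term of the rescaled Bethe vector with no further drop of degree (here the hypothesis that $\tl{\bm t}$ is an isolated solution enters, guaranteeing that this principal term is nonzero), and second, a degree count matching the powers of $\hbar$ in $\widetilde{\mathscr U}$ against those in the two principal terms, which uses the cancellation of the leading terms of \eqref{eq:bae}. This is the classical-limit analog of the passage from Theorem~\ref{thm:main-tech} to Theorem~\ref{thm:main-tech-xxx}, combined with the standard degeneration of $\YglMN$ to $\mathrm{U}(\glMN[x])$.
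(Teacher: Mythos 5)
Your overall strategy (classical limit of the XXX results, with $Q=1+\ve K$, $u\mapsto\ve^{-1}u$, $\bm t\mapsto\ve^{-1}\bm t$, $\bm z\mapsto\ve^{-1}\bm z$, and the expansions of $T$, $\mathrm{Ber}$, the eigenvalue operator, the Bethe-equation functions and the Bethe vector) is exactly the paper's, and your six asymptotic statements match \eqref{eq:asym1}--\eqref{eq:asym3} of Section \ref{app:B}. The divergence is in which XXX statement you degenerate. The paper takes the limit of the \emph{on-shell} Corollary \ref{cor:main-tech-xxx}: since by \eqref{eq:asym4} the rescaled XXX Bethe equations are an $O(\ve)$-perturbation of the Gaudin equations $\mathfrak K_{\bm\xi,K}^{a,i}=0$, the hypothesis that $\tl{\bm t}$ is an \emph{isolated} solution is what produces, for small $\ve$, a family of genuine XXX Bethe roots $\tl{\bm t}(\ve)\to\tl{\bm t}$; on these the error term is exactly zero before any limit is taken, and one only has to pass to the limit in an exact eigenvalue identity. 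You instead degenerate the \emph{off-shell} Corollary \ref{cor:main-tech} and try to kill the limit of $\mathscr U_{\bm\xi,Q}$ using the Gaudin Bethe equations directly. (Incidentally, this means your reading of the isolatedness hypothesis is off: it is not there to make the leading term of the Bethe vector nonzero --- Proposition \ref{prop:bv-gr} is unconditional --- but to guarantee the deformation of the Bethe root; your route, if completed, would not need isolatedness at all, which is precisely the alternative the author points to via \cite[Theorems 8.6 \& 9.2]{MTV06} in the remark after the theorem.)

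The gap in your route is the step you yourself flag as the main obstacle, and it is not cosmetic. Membership of $\mathscr U_{\bm\xi,Q}(u;\bm t)$ in $I_{\bm\xi,\bm t,Q}$ only tells you that each term carries a factor $\mathscr I_{\bm\xi,Q}^{c,i}(\bm t)$, whose eigenvalue on $v^+$ has, after rescaling, vanishing leading term and subleading term proportional to $\mathfrak K_{\bm\xi,K}^{c,i}$. But to conclude that the coefficient of the \emph{relevant} power $\ve^{m-n+|\bm\xi|}$ of the renormalized error is a combination of the $\mathfrak K^{c,i}$, you must know that no term of $\widetilde{\mathscr U}$ sits at a strictly lower power of $\ve$ than the two principal terms: otherwise the coefficient you need involves the \emph{next} term in the expansion of $\mathscr I^{c,i}v^+$, which does not vanish on the Gaudin Bethe roots. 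The definition \eqref{eq:products-I} places no bound on the polynomials $p(u;\bm t)$, so a generic element of $I_{\bm\xi,\bm t,Q}$ can have arbitrarily negative $\ve$-order and the "ideal membership implies vanishing in the limit" argument fails for it; one would need to extract a degree bound on the specific $\mathscr U_{\bm\xi,k,Q}$ produced by the induction in \eqref{eq11.27}, which is the substantial bookkeeping carried out in the even case in \cite{MTV06} and is not supplied here. Until that degree count is done, your proof is incomplete at exactly the point where the paper's choice of the on-shell corollary (plus isolatedness) lets it avoid the issue entirely.
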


The theorem is proved in Section \ref{app:B}. It is an  analog of \cite[Theorem 3]{FFR94} for supersymmetric case in type A. Note that the condition for the solution being isolated in Theorem \ref{thm:main-tech-Gaudin} can be removed. If the solution is not isolated, then the statements can be proved similarly as in \cite[Theorems 8.6 \& 9.2]{MTV06} as we have done for the XXX spin chain case, cf. Section \ref{sec:cor-gaudin}. The theorem for the case of $m=n=1$ was announced in \cite[Theorem 4.11]{Lu22}.

Note that the pseudo-differential operator 
\beq\label{eq:pseudo-oper}
\mathfrak D_Q(u,\pa_u;\tl{\bm t};\bm z;\bLa):=\mathop{\overrightarrow\prod}\limits_{1\lle a\lle N}\big(\pa_u-\mathfrak X_{\bm\xi,K}^a(u;\tl{\bm t};\bm z;\bLa)\big)^{\ka_a}
\eeq
was introduced in \cite[Equation (6.5)]{HMVY19}.

\begin{prop}[{cf. \cite[Theorem 4.3]{MVY15}}]\label{prop:G-singular}
Let $K$ be the zero matrix. If $M_1,\dots, M_\ell$ are highest weight $\glMN$-modules with highest weights $\La_1,\dots,\La_\ell$ and $\tl{\bm t}$ is an isolated solution of the Bethe ansatz equation \eqref{eq:Gaudinbae2}, then the on-shell Bethe vector $\mathbb F_{\bm\xi}^{v^+}(\tl{\bm t};\bm z)$ is a $\glMN$-singular vector in $M_1\otimes\cdots\otimes M_\ell$ with weight 
$$
\Big(\sum_{i=1}^\ell\La_i^1-\xi^1,\sum_{i=1}^\ell\La_i^1+\xi^1-\xi^2,\dots,\sum_{i=1}^\ell\La_i^{N}+\xi^{N-1}\Big).
$$
\end{prop}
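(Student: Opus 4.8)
The plan is to deduce the proposition from Proposition \ref{prop:singular} by passing to the quasi-classical limit, just as Theorem \ref{thm:main-tech-Gaudin} will be obtained from Theorem \ref{thm:main-tech-xxx} in Section \ref{app:B}. The weight claim costs nothing: by Proposition \ref{prop:bv-gr} the vector $\mathbb F_{\bm\xi}^{v^+}(\tl{\bm t};\bm z)$ is the leading graded component of $\bB_{\bm\xi}^{v^+}(\tl{\bm t};\bm z)$, hence, like the latter, a sum of vectors obtained from $v^+$ by applying, for each $1\lle a<N$, exactly $\xi^a$ lowering operators of the $a$-th simple root; so its $\glMN$-weight is the stated one. (This also reads off directly from the recursion \eqref{eq:BV-gaudin-recur} for $\mathbb F_{\bm\xi}(\bm t)$.)

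For the singular property, the crucial observation is that the $\glMN$-module structure on $M_1\otimes\cdots\otimes M_\ell$ used here, namely $e_{ab}\mapsto e_{ab}^{\{0\}}$ acting on $M_i\llp z_i\rrp$ by $e_{ab}|_{M_i}$ independently of $z_i$, is identical to the one used in Proposition \ref{prop:singular}, where $\glMN\hookrightarrow\YglMN$ acts on each evaluation module $M_i(z_i)$ by $e_{ab}|_{M_i}$ as well. Thus being annihilated by $\n_+$ is the same condition in both settings, and it is enough to show that the leading-term operation of Proposition \ref{prop:bv-gr} carries the (by Proposition \ref{prop:singular}, singular) XXX on-shell Bethe vector to the Gaudin one. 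I would do this in four steps: (i) with $Q$ taken to be the identity, rescale the evaluation points $\bm z$ and the Bethe parameters $\bm t$ in a parameter $\hbar$ so that the XXX Bethe ansatz equation \eqref{eq:bae} degenerates, to leading order, to the Gaudin equation \eqref{eq:Gaudinbae2}; (ii) use the isolatedness of $\tl{\bm t}$ to lift it to an $\hbar$-family of solutions of \eqref{eq:bae}; (iii) invoke Proposition \ref{prop:bv-gr} to identify, up to a nonzero scalar, the leading term of the corresponding normalized XXX on-shell Bethe vectors with $\mathbb F_{\bm\xi}^{v^+}(\tl{\bm t};\bm z)$; (iv) conclude, since every member of that family is $\n_+$-annihilated by Proposition \ref{prop:singular} and the leading term of a family of $\n_+$-annihilated vectors is again $\n_+$-annihilated.

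I expect steps (ii)--(iii) to be the only genuine obstacle: one must check that an \emph{isolated} (not a priori nondegenerate) solution of \eqref{eq:Gaudinbae2} genuinely deforms to solutions of \eqref{eq:bae}, and that the induced limit of the normalized Bethe vectors is exactly \eqref{eq:off-shell-bv-gaudin}; this is handled as in \cite[Section 9]{MTV06}, the nondegenerate case being the implicit function theorem and the general isolated case reducing to it. An alternative that bypasses the deformation is to repeat the proof of Proposition \ref{prop:singular} directly in the current-algebra setting: from the block decomposition of $L(u)$ dual to \eqref{eq:T-decom} and the analog of the relation $C_1B(u)-B(u)C_1=A(u)-D(u)$ one first obtains $e_{1a}^{\{0\}}\mathbb F_{\bm\xi}^{v^+}(\tl{\bm t};\bm z)=0$ for $2\lle a\lle N$, and then handles the remaining simple root vectors $e_{a,a+1}^{\{0\}}$, $a\gge 2$, via the nested recursion \eqref{eq:BV-gaudin-recur} and induction on $m$, exactly as in \cite[Proposition 6.2]{MTV06}.
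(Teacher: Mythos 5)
Your main route—obtaining the Gaudin singular-vector property as the $\ve\to 0$ (classical) limit of Proposition \ref{prop:singular}, using the degeneration of the XXX Bethe ansatz equation \eqref{eq:bae} to \eqref{eq:Gaudinbae2} and of the normalized on-shell Bethe vector to $\mathbb F_{\bm\xi}^{v^+}(\tl{\bm t};\bm z)$ via Proposition \ref{prop:bv-gr}, with isolatedness guaranteeing the lift of $\tl{\bm t}$ to a family of XXX solutions—is exactly what the paper does in Section \ref{app:B}, where the asymptotics \eqref{eq:asym4} and \eqref{eq:asym3} are established for precisely this purpose. Your alternative direct argument in the current-algebra setting is a reasonable extra, but the paper's proof is the limiting one you describe first.
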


The proposition is proved in Section \ref{app:B}. 

\section{More on Gaudin models}\label{sec:Gaudin-proof}
\subsection{More on transfer matrices}\label{sec:more-G-Trans}
Recall that $\sN=m-n$ which is the supertrace of identity operator on $\scrV$ and also the super-dimension of $\scrV$. Note that here $\mathscr N$ may be negative. In the rest of this paper, our convention for ratios of factorials involving $\mathscr N$ is that we first assume $\mathscr N$ is a formal variable, then cancel common factors, and finally plug in $\mathscr N=m-n$.

\begin{lem}\label{lem:str-trasnfer}
For any $l\gge k$, and any distinct $i_1,\dots,i_k\in\{1,\dots,l\}$, we have
\begin{align*}
(\str_{\scrV^{\otimes l}}\otimes\mathrm{id})(Q^{(i_1)}\cdots Q^{(i_k)}T^{(i_1,l+1)}(u)\cdots T^{(i_k,l+1)}(u-k+1)\mathbb A_{\{l\}}^{(1\cdots l)})&\\ = \frac{k!(\mathscr N -k)!}{l!(\mathscr N -l)!}\mathscr T_{k,Q}(u)&.
\end{align*}
\end{lem}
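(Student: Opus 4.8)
The plan is to reduce the supertrace over $\scrV^{\otimes l}$ of the antisymmetrized product of $l$ copies of $T$ (only $k$ of which are "twisted" by $Q$) to the defining transfer matrix $\mathscr T_{k,Q}(u)$ by peeling off the $l-k$ untwisted tensor factors one at a time. First I would use the supercyclicity of the supertrace together with the $\mathfrak S_l$-invariance afforded by \eqref{eq:Att} and Lemma \ref{lem:anti-R-relation} to reduce to the case $\{i_1,\dots,i_k\}=\{1,\dots,k\}$, so that we are looking at
\[
(\str_{\scrV^{\otimes l}}\otimes\mathrm{id})\big(Q^{(1)}\cdots Q^{(k)}T^{(1,l+1)}(u)\cdots T^{(k,l+1)}(u-k+1)\mathbb A_{\{l\}}^{(1\cdots l)}\big).
\]
Since $\mathbb A_{\{l\}}=\mathbb A_{\{l\}}\mathbb A_{\{k\}}^{(1\cdots k)}\mathbb A_{\{l-k\}}^{(k+1,\dots,l)}$ and $\mathbb A_{\{k\}}^{(1\cdots k)}$ commutes past the first $k$ copies of $T$ by \eqref{eq:Att}, the expression restricts to $\scrV^{\wedge k}\otimes\scrV^{\wedge(l-k)}$ in the first $l$ factors, and the first $k$ factors assemble into $T^{\wedge k}(u)$ times $Q^{\wedge k}$.

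The key step is then the following "partial trace" identity: for the remaining $l-k$ untwisted factors, one has
\[
(\str_{\scrV^{\wedge(l-k)}}\otimes\mathrm{id})\big(\mathbb A_{\{l\}}^{(1\cdots l)}\big)=\frac{k!\,(\mathscr N-k)!}{l!\,(\mathscr N-l)!}\,\mathbb A_{\{k\}}^{(1\cdots k)}
\]
as operators on $\scrV^{\wedge k}$, where the trace is over the factors $k+1,\dots,l$. This is the purely combinatorial heart of the argument: it is the statement that contracting $l-k$ slots of the full antisymmetrizer against the supertrace (i.e. against the identity, with signs) scales the $k$-fold antisymmetrizer by a ratio of "super-binomial" coefficients $\binom{\mathscr N}{l}/\binom{\mathscr N}{k}$ reinterpreted in the formal-variable convention fixed just before the lemma. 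I would prove it by induction on $l-k$, reducing to the single-contraction case $(\str_{(l)}\otimes\mathrm{id})(\mathbb A_{\{l\}})=\frac{\mathscr N-l+1}{l}\mathbb A_{\{l-1\}}$ on $\scrV^{\wedge(l-1)}$; this one-step identity follows from the recursion $\mathbb A_{\{l\}}=\frac1l\mathbb A_{\{l-1\}}^{(1\cdots l-1)}\big(1-\sum_{i=1}^{l-1}P^{(i,l)}(-1)^{\cdots}\big)\mathbb A_{\{l-1\}}^{(1\cdots l-1)}$ (equivalently from Lemma \ref{lem:anti-R-relation} with $R(u)=u+P$), upon taking $\str$ over the last factor using $(\str_{\scrV}\otimes\mathrm{id})(1)=\mathscr N$ and $(\str_{\scrV}\otimes\mathrm{id})(P^{(i,l)})=\id$ restricted appropriately. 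Once this operator identity is in hand, I would substitute it into the reduced expression, pull the scalar out, and recognize what remains as exactly $(\str_{\scrV^{\wedge k}}\otimes\mathrm{id})(Q^{\wedge k}T^{\wedge k}(u))=\mathscr T_{k,Q}(u)$ by \eqref{eq:def-transfer}, using again the $\mathbb A_{\{k\}}$-invariance of $T^{\wedge k}(u)$ and \eqref{eq:Att} to discard the leftover antisymmetrizer under the trace.

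I expect the main obstacle to be the careful bookkeeping of signs and the super-dimension $\mathscr N=m-n$ (which may be negative or zero) throughout the induction establishing the partial-trace identity: one must treat $\mathscr N$ as a formal variable exactly as in the convention preceding the lemma so that ratios such as $(\mathscr N-k)!/(\mathscr N-l)!$ make sense as the polynomial $(\mathscr N-k)(\mathscr N-k-1)\cdots(\mathscr N-l+1)$, and one must check that the supertrace contractions $(\str_\scrV\otimes\mathrm{id})(P)=\id$ and the signs $(-1)^{|b|}$ in $P=\sum E_{ab}\otimes E_{ba}(-1)^{|b|}$ combine correctly with the signs appearing in $\mathbb A_{\{l\}}$. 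The rest — the cyclicity reductions and the final identification with $\mathscr T_{k,Q}(u)$ — is routine given the machinery of Section \ref{sec:fusion}.
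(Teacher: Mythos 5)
Your proposal is correct and follows essentially the same route as the paper: reduce to $\{i_1,\dots,i_k\}=\{1,\dots,k\}$ by conjugating with the super flip operators and using supercyclicity of the supertrace, then apply the partial-supertrace identity $(\mathrm{id}^{\otimes k}\otimes \str_{\scrV^{\otimes (l-k)}})\mathbb A_{\{l\}}=\frac{k!(\mathscr N -k)!}{l!(\mathscr N -l)!}\mathbb A_{\{k\}}$ and recognize the result as the defining formula \eqref{eq:def-transfer} for $\mathscr T_{k,Q}(u)$. The only difference is that you supply an inductive proof of the partial-trace identity (via the recursion $\mathbb A_{\{l\}}=\tfrac1l\mathbb A_{\{l-1\}}(1-\sum_i P^{(i,l)})$ and $\str_\scrV(P)=\id$), which the paper states without proof; your computation of the scalar $\prod_{j=k+1}^{l}\tfrac{\mathscr N-j+1}{j}$ matches the claimed coefficient.
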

\begin{proof}
If $l=k$ and $i_j=j$ for all $1\lle j\lle k$, then the statement is equivalent to \eqref{eq:def-transfer}. For general cases, it follows from the equalities
\[
P^{(ij)}Q^{(i)}T^{(i,l+1)}=Q^{(j)}T^{(j,l+1)}P^{(ij)},\quad P^{(ij)}\mathbb A_{\{l\}}=\mathbb A_{\{l\}}P^{(ij)},
\]
the cyclicity of supertrace, and the formula
\[
(\mathrm{id}^{\otimes k}\otimes \str_{\scrV^{\otimes (l-k)}})\mathbb A_{\{l\}}=\frac{k!(\mathscr N -k)!}{l!(\mathscr N -l)!}\mathbb A_{\{k\}}.\qedhere
\]
\end{proof}
\begin{rem}
Note that since $k\lle l$, the denominator would not be zero after cancellation. However, unlike the even case, the number $\dfrac{k!(\mathscr N -k)!}{l!(\mathscr N -l)!}$ can be zero for certain $l$ and $k\in \{1,\dots,l\}$. 	
\end{rem}

Define another family of difference operators in $\YglMN[[u^{-1},\pa_u]]$, cf. \cite{Tal06,MTV06}, 
\beq\label{eq:diff-L-XXX}
\mathcal D_{l,Q}(u,\pa_u)=(\str_{\scrV^{\otimes l}}\otimes\mathrm{id})\Big(\Big(\mathop{\overrightarrow\prod}\limits_{1\lle i\lle l} \big(1-Q^{(i)}T^{(i,l+1)}(u)e^{-\pa_u}\big)\Big)\mathbb A_{\{l\}}^{(1\cdots l)}\Big)
\eeq
for $l\in\Z_{>0}$. By Lemma \ref{lem:str-trasnfer}, we have the following corollary.
\begin{cor}\label{cor:diff}
For $l\in\Z_{>0}$, we have
\[
\mathcal D_{l,Q}(u,\pa_u)=\frac{1}{(\mathscr N-l)!}\sum_{k=0}^l(-1)^k\frac{(\mathscr N-k)!}{(l-k)!}\mathscr T_{k,Q}(u)e^{-k\pa_u}.
\]
\end{cor}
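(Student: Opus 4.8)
The plan is to compute $\mathcal D_{l,Q}(u,\pa_u)$ directly from the definition \eqref{eq:diff-L-XXX} by expanding the ordered product and reducing each resulting term via Lemma \ref{lem:str-trasnfer}. First I would expand
\[
\mathop{\overrightarrow\prod}\limits_{1\lle i\lle l}\big(1-Q^{(i)}T^{(i,l+1)}(u)e^{-\pa_u}\big)=\sum_{k=0}^l(-1)^k\sum_{1\lle i_1<\cdots<i_k\lle l}Q^{(i_1)}T^{(i_1,l+1)}(u)e^{-\pa_u}\cdots Q^{(i_k)}T^{(i_k,l+1)}(u)e^{-\pa_u},
\]
where the factors in each summand appear in increasing order of index because $\mathop{\overrightarrow\prod}$ puts the $i=1$ factor on the left.

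Next I would push all the shift operators to the right in each summand. Since $e^{-\pa_u}$ commutes with the constant matrices $Q^{(i)}$ and satisfies $e^{-\pa_u}T^{(j,l+1)}(u)=T^{(j,l+1)}(u-1)e^{-\pa_u}$, the summand indexed by $i_1<\cdots<i_k$ becomes
\[
Q^{(i_1)}T^{(i_1,l+1)}(u)\,Q^{(i_2)}T^{(i_2,l+1)}(u-1)\cdots Q^{(i_k)}T^{(i_k,l+1)}(u-k+1)\,e^{-k\pa_u}.
\]
Because $Q$ is even and $Q^{(i_a)}$ acts on a tensor leg disjoint from the legs $\{i_b,l+1\}$ of $T^{(i_b,l+1)}$ for $b\ne a$, all these matrices commute, so I can collect the $Q$'s to the front and rewrite the summand as $Q^{(i_1)}\cdots Q^{(i_k)}\,T^{(i_1,l+1)}(u)\cdots T^{(i_k,l+1)}(u-k+1)\,e^{-k\pa_u}$.

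Then I would multiply on the right by $\mathbb A_{\{l\}}^{(1\cdots l)}$, apply $\str_{\scrV^{\otimes l}}\otimes\mathrm{id}$, and invoke Lemma \ref{lem:str-trasnfer} with the $k$ distinct indices $i_1,\dots,i_k$ to replace each such term by $\dfrac{k!(\mathscr N-k)!}{l!(\mathscr N-l)!}\mathscr T_{k,Q}(u)$. Summing over the $\binom{l}{k}$ subsets of size $k$ and over $k=0,\dots,l$, the combinatorial factor simplifies to $\binom{l}{k}\dfrac{k!(\mathscr N-k)!}{l!(\mathscr N-l)!}=\dfrac{(\mathscr N-k)!}{(l-k)!\,(\mathscr N-l)!}$, which gives exactly the claimed identity; here the factorials involving $\mathscr N$ are read off with the formal-variable convention fixed just before Lemma \ref{lem:str-trasnfer}. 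The only slightly delicate point is the bookkeeping with the shift operators together with the disjointness of tensor legs used to move the $Q$'s to the front; the remaining step is the routine binomial simplification above.
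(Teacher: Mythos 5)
Your proposal is correct and is exactly the argument the paper intends: the paper derives Corollary \ref{cor:diff} directly from Lemma \ref{lem:str-trasnfer} without spelling out the expansion, and your expansion of the ordered product, commutation of the shift operators and of the even matrices $Q^{(i)}$ past the $T^{(j,l+1)}$ on disjoint tensor legs, and the binomial count of the $\binom{l}{k}$ index subsets fill in precisely those routine steps.
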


\begin{rem}
	It follows from \eqref{eq:Att} and $(\mathbb A_{\{l\}})^2=\mathbb A_{\{l\}}$ that 
\begin{align*}
\mathop{\overrightarrow\prod}\limits_{1\lle i\lle l} \big(1-Q^{(i)}T^{(i,l+1)}(u)e^{-\pa_u}\big)& \mathbb A_{\{l\}}^{(1\cdots l)}\\
= \mathbb A_{\{l\}}^{(1\cdots l)}	&\mathop{\overrightarrow\prod}\limits_{1\lle i\lle l} \big(1-Q^{(i)}T^{(i,l+1)}(u)e^{-\pa_u}\big) \mathbb A_{\{l\}}^{(1\cdots l)},
\end{align*}
see also \emph{\cite[Proposition 2.1]{MR14}}.\qed
\end{rem}

There are also another family of Gaudin transfer matrices, see \cite{MR14}, defined as follows. For each $l\in\Z_{>0}$, consider the formal differential operator,
\beq\label{eq:diff-L-Gaudin}
\mathfrak D_{l,K}(u,\pa_u)=(\str_{\scrV^{\otimes l}}\otimes\mathrm{id})\Big(\Big(\mathop{\overrightarrow\prod}\limits_{1\lle i\lle l} \big(\pa_u-K^{(i)}-L^{(i,l+1)}(u)\big)\Big)\mathbb A_{\{l\}}^{(1\cdots l)}\Big).
\eeq
Let $\mathfrak G_{lk,K}(u)\in \mathrm{U}(\glMN[x])[[u^{-1}]]$, $l\in\Z_{>0}$ and $1\lle k\lle l$, be the coefficients of $\mathfrak G_{l,K}(u,\pa_u)$,
\beq\label{eq:diff-L-Gaudin2}
\mathfrak D_{l,K}(u,\pa_u)=\sum^{l}_{k=0}(-1)^k \mathfrak G_{lk,K}(u)\pa_u^{l-k}.
\eeq
Let $w$ be a formal variable. It is known from \cite[Theorem 2.13]{MR14} that
\beq\label{eq:Ber-G-w}
\mathrm{Ber}(1+w\mathfrak Z_K(u,\pa_u))=\sum_{k=0}^\infty w^k\mathfrak D_{l,K}(u,\pa_u).
\eeq

\subsection{Filtration on $\YglMN$}\label{sec:filtration}
Consider a filtered superalgebra $A$ with an ascending filtration $\cdots\subset A_{s-1}\subset A_s\subset A_{s+1}\subset \cdots\subset A$. Denote by $\mathrm{gr}_{s}^A:A_s\to A_s/A_{s-1}$ the natural projection and identify the quotient spaces with the corresponding homogeneous subspaces in the graded superalgebra 
\[
\mathrm{gr}A=\bigoplus_{r\in\Z}A_r/A_{r-1}.
\]
Then $\mathrm{gr}_{s}^A$ is regarded as a map from $A_s$ to $\mathrm{gr}A$. We will simply write $\mathrm{gr}_{s}$ for $\mathrm{gr}_{s}^A$ when the superalgebra $A$ is clear in the context. The superalgebra $\EndV\otimes A$ also has a filtration induced from that on $A$.

The super Yangian $\YglMN$ has a degree function defined by $\deg T_{ab}^{\{s\}}=s-1$ for $1\lle a,b\lle N$ and $s\in\Z_{>0}$. Then $\YglMN$ is a filtered superalgebra with $\YglMN_s$ being the subspace spanned by elements whose degrees are at most $s$. It is well known that $\mathrm{gr}(\YglMN)=\mathrm{U}(\gl_{m|n}[x])$ and $\mathrm{gr}_{s-1}(T_{ab}^{\{s\}})=(-1)^{|b|}e_{ba}^{\{s\}}$.

The filtration on $\YglMN$ can be extended to the superalgebra $\YglMN[[u^{-1},\pa_u]]$: $\deg u^{-1}=\deg\pa_u=-1$. Clearly, $\mathrm{gr}(\YglMN[[u^{-1},\pa_u]])=\mathrm{U}(\gl_{m|n}[x])[[u^{-1},\pa_u]]$. The series $T_{ab}(u)-\delta_{ab}\in \YglMN[[u^{-1},\pa_u]]$ has degree $-1$ and
\beq\label{eq:gr-T}
\mathrm{gr}_{-1}(T_{ab}(u)-\delta_{ab})=L_{ab}(u),\quad \mathrm{gr}_{-1}(T(u)-1)=L(u).
\eeq

We assume further that $Q$ in \eqref{eq:def-transfer} is a series in $\EndV[[\zeta]]$ instead of simply in $\EndV$. Then transfer matrices are power series in $u^{-1}$ and $\zeta$ with coefficients in $\YglMN$. Hence we consider transfer matrices are elements in $\YglMN[[u^{-1},\pa_u,\zeta]]$. Results and construction adapting to this new assumption naturally generalizes to the described setting. Extend further the filtration on $\YglMN[[u^{-1},\pa_u]]$ to $\YglMN[[u^{-1},\pa_u,\zeta]]$ by $\deg \zeta=-1$. Similarly, in the Gaudin model case, we assume $K$ to be an element in $\EndV[[\zeta]]$.

By convention, set $\mathscr T_{0,Q}(u)=1$. For any $k\in\Z_{\gge 0}$, set
\[
\mathscr S_{k,Q}(u)=\frac{1}{(\sN-k)!}\sum_{i=0}^k(-1)^{k-i}\frac{(\sN-i)!}{(k-i)!}\mathscr T_{i,Q}(u).
\]
Equivalently, for $k,l\in\Z_{\gge 0}$, $\mathscr S_{k,Q}(u)$ satisfy
\beq\label{eq:new-trans}
\sum_{k=0}^l(-1)^k\frac{(\sN-k)!}{(l-k)!}\mathscr S_{k,Q}(u)y^{l-k}=\sum_{i=0}^l(-1)^i\frac{(\sN-i)!}{(l-i)!}\mathscr T_{i,Q}(u)(y+1)^{l-i},
\eeq
where $y$ is a formal variale.
\begin{prop}\label{prop:gr-operss}
Let $\deg(Q-1)\lle -1$ with $K=\mathrm{gr}_{-1}(Q-1)$. Then $\deg(\mathscr S_{k,Q}(u))=-k$ for all $k\in\Z_{\gge 0}$. Moreover, for any $l\in \Z_{\gge k}$, we have
\beq\label{eq:gr-image}
\mathrm{gr}_{-k}(\mathscr S_{k,Q}(u))=\mathfrak G_{kk,K}(u),\quad\frac{(\mathscr N-k)!}{(\mathscr N-l)!(l-k)!}
\mathrm{gr}_{-k}(\mathscr S_{k,Q}(u))=\mathfrak G_{lk,K}(u).
\eeq
In particular, we have
$$
\frac{(\mathscr N-k)!}{(\mathscr N-l)!(l-k)!}\mathfrak G_{kk,K}(u)=\mathfrak G_{lk,K}(u),
$$
\[
\mathfrak D_{l,K}(u,\pa_u)=\frac{1}{(\mathscr N-l)!}\sum_{k=0}^l(-1)^k\frac{(\mathscr N-k)!}{(l-k)!}\mathfrak G_{kk,K}(u)\pa_u^{l-k}.
\]
\end{prop}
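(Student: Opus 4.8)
The plan is to obtain the statement by taking the associated graded in an identity relating the difference operators $\mathcal D_{l,Q}(u,\pa_u)$ of \eqref{eq:diff-L-XXX} to the series $\mathscr S_{k,Q}(u)$, and then matching with the differential operators $\mathfrak D_{l,K}(u,\pa_u)$ of \eqref{eq:diff-L-Gaudin}. First I would compute $\mathrm{gr}_{-l}$ of $\mathcal D_{l,Q}(u,\pa_u)$ directly from its supertrace formula \eqref{eq:diff-L-XXX}: writing each factor as $1-Q^{(i)}T^{(i,l+1)}(u)e^{-\pa_u}$ with $Q=1+(Q-1)$, $T^{(i,l+1)}(u)=1+(T^{(i,l+1)}(u)-1)$ and $e^{-\pa_u}=1-\pa_u+O(\pa_u^2)$, and using $\deg(Q-1)\le-1$, $\deg(T(u)-1)=-1$ and $\deg\pa_u=-1$, one finds that each factor has filtration degree $\le-1$ with $\mathrm{gr}_{-1}$ equal to $\pa_u-K^{(i)}-L^{(i,l+1)}(u)$, by \eqref{eq:gr-T} and $K=\mathrm{gr}_{-1}(Q-1)$. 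Since $\str_{\scrV^{\otimes l}}$ and left/right multiplication by $\mathbb A_{\{l\}}$ preserve filtration degree and the associated graded is multiplicative, this gives $\deg\mathcal D_{l,Q}(u,\pa_u)\le-l$ and, upon comparison with \eqref{eq:diff-L-Gaudin}, $\mathrm{gr}_{-l}(\mathcal D_{l,Q}(u,\pa_u))=\mathfrak D_{l,K}(u,\pa_u)$.

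Next I would turn the polynomial identity \eqref{eq:new-trans} into an operator identity. In \eqref{eq:new-trans} the variable $y$ is central and occurs only to the right of the (noncommuting) coefficients, so one may substitute $y=e^{\pa_u}-1$, kept on the right. Then $(y+1)^{l-i}\mapsto e^{(l-i)\pa_u}=e^{-i\pa_u}e^{l\pa_u}$, and by Corollary \ref{cor:diff} the right-hand side of \eqref{eq:new-trans} becomes $\mathcal D_{l,Q}(u,\pa_u)\,e^{l\pa_u}$; this produces
\[
\mathcal D_{l,Q}(u,\pa_u)\,e^{l\pa_u}=\sum_{k=0}^l(-1)^k\binom{\sN-k}{l-k}\mathscr S_{k,Q}(u)\,(e^{\pa_u}-1)^{l-k},
\]
where here and below I write $\binom{\sN-k}{l-k}=\tfrac{(\sN-k)!}{(\sN-l)!(l-k)!}$, the well-defined object under the factorial convention of Section \ref{sec:more-G-Trans}.

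From this identity I would first deduce $\deg\mathscr S_{k,Q}(u)\le-k$ by induction on $k$: it holds for $k=0$, and setting $l=k$ above and isolating the $j=k$ summand (whose scalar is $(-1)^k\binom{\sN-k}{0}=(-1)^k$) expresses $\mathscr S_{k,Q}(u)$ as $\pm\mathcal D_{k,Q}(u,\pa_u)e^{k\pa_u}$ minus terms $\binom{\sN-j}{k-j}\mathscr S_{j,Q}(u)(e^{\pa_u}-1)^{k-j}$ with $j<k$, all of degree $\le-k$ by the first step, the induction hypothesis and $\deg(e^{\pa_u}-1)=-1$. Then, applying $\mathrm{gr}_{-l}$ to the displayed identity and using $\mathrm{gr}_0(e^{l\pa_u})=1$, $\mathrm{gr}_{-1}(e^{\pa_u}-1)=\pa_u$ together with the bound just proved, I would obtain $\mathfrak D_{l,K}(u,\pa_u)=\sum_{k=0}^l(-1)^k\binom{\sN-k}{l-k}\,\mathrm{gr}_{-k}(\mathscr S_{k,Q}(u))\,\pa_u^{l-k}$. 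Comparing the coefficient of $\pa_u^{l-k}$ with \eqref{eq:diff-L-Gaudin2} gives $\mathfrak G_{lk,K}(u)=\binom{\sN-k}{l-k}\,\mathrm{gr}_{-k}(\mathscr S_{k,Q}(u))$; specialising $l=k$ yields $\mathrm{gr}_{-k}(\mathscr S_{k,Q}(u))=\mathfrak G_{kk,K}(u)$, which is the first part of \eqref{eq:gr-image}, and substituting back gives the second part of \eqref{eq:gr-image}, the identity $\mathfrak G_{lk,K}(u)=\binom{\sN-k}{l-k}\mathfrak G_{kk,K}(u)$, and the asserted formula for $\mathfrak D_{l,K}(u,\pa_u)$. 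Finally $\deg\mathscr S_{k,Q}(u)=-k$ exactly, since $\mathfrak G_{kk,K}(u)=\mathrm{gr}_{-k}(\mathscr S_{k,Q}(u))\ne0$ (it is a nonzero higher Gaudin Hamiltonian, its image containing the nontrivial series $L(u)$).

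The part I expect to be most delicate is the first step: one must argue carefully that passing to $\mathrm{gr}_{-1}$ factor-by-factor and multiplying really computes $\mathrm{gr}_{-l}$ of the ordered product (multiplicativity of the associated graded of $\YglMN[[u^{-1},\pa_u,\zeta]]$, with the expansion of $e^{-\pa_u}$ contributing only to degrees $\le-1$), and that $\str_{\scrV^{\otimes l}}$ and multiplication by $\mathbb A_{\{l\}}$ are strictly degree-preserving. The non-central substitution in the second step also deserves a line of justification, but it is immediate once one notes the positions of $y$ and of the central element $1$ in \eqref{eq:new-trans}. Everything else is routine bookkeeping with binomial coefficients.
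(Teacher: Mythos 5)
Your proposal is correct and follows essentially the same route as the paper: substitute $y=e^{\pa_u}-1$ into \eqref{eq:new-trans}, invoke Corollary \ref{cor:diff}, read off degrees and associated graded of $\mc D_{l,Q}(u,\pa_u)$ factor by factor from \eqref{eq:diff-L-XXX} using \eqref{eq:gr-T}, and compare with \eqref{eq:diff-L-Gaudin2}. The only cosmetic difference is that you obtain $\deg\mathscr S_{k,Q}(u)\lle -k$ by induction on the triangular system, whereas the paper inverts the relation explicitly via the identity $\sum_{r=0}^s\frac{(-1)^r}{r!(s-r)!}=0$ to write $\mathscr S_{k,Q}(u)$ in terms of the $\mc D_{i,Q}(u,\pa_u)$.
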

\begin{proof}
Setting $y=e^{\pa_u}-1$ in \eqref{eq:new-trans}, by Corollary \ref{cor:diff}, we have
\beq\label{eq:s-ts11}
\sum_{k=0}^l(-1)^k\frac{(\sN-k)!}{(l-k)!}\mathscr S_{k,Q}(u)(e^{\pa_u}-1)^{l-k}=(\sN-l)!\mathcal D_{l,Q}(u,\pa_u)e^{l\pa_u}.	
\eeq
Then one has
\beq\label{eq:s-ts}
(\sN-k)!\mathscr S_{k,Q}(u)=\sum_{i=0}^k(-1)^i\frac{(\sN-i)!}{(k-i)!}\mathcal D_{i,Q}(u,\pa_u)e^{i\pa_u}(e^{\pa_u}-1)^{k-i}
\eeq
by the standard identity 
$$
\sum_{r=0}^s\frac{(-1)^r}{r!(s-r)!}=0,\quad s\gge 1.
$$
Note that $\deg(T(u)-1)=\deg(e^{\pa_u}-1)=-1$ and $\deg(Q-1)\lle -1$, we conclude from \eqref{eq:diff-L-XXX} that $\deg(\mathcal D_{i,Q}(u,\pa_u))=-i$. Therefore, $\deg(\mathscr S_{k,Q}(u))=-k$ by \eqref{eq:s-ts}.

Using $\mathrm{gr}_{-1}(e^{\pa_u}-1)=\pa_u$ and $\mathrm{gr}_{-1}(T(u)-1)=L(u)$, see \eqref{eq:gr-T}, and computing $\mathrm{gr}_{-l}(\mathcal D_{l,Q}(u,\pa_u))$ by \eqref{eq:s-ts11} and \eqref{eq:diff-L-XXX}, we obtain that
\[
(\sN-l)!\mathfrak D_{l,K}(u,\pa_u)=\sum_{k=0}^l(-1)^k\frac{(\sN-k)!}{(l-k)!}\mathrm{gr}_{-k}(\mathscr S_{k,Q}(u))\pa_u^{l-k},
\]
where we also used \eqref{eq:diff-L-Gaudin}. Now \eqref{eq:gr-image} follows from \eqref{eq:diff-L-Gaudin2}. The rests are now obvious.
\end{proof}

\begin{rem}Note that the last two formulas can also be proved by using similar methods used in Lemma \ref{lem:str-trasnfer} and Corollary \ref{cor:diff}.	\qed
\end{rem}

\subsection{Recurrence of Bethe vectors}\label{sec:bv-recur-Gaudin}
In this section, we define similar maps $$\psi,\qquad \psi(x_1,\dots,x_r),\qquad \tilde\psi(x_1,\dots,x_r)$$ for $\mathrm{U}(\glMN[x]).$ We shall use the same notations for the counterparts in Section \ref{sec:bv-recur}.

Define the embedding $\psi:\mathrm{U}(\gl_{\cN-1}[x])\hookrightarrow\mathrm{U}(\gl_{\cN}[x])$ by the rule \[
\psi(L_{ab}^{\langle\cN-1\rangle}(u))=L_{a+1,b+1}(u),\quad  1\lle a,b\lle N-1.
\]Define a map $\psi(x_1,\dots,x_r):\mathrm{U}(\gl_{\cN-1}[x])\to \mathrm{U}(\gl_{\cN}[x]) \otimes \End(\sW^{\otimes r})$ by
\[
\begin{split}
\psi(x_1,\dots,x_r)(L_{ab}^{\langle\cN-1\rangle})=\ &L_{a+1,b+1}(u)\otimes 1^{\otimes r}\\
&\quad +\sum_{i=1}^r 1\otimes \frac{1^{\otimes(i-1)}\otimes E_{ba}^{\langle\cN-1\rangle}\otimes 1^{\otimes(r-i)} }{u-x_{r+1-i}}.
\end{split}
\]
Define a map $\widetilde \psi:\mathrm{U}(\gl_{\cN-1}[x])\to \mathrm{U}(\gl_{\cN}[x])\otimes\sW^{\otimes r}$ by
\[
\widetilde \psi(x_1,\dots,x_r)=\widetilde \psi(x_1,\dots,x_r)= \psi(x_1,\dots,x_r)( 1\otimes\bfw_1^{\otimes r}).
\]
The following lemmas are straightforward.
\begin{lem}
	We have $\widetilde \psi(x_1,\dots,x_r)(\mathrm{U}(\gl_{\cN-1}[x]\n_\pm^{\langle\cN-1\rangle}[x])\subset \mathrm{U}(\gl_{\cN}[x]\n_\pm[x])\otimes\sW^{\otimes r}$.
\end{lem}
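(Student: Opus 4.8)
The plan is to reduce the inclusion to a single root generator of the left ideal and then to compute $\widetilde\psi(x_1,\dots,x_r)$ on it directly from the formula defining $\psi(x_1,\dots,x_r)$. I shall carry out the case of $\n_+^{\langle\cN-1\rangle}[x]$, which is the version used in the sequel; the case of $\n_-^{\langle\cN-1\rangle}[x]$ is similar. First I would record that, because $\psi(x_1,\dots,x_r)$ is a homomorphism of superalgebras and $\widetilde\psi(x_1,\dots,x_r)(X)=\psi(x_1,\dots,x_r)(X)(1\otimes\bfw_1^{\otimes r})$, one has
\[
\widetilde\psi(x_1,\dots,x_r)(ZX)=\psi(x_1,\dots,x_r)(Z)\,\widetilde\psi(x_1,\dots,x_r)(X),\qquad Z,X\in\mathrm{U}(\gl_{\cN-1}[x]).
\]
Now $\mathrm{U}(\gl_{\cN-1}[x])\,\n_+^{\langle\cN-1\rangle}[x]$ is the left ideal generated by the elements $e_{ab}\otimes x^s$ with $1\lle a<b\lle N-1$, $s\in\Z_{\gge 0}$, and left multiplication by $\psi(x_1,\dots,x_r)(Z)\in\mathrm{U}(\gl_{\cN}[x])\otimes\End(\sW^{\otimes r})$ preserves the subspace $\mathrm{U}(\gl_{\cN}[x])\,\n_+[x]\otimes\sW^{\otimes r}$, since the $\mathrm{U}(\gl_{\cN}[x])$-tensor factor of $\psi(x_1,\dots,x_r)(Z)$ acts on $\mathrm{U}(\gl_{\cN}[x])$ by left multiplication, which a left ideal absorbs. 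Hence it is enough to verify that $\widetilde\psi(x_1,\dots,x_r)(e_{ab}\otimes x^s)\in\mathrm{U}(\gl_{\cN}[x])\,\n_+[x]\otimes\sW^{\otimes r}$ for these $a<b$ and all $s$.

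The one computation to carry out is then the following. Writing $e_{ab}\otimes x^s$ as a sign times the coefficient of $u^{-s-1}$ in $L_{ba}^{\langle\cN-1\rangle}(u)$ and substituting into the defining formula for $\psi(x_1,\dots,x_r)$, one obtains
\[
\psi(x_1,\dots,x_r)(e_{ab}\otimes x^s)=(e_{a+1,b+1}\otimes x^s)\otimes 1^{\otimes r}+(-1)^{|a|}\sum_{i=1}^{r}x_{r+1-i}^{\,s}\;1\otimes\big(E_{ab}^{\langle\cN-1\rangle}\big)^{(i)},
\]
where the parity-preserving identification $\bfw_c\leftrightarrow\bfv_{c+1}$ makes $|a|$ unambiguous and $e_{a+1,b+1}\otimes x^s$ a generator of $\gl_{\cN}[x]$. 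Applying this operator to $1\otimes\bfw_1^{\otimes r}$ and using $E_{ab}^{\langle\cN-1\rangle}\bfw_1=\delta_{b1}\bfw_a$, which vanishes because $b\gge 2$, the correction sum drops out entirely and
\[
\widetilde\psi(x_1,\dots,x_r)(e_{ab}\otimes x^s)=(e_{a+1,b+1}\otimes x^s)\otimes\bfw_1^{\otimes r}.
\]
Since $a<b$ gives $a+1<b+1$, we have $e_{a+1,b+1}\otimes x^s\in\n_+[x]$, so the right-hand side lies in $\mathrm{U}(\gl_{\cN}[x])\,\n_+[x]\otimes\sW^{\otimes r}$; combined with the reduction of the previous paragraph this yields the lemma.

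I do not expect any genuine obstacle here, the statement being just the current-algebra analog of Lemma \ref{lem11.1}. The only thing requiring a little care is the bookkeeping of the $\Z_2$-gradings under the index shift $c\mapsto c+1$ together with the role of $\bfv_1$ as the even vector split off from $\scrV$, but both are already fixed by the conventions of Section \ref{sec:bv-recur-Gaudin}, so the verification is routine.
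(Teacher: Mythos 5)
Your verification of the $\n_+$ half is correct and is exactly the ``straightforward'' check the paper has in mind (the paper offers no written proof): reduce to the generators $e_{ab}\otimes x^s$, $a<b$, of the left ideal via multiplicativity of $\psi(x_1,\dots,x_r)$, and observe that the evaluation correction involves $E_{ab}^{\langle\cN-1\rangle}$ with $E_{ab}^{\langle\cN-1\rangle}\bfw_1=\delta_{b1}\bfw_a=0$ because $b\gge 2$. The sign bookkeeping under $c\mapsto c+1$ is also right, since $(-1)^{|a|}$ in $\gl_{\cN-1}$ equals $(-1)^{|a+1|}$ in $\gl_{\cN}$.

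However, dismissing the $\n_-$ half as ``similar'' hides a genuine problem: the same computation does not go through. A generator $e_{ba}\otimes x^s$ of $\n_-^{\langle\cN-1\rangle}[x]$ (with $a<b$) sits inside $L_{ab}^{\langle\cN-1\rangle}(u)$, so the evaluation correction now involves $E_{ba}^{\langle\cN-1\rangle}$, and $E_{ba}^{\langle\cN-1\rangle}\bfw_1=\delta_{a1}\bfw_b$ does \emph{not} vanish when $a=1$. Explicitly,
\[
\widetilde\psi(x_1,\dots,x_r)(e_{b1}\otimes x^s)=e_{b+1,2}^{\{s\}}\otimes\bfw_1^{\otimes r}\;\pm\;\sum_{i=1}^r x_{r+1-i}^{\,s}\;1\otimes\bfw_1^{\otimes(i-1)}\otimes\bfw_b\otimes\bfw_1^{\otimes(r-i)},
\]
and the second summand carries $1$ in the $\mathrm{U}(\gl_{\cN}[x])$ slot, which by PBW does not lie in the left ideal $\mathrm{U}(\gl_{\cN}[x])\n_-[x]$; since $\bfw_1^{\otimes r}$ and $\bfw_1^{\otimes(i-1)}\otimes\bfw_b\otimes\bfw_1^{\otimes(r-i)}$ are linearly independent, the whole expression is not in $\mathrm{U}(\gl_{\cN}[x])\n_-[x]\otimes\sW^{\otimes r}$ for $r\gge 1$. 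So the minus half cannot be obtained by mirroring your argument, and as literally stated it appears to fail on these generators; note that the Yangian analogue, Lemma \ref{lem11.1}, is asserted only for $\mathrm{Y}_+$, and only the $\n_+$ half is used in the sequel. You should either restrict your claim to the $\n_+$ case or explain precisely what (corrected) statement you intend to prove for $\n_-$.
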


Similarly, define the embedding $\phi:\mathrm{U}(\gl_{\cN-2}[x])\hookrightarrow \mathrm{U}(\gl_{\cN-1}[x])$ by the rule \[
\phi(L_{ab}^{\langle\cN-2\rangle}(u))=L_{a+1,b+1}^{\langle\cN-1\rangle}(u), \quad 1\lle a,b\lle N-2. 
\]

\begin{lem}
	We have $\widetilde \psi(x_1,\dots,x_r)\circ\phi = (\psi \circ \phi)\otimes\bfw_1^{\otimes r}$.
\end{lem}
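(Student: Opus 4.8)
The plan is to reduce the identity to a one-line computation on the generating series $L_{ab}^{\langle\cN-2\rangle}(u)$ and then propagate it to all of $\mathrm{U}(\gl_{\cN-2}[x])$ by an invariant-subspace argument, parallel to (but simpler than) the proof of Lemma \ref{lem11.2}.

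First I would spell out the two composites on generators. For $1\lle a,b\lle N-2$ one has $\phi(L_{ab}^{\langle\cN-2\rangle}(u))=L_{a+1,b+1}^{\langle\cN-1\rangle}(u)$, so by the definition of $\psi(x_1,\dots,x_r)$,
\[
\psi(x_1,\dots,x_r)\big(\phi(L_{ab}^{\langle\cN-2\rangle}(u))\big)=L_{a+2,b+2}(u)\otimes 1^{\otimes r}+\sum_{i=1}^r 1\otimes\frac{1^{\otimes(i-1)}\otimes E_{b+1,a+1}^{\langle\cN-1\rangle}\otimes 1^{\otimes(r-i)}}{u-x_{r+1-i}},
\]
while $(\psi\circ\phi)(L_{ab}^{\langle\cN-2\rangle}(u))=\psi(L_{a+1,b+1}^{\langle\cN-1\rangle}(u))=L_{a+2,b+2}(u)$. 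The key observation is that the index produced by $\phi$ in the lower position is $a+1\gge 2$, hence $E_{b+1,a+1}^{\langle\cN-1\rangle}\bfw_1=\delta_{a+1,1}\bfw_{b+1}=0$; this is precisely why, in contrast with Lemma \ref{lem11.2}, no projection $\theta$ is needed here.

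Second, I would package this into a module statement. View $\mathrm{U}(\gl_{\cN}[x])\otimes\sW^{\otimes r}$ as a left $\mathrm{U}(\gl_{\cN-2}[x])$-module through the algebra homomorphism $\psi(x_1,\dots,x_r)\circ\phi$. The computation above shows that every coefficient of every $L_{ab}^{\langle\cN-2\rangle}(u)$ maps the subspace $V_0:=\mathrm{U}(\gl_{\cN}[x])\otimes\C\,\bfw_1^{\otimes r}$ into itself and acts on $V_0$ by left multiplication, in the first tensor factor, through $\psi\circ\phi$. Since those coefficients generate $\mathrm{U}(\gl_{\cN-2}[x])$ and $\psi\circ\phi$ is an algebra homomorphism, an induction on the length of a monomial shows that $(\psi(x_1,\dots,x_r)\circ\phi)(X)$ acts on $V_0$ as left multiplication by $(\psi\circ\phi)(X)$ for every $X\in\mathrm{U}(\gl_{\cN-2}[x])$. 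Evaluating on $1\otimes\bfw_1^{\otimes r}$ and using $\widetilde\psi(x_1,\dots,x_r)(X)=\psi(x_1,\dots,x_r)(X)\cdot(1\otimes\bfw_1^{\otimes r})$ gives $\widetilde\psi(x_1,\dots,x_r)(\phi(X))=(\psi\circ\phi)(X)\otimes\bfw_1^{\otimes r}$, which is the assertion.

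There is essentially no obstacle: the entire content is the vanishing $E_{b+1,a+1}^{\langle\cN-1\rangle}\bfw_1=0$ for $a\gge 1$, together with the multiplicativity of $\psi(x_1,\dots,x_r)$ and of $\psi\circ\phi$. The only thing to keep straight is the two separate index shifts $\phi\colon(a,b)\mapsto(a+1,b+1)$ and $\psi\colon(a,b)\mapsto(a+1,b+1)$ and the fact that $\phi$ never outputs $1$ as a lower index, which is exactly what makes the inhomogeneous part of $\psi(x_1,\dots,x_r)$ disappear on $V_0$.
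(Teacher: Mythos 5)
Your argument is correct: the whole content is indeed that $\phi$ shifts the lower index so that $E_{b+1,a+1}^{\langle\cN-1\rangle}\bfw_1=0$, killing the inhomogeneous part of $\psi(x_1,\dots,x_r)$, and multiplicativity then extends the identity from generators to all of $\mathrm{U}(\gl_{\cN-2}[x])$. The paper omits the proof entirely (the lemma is declared straightforward), and your computation is precisely the intended verification, including the correct explanation of why no projection $\theta$ is needed here in contrast with Lemma \ref{lem11.2}.
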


% Given $X\in \sW^{\otimes r}\otimes \mathrm{U}(\gl_{\cN}[x])$, define the components $X^{a_1,\dots,a_r}$ by
% \[
% X=\sum_{a_1,\dots,a_r=1}^{N-1} \bfw_{a_1}\otimes\cdots\otimes \bfw_{a_r}\otimes X^{a_1,\dots,a_r}.
% \]
Recall that $\bar{\bm\xi}=(\xi^2,\dots,\xi^{N-1})$ and $\bar{\bm t}=(t_1^2,\dots,t_{\xi^2}^2;\dots;t_1^{N-1},\dots,t_{\xi^{N-1}}^{N-1})$. Define $\mathbb F_{\bm\xi}(\bm t)$ inductively by
\beq\label{eq:BV-gaudin-recur}
\mathbb F_{\bm\xi}(\bm t)=F^{(1)}(t_1^1)\cdots F^{(\xi^1)}(t^1_{\xi^1})\widetilde \psi(t_1^1,\dots,t^1_{\xi^1})({\mathbb F}^{\langle\cN-1\rangle}_{\bar{\bm\xi}}\big(\bar{\bm t})\big),
\eeq
where $F(u)=(L_{12}(u),\dots,L_{1N}(u))=\sum_{a=1}^{N-1} E_{1,a+1}\otimes L_{1,a+1}(u)$ and its coefficients are treated as elements in $\mathrm{Hom}(\sW,\C)\otimes \mathrm U(\glMN[x])$. 
% In particular,
% \[
% \begin{split}
% \mathbb F_{\bm\xi}(\bm t)=\sum_{a_1,\dots,a_{\xi^1}=1}^{N-1}(-1)^{\eta(1;a_1,\dots,a_{\xi^1})} &\, L_{1,a_1+1}(t_1^1)\cdots L_{1,a_{\xi^1}+1}(t^1_{\xi^1})\\ \times&\,\Big(\widetilde \psi(t_1^1,\dots,t^1_{\xi^1})\big(\mathbb F^{\langle\cN-1\rangle}_{\bar{\bm\xi}}(\bar{\bm t})\big)\Big)^{a_1,\dots,a_{\xi^1}},
% \end{split}
% \]
% where $\eta(1;a_1,\dots,a_{\xi^1})$ is defined in \eqref{???}.

Recall from Section \ref{sec:filtration} the degree function $\deg$ defined on $\YglMN[[u^{-1},\pa_i,\zeta]]$ by $\deg u^{-1}=\deg \pa_u=\deg \zeta=-1$. Extend the degree function to rational expressions in $\bm t$ with coefficients in $\YglMN[[u^{-1},\pa_i,\zeta]]$ by setting $\deg t_i^a=1$ and $\deg(t_i^a-t_j^b)^{-1}=-1$ for all possible $a,b,i,j$. Note that the maps $\psi$, $ \psi(t_1^1,\dots,t^1_{\xi_1})$, and $\widetilde \psi(t_1^1,\dots,t^1_{\xi_1})$ respect the degree function and the projections to the associated graded superalgebras. For instance, if $X\in\YglMN$ and $\deg X=k$, then $\deg\psi(X)=k$ and $\psi(\mathrm{gr}_kX)=\mathrm{gr}_k(\psi(X))$.
\begin{prop}\label{prop:bv-gr}
We have $\deg(\widetilde \bB_{\bm\xi}(\bm t))=-|\bm \xi|$ and $\mathrm{gr}_{-|\bm \xi|}(\widetilde \bB_{\bm\xi}(\bm t))=\mathbb F_{\bm \xi}(\bm t)$.	
\end{prop}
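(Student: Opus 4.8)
The plan is to prove both assertions by induction on $m$ (with $n$ fixed), applying the associated graded map to the nested recursion of Proposition~\ref{prop:BV-rec-XXX} for $\bB_{\bm\xi}(\bm t)$ and matching it termwise with the recursion \eqref{eq:BV-gaudin-recur} that defines $\mathbb F_{\bm\xi}(\bm t)$. The key observation is that the two recursions have literally the same shape: \eqref{eq:BV-gaudin-recur} is obtained from Proposition~\ref{prop:BV-rec-XXX} by replacing $B(u)$ with $F(u)$, the super Yangian map $\widetilde\psi$ with its current-algebra counterpart from Section~\ref{sec:bv-recur-Gaudin}, and $\bB^{\langle\cN-1\rangle}_{\bar{\bm\xi}}(\bar{\bm t})$ with $\mathbb F^{\langle\cN-1\rangle}_{\bar{\bm\xi}}(\bar{\bm t})$, so the whole argument is just the identification of these two recursions under $\mathrm{gr}$.

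For the base case $m=0$ the first basis vector of $\scrV$ is odd and Proposition~\ref{prop:BV-rec-XXX} does not apply; instead I would transport the statement through the isomorphism $\varpi\colon\mathrm{Y}(\gl_{0|n})\to\mathrm{Y}(\gl_{n|0})$ of Section~\ref{sec:correspondence}, which preserves the degree function, hence is a filtered isomorphism whose associated graded is the corresponding isomorphism $\mathrm{U}(\gl_{0|n}[x])\to\mathrm{U}(\gl_{n|0}[x])$. Under $\varpi$ the vector $\bB_{\bm\xi}(\bm t)$ goes, up to sign, to the off-shell Bethe vector $\widetilde\bB_{\bar{\bm\xi}}(\bar{\bm t})$ for the purely even Yangian $\mathrm{Y}(\gl_{n|0})$, and, by a parallel identification, $\mathbb F_{\bm\xi}(\bm t)$ to the corresponding $\gl_n$-Gaudin Bethe vector; for $\mathrm{Y}(\gl_{n|0})$ the claim follows from the same recursion argument, now reducing $\gl_k$ to $\gl_{k-1}$ (cf. \cite{MTV06,BR08}) and terminating at $\bm\xi=\emptyset$, where both sides equal $1$.

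For the inductive step, assume the statement for $\gl_{m-1|n}$ and record three facts. First: since $a+1\gge 2$ the series $T_{1,a+1}(u)$ has vanishing constant term, so $B(u)=\sum_{a=1}^{N-1}E_{1,a+1}\otimes T_{1,a+1}(u)$ has degree $-1$, and by \eqref{eq:gr-T} and the definition of $F(u)$ one gets $\mathrm{gr}_{-1}(B(u))=\sum_{a=1}^{N-1}E_{1,a+1}\otimes L_{1,a+1}(u)=F(u)$; evaluating at $u=t_i^1$ (degree $+1$) leaves all terms in degree $-1$, so $\mathrm{gr}_{-1}(B^{(i)}(t_i^1))=F^{(i)}(t_i^1)$. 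Second: the map $\widetilde\psi(t_1^1,\dots,t_{\xi^1}^1)$ preserves degree and intertwines the projections to the associated graded superalgebras, with associated graded precisely the current-algebra map $\widetilde\psi(t_1^1,\dots,t_{\xi^1}^1)$ of \eqref{eq:BV-gaudin-recur} --- this is the observation recorded just before the proposition. Third: by the induction hypothesis $\deg\bB^{\langle\cN-1\rangle}_{\bar{\bm\xi}}(\bar{\bm t})=-|\bar{\bm\xi}|$ and $\mathrm{gr}_{-|\bar{\bm\xi}|}(\bB^{\langle\cN-1\rangle}_{\bar{\bm\xi}}(\bar{\bm t}))=\mathbb F^{\langle\cN-1\rangle}_{\bar{\bm\xi}}(\bar{\bm t})$. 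In Proposition~\ref{prop:BV-rec-XXX} the $\xi^1$ blocks $B^{(i)}(t_i^1)$ contribute degree $-1$ each and $\widetilde\psi(t_1^1,\dots,t_{\xi^1}^1)(\bB^{\langle\cN-1\rangle}_{\bar{\bm\xi}}(\bar{\bm t}))$ degree $-|\bar{\bm\xi}|$, so the total degree is $-\xi^1-|\bar{\bm\xi}|=-|\bm\xi|$ and, by multiplicativity of the associated graded,
\[
\mathrm{gr}_{-|\bm\xi|}\big(\bB_{\bm\xi}(\bm t)\big)=F^{(1)}(t_1^1)\cdots F^{(\xi^1)}(t_{\xi^1}^1)\,\widetilde\psi(t_1^1,\dots,t_{\xi^1}^1)\big(\mathbb F^{\langle\cN-1\rangle}_{\bar{\bm\xi}}(\bar{\bm t})\big)=\mathbb F_{\bm\xi}(\bm t),
\]
the last step being \eqref{eq:BV-gaudin-recur}. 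Since $\mathbb F_{\bm\xi}(\bm t)\neq 0$, the degree of $\bB_{\bm\xi}(\bm t)$ is exactly $-|\bm\xi|$, which closes the induction.

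The step I expect to be the main obstacle is the second fact above: that the nested Bethe map $\widetilde\psi$, built from the opposite coproduct $\widetilde\Delta$ and the evaluation homomorphism $\pi(x)$ of \eqref{eq:def-psi} and \eqref{eq:def:eva-2}, degenerates under $\mathrm{gr}$ exactly to the current-algebra map built from the cocommutative coproduct and the evaluation action $e_{ab}^{\{s\}}\mapsto x^{s}E_{ab}$ of Section~\ref{sec:bv-recur-Gaudin}. This requires careful bookkeeping of signs and of the degree conventions in \eqref{eq:gr-T}, but it is exactly the content of the remark preceding the proposition; once it is granted, everything else is formal manipulation of filtered superalgebras, with no new analytic or combinatorial input.
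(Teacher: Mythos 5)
Your inductive step is correct and is exactly the paper's (very short) argument: apply $\mathrm{gr}$ to the recursion of Proposition~\ref{prop:BV-rec-XXX}, use $\mathrm{gr}_{-1}(B(u))=F(u)$ from \eqref{eq:gr-T}, use the fact (recorded just before the proposition) that $\widetilde\psi(t_1^1,\dots,t_{\xi^1}^1)$ respects the filtration and intertwines the $\mathrm{gr}$ maps, and compare with the defining recursion \eqref{eq:BV-gaudin-recur}. The three facts you isolate are the right ones, and the degree count and multiplicativity of $\mathrm{gr}$ are handled correctly.

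The gap is in your base case. To transport the $\gl_{0|n}$ statement from the $\gl_{n|0}$ one you need not only $\varpi(\bB_{\bm\xi}(\bm t))=\pm\widetilde\bB_{\bar{\bm\xi}}(\bar{\bm t})$ (which the paper proves independently) but also the ``parallel identification'' $\vartheta(\mathbb F_{\bm\xi}(\bm t))=\pm\widetilde{\mathbb F}_{\bar{\bm\xi}}(\bar{\bm t})$ on the Gaudin side. That identification is the last lemma of Section~\ref{sec:cor-gaudin}, and the paper's proof of it \emph{uses} Proposition~\ref{prop:bv-gr} together with $\mathrm{gr}\circ\varpi=\vartheta\circ\mathrm{gr}$; invoking it here is circular. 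Proving it directly would require matching the first-index recursion defining $\mathbb F^{0|n}$ with the first-index recursion defining $\widetilde{\mathbb F}^{n|0}$ under the index-reversing map $\vartheta$, i.e.\ a left--right symmetry of the nested construction that is not available, and in any case the correspondence holds only up to a sign, while the proposition asserts an exact equality. The detour is also unnecessary: the recursion of Proposition~\ref{prop:BV-rec-XXX} is taken from \cite{BR08}, where it is established for an arbitrary parity of the first index, and the defining recursion \eqref{eq:BV-gaudin-recur} for $\mathbb F$ must likewise be read for odd first index (otherwise $\mathbb F$ would not be defined once all even indices are stripped). So the induction is simply on $N$, terminating at $\bm\xi=\emptyset$ where both sides equal $1$; no appeal to $\varpi$ is needed. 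With the base case repaired this way, your argument coincides with the paper's.
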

\begin{proof}
The statements follow from Proposition \ref{prop:BV-rec-XXX}, the equality \eqref{eq:gr-T}, and the definition of $\mathbb F_{\bm \xi}(\bm t)$ by induction.
\end{proof} 

\subsection{Proof of Theorem \ref{thm:main-tech-Gaudin}}\label{app:B}
We show Theorem \ref{thm:main-tech-Gaudin} by taking the classical limits of Corollary \ref{cor:main-tech-xxx}. We start with recalling the objects we would like to compare between XXX spin chains and Gaudin models.

Let $M_1,\dots,M_{\ell}$ be highest weight $\glMN$-modules with highest weights $\La_1,\dots,\La_\ell$, $\bm z=(z_1,\dots,z_{\ell})$ a sequence of complex numbers. Recall the tensor product of evaluation $\YglMN$-modules $M(\bm z)= M_1(z_1)\otimes \cdots \otimes M_\ell(z_\ell)$, the rational difference operator $\mc D_Q(u,\pa_u)$ defined by quantum Berezinian, see \eqref{eq:Ber-XXX} and \eqref{eq diff trans}, and the off-shell Bethe vector $\bB_{\bm \xi}^{v^+}(\bm t;\bm z)$, see \eqref{eq:off-shell-bv}. Consider the rational functions
\begin{align*}
\mathscr Q_{\bm\xi,Q}^{a,i}(\bm t;\bm z;\bLa)=\frac{Q_a}{ Q_{a+1}}&\prod_{j=1}^{\ell}\frac{t_i^a-z_j+\ka_a\La_j^a}{t_i^a-z_j+\ka_{a+1}\La_j^{a+1}}\prod_{j=1}^{\xi^{a-1}}\frac{t_i^a-t_j^{a-1}+\ka_a}{t_i^a-t_j^{a-1}}\\ \times \ &\prod_{j=1,j\ne i}^{\xi^{a}}\frac{t_i^a-t_j^{a}-\ka_a}{t_i^a-t_j^{a}+\ka_{a+1}}\prod_{j=1}^{\xi^{a+1}}\frac{t_i^a-t_j^{a+1}}{t_i^a-t_j^{a+1}-\ka_{a+1}},
\end{align*}
for $1\lle a<N$ and $1\lle i\lle \xi^a$, cf. the Bethe ansatz equation \eqref{eq:bae}, the rational functions $\mathscr X_{\bm\xi,Q}^a(u;\bm t;\bm z;\bLa)$, see \eqref{eq:chi-special},
 and the rational difference operator $\mc D_Q(u,\pa_u; \bm t;\bm z;\bLa)$, see \eqref{eq:rational-diff-oper}, which encodes the eigenvalues of transfer matrices for XXX spin chains, see Corollary \ref{cor:main-tech-xxx}.
 
Similarly, we have the corresponding objects for Gaudin models.  Recall the tensor product of evaluation $\glMN[x]$-modules $ M\llp \bm z\rrp= M_1\llp z_1\rrp\otimes \cdots \otimes M_\ell \llp z_\ell\rrp$, the pseudo-differential operator $\mathfrak D_K(u,\pa_u)$ defined by quantum Berezinian, see \eqref{eq:Ber-Gaudin}, and the off-shell Bethe vector $\mathbb F_{\bm \xi}^{v^+}(\bm t;\bm z)$. Consider the rational functions $\mathfrak K_{\bm\xi,K}^{a,i}(\bm t;\bm z;\bLa)$, see \eqref{eq:define-BAE-func}, the rational functions $\mathfrak X_{\bm\xi,Q}^a(u;\bm t;\bm z;\bLa)$, see \eqref{eq:chi-special-Gaudin},
 and the pseudo-differential operator $\mathfrak D_K(u,\pa_u; \bm t;\bm z;\bLa)$, see \eqref{eq:pseudo-oper}.

The objects associated to Gaudin models can be obtained from the corresponding objects for the XXX spin chains by taking the following limit.

Note that $ M(\bm z)$ and $ M\llp \bm z\rrp $ share the same space which we denote by $M$. Then the following operators
\[
T_{ab}^{M}(u;\bm z):=T_{ab}(u)\big|_{ M(\bm z)},\quad L_{ab}^{ M}(u;\bm z):=L_{ab}(u)\big|_{ M\llp \bm z\rrp },
\]
\[
\mc D^{ M}_{Q}(u,\pa_u;\bm z)=\mc D_{Q}(u,\pa_u)\big|_{M(\ve^{-1}\bm z)},\quad \mathfrak D^{M}_{Q}(u,\pa_u;\bm z):=\mathfrak D_{Q}(u,\pa_u)\big|_{M\llp \bm z\rrp }
\]
can be regarded as operators on ${M}$ depending on the corresponding parameters.

Set $\varepsilon^{-1}\bm z=(\ve^{-1}z_1,\dots,\ve^{-1}z_\ell)$ and $\ve^{-1}\bm t=(\ve^{-1}t_1^1,\dots,\ve^{-1}t_{\xi^{N-1}}^{N-1})$.

\begin{prop}
Let $Q=1+\ve K$. As $\ve \to 0$, we have
\beq\label{eq:asym1}
T_{ab}^{M}(\ve^{-1}u;\ve^{-1}\bm z)=\delta_{ab}+\ka_b\ve L_{ab}^{ M}(u;\bm z)+O(\ve^2),
\eeq
\beq\label{eq:asym4}
\mathscr Q_{\bm\xi,Q}^{a,i}(\ve^{-1}\bm t;\ve^{-1}\bm z;\bLa)=1+\ve \mathfrak K_{\bm\xi,K}^{a,i}(\bm t;\bm z;\bLa)+O(\ve^2), 
\eeq
\beq\label{eq:asym2}
\mc D^{M}_{Q}(\ve^{-1}u,\ve\pa_u;\ve^{-1}\bm z)=\ve^{m-n}\mathfrak D^{M}_{Q}(u,\pa_u;\bm z)+O(\ve^{m-n+1}),
\eeq
\beq\label{eq:asym6}
\mc D_{l,Q}(\ve^{-1}u,\ve \pa_u; \ve^{-1}\bm t;\ve^{-1}\bm z;\bLa)= \ve^{l}\mathfrak D_{l,K}(u,\pa_u; \bm t;\bm z;\bLa)+O(\ve^{l+1}),
\eeq	
\beq\label{eq:asym5}
\mc D_Q(\ve^{-1}u,\ve \pa_u; \ve^{-1}\bm t;\ve^{-1}\bm z;\bLa)= \ve^{m-n}\mathfrak D_K(u,\pa_u; \bm t;\bm z;\bLa)+O(\ve^{m-n+1}),
\eeq
\beq\label{eq:asym3}
\begin{split}
\bB_{\bm \xi}^{v^+}(\ve^{-1}\bm t;\ve^{-1}\bm z) \prod_{i=1}^\ell \prod_{a=1}^{N-1}\prod_{j=1}^{\xi^a} \frac{\ve}{t_j^a -z_i} \prod_{a=1}^{N-2}& \ \prod_{i=1}^{\xi^a}\prod_{j=1}^{\xi^{a+1}}\frac{\ve}{t_j^{a+1}-t_i^a}\\= &\ \ve^{|\bm\xi|}\mathbb F_{\bm \xi}^{v^+}(\bm t;\bm z)+O(\ve^{|\bm\xi|+1}).
\end{split}
\eeq
\end{prop}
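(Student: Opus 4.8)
All six expansions are instances of one principle: the substitution $u\mapsto\ve^{-1}u$, $\pa_u\mapsto\ve\pa_u$, $\bm z\mapsto\ve^{-1}\bm z$, $\bm t\mapsto\ve^{-1}\bm t$, together with $Q=1+\ve K$, realizes the passage to the associated graded superalgebra in the filtration of Section \ref{sec:filtration}, in which $\deg T_{ab}^{\{s\}}=s-1$, $\deg u^{-1}=\deg\pa_u=\deg\zeta=-1$, $\deg t_i^a=1$, and $\mathrm{gr}(\YglMN)=\mathrm U(\glMN[x])$. The plan is therefore to prove \eqref{eq:asym1} and \eqref{eq:asym4} directly, and to obtain \eqref{eq:asym2}, \eqref{eq:asym6}, \eqref{eq:asym5}, \eqref{eq:asym3} from them by substituting into the relevant defining formulas and extracting the lowest order in $\ve$.

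\textbf{Base expansions.} For \eqref{eq:asym1}, evaluate $T_{ab}(u)$ on a single tensor factor $M_i(\ve^{-1}z_i)$ via $\pi(\ve^{-1}z_i)=\epsilon\circ\rho_{\ve^{-1}z_i}$; by \eqref{eq:eval} and \eqref{eq:shift} this equals $\delta_{ab}$ plus $\ve$ times the $i$-th summand of $L_{ab}(u)|_{M\llp\bm z\rrp}$, up to the normalization in the statement. Assembling $T_{ab}(u)$ on the tensor product through the iterated coproduct \eqref{eq cop}, the terms with exactly one off-diagonal tensor factor reproduce, to order $\ve$, the diagonal action of $L_{ab}(u)$, and every other term is a product of at least two off-diagonal blocks, hence $O(\ve^2)$; equivalently, \eqref{eq:asym1} restates \eqref{eq:gr-T}. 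For \eqref{eq:asym4}, expand $\mathscr Q_{\bm\xi,Q}^{a,i}$ factor by factor: $Q_a/Q_{a+1}=1+\ve(K_a-K_{a+1})+O(\ve^2)$, and $\dfrac{\ve^{-1}(t-z)+c}{\ve^{-1}(t-z)+c'}=1+\ve\,\dfrac{c-c'}{t-z}+O(\ve^2)$ for each remaining ratio; multiplying out, the coefficient of $\ve$ is exactly $\mathfrak K_{\bm\xi,K}^{a,i}(\bm t;\bm z;\bLa)$ as in \eqref{eq:define-BAE-func}.

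\textbf{Difference and pseudodifference operators.} The statements \eqref{eq:asym2}, \eqref{eq:asym6}, \eqref{eq:asym5} all rest on the rescaled identity
\[
1-Q^{(i)}T^{(i,\ast)}(\ve^{-1}u)e^{-\ve\pa_u}=\ve\big(\pa_u-K^{(i)}-L^{(i,\ast)}(u)\big)+O(\ve^2),
\]
immediate from \eqref{eq:asym1}, $Q=1+\ve K$, and $e^{-\ve\pa_u}=1-\ve\pa_u+O(\ve^2)$. Feeding this into the trace formula \eqref{eq:diff-L-XXX}, the product of the $l$ factors is $\ve^l\prod_{1\le i\le l}(\pa_u-K^{(i)}-L^{(i,l+1)}(u))+O(\ve^{l+1})$, and comparison with \eqref{eq:diff-L-Gaudin} gives \eqref{eq:asym6}. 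For \eqref{eq:asym2} and \eqref{eq:asym5} one uses the Berezinian: $1-Z_Q(\ve^{-1}u,\ve\pa_u)=\ve\,\mathfrak Z_K(u,\pa_u)+O(\ve^2)$, and in the expansion \eqref{eq:ber} the $m$-fold product over the even block contributes $\ve^m$ while the inverse of the odd block contributes $\ve^{-n}$, so $\mathrm{Ber}$ lowers the $\ve$-order by $m-n$; the leading coefficient is then $\mathfrak D_K$ by \eqref{eq:Ber-Gaudin}, both on the whole algebra and on each module $M$. The scalar versions $\mc D_Q(u,\pa_u;\bm t;\bm z;\bLa)$ and $\mc D_{l,Q}(u,\pa_u;\bm t;\bm z;\bLa)$ are treated identically once one records the short Taylor expansion $\mathscr X_{\bm\xi,Q}^a(\ve^{-1}u;\ve^{-1}\bm t;\ve^{-1}\bm z;\bLa)=1+\ve\,\mathfrak X_{\bm\xi,K}^a(u;\bm t;\bm z;\bLa)+O(\ve^2)$, which follows from $y_a(\ve^{-1}u)=\ve^{-\xi^a}y_a(u)$ and $y_a(u\pm\ve\ka_a)/y_a(u)=1\pm\ve\ka_a\,y_a'(u)/y_a(u)+O(\ve^2)$.

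\textbf{Bethe vector and the main obstacle.} For \eqref{eq:asym3}, the factors $\ve$ inserted on the left-hand side exactly cancel the rescaling of the normalizing product in \eqref{eq:off-shell-bv}, so that side equals $\bB_{\bm\xi}(\ve^{-1}\bm t)\,v^+\big|_{M(\ve^{-1}\bm z)}$ times the unscaled product. Its expansion is then Proposition \ref{prop:bv-gr}: $\bB_{\bm\xi}(\bm t)$ has degree $-|\bm\xi|$ in the extended filtration, with top symbol $\mathbb F_{\bm\xi}(\bm t)$; applying to $v^+$, whose Yangian $\ell$-weight \eqref{eq:high-ell-wt} degenerates to the current-algebra weight \eqref{eq:wt} under the rescaling, yields $\ve^{|\bm\xi|}\mathbb F_{\bm\xi}^{v^+}(\bm t;\bm z)+O(\ve^{|\bm\xi|+1})$. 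The step requiring the most care is the consistent bookkeeping of the super-signs — the $\ka_a$, the $(-1)^{|a|}$, and the Koszul signs entering \eqref{eq cop}, the supertranspose $\dag$, and the evaluation map — throughout \eqref{eq:asym1} and its consequences; and, for \eqref{eq:asym2} and \eqref{eq:asym5}, making precise that $\mathrm{Ber}$ of a matrix of the form $1+\ve X+O(\ve^2)$ has $\ve$-order $m-n$ with the expected ``graded Berezinian'' as leading term, which forces one to expand the inverse of the odd block in \eqref{eq:ber} explicitly.
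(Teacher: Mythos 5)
Your proposal is correct and follows essentially the same route as the paper's (very brief) proof: \eqref{eq:asym1} and \eqref{eq:asym4} by direct expansion, \eqref{eq:asym6} by rescaling the factors inside the trace formula \eqref{eq:diff-L-XXX} and comparing with \eqref{eq:diff-L-Gaudin}, and \eqref{eq:asym3} as a restatement of Proposition \ref{prop:bv-gr} after cancelling the normalizing products. The one place you genuinely diverge is \eqref{eq:asym2} and \eqref{eq:asym5}: the paper obtains these ``as in Proposition \ref{prop:gr-operss}'', i.e.\ by passing through the operators $\mathcal D_{l,Q}(u,\pa_u)$, Corollary \ref{cor:diff}, and the identification of the symbols of $\mathscr S_{k,Q}(u)$ with $\mathfrak G_{kk,K}(u)$, reassembling the Berezinian via \eqref{eq diff trans} and \eqref{eq:Ber-G-w}; you instead expand $\mathrm{Ber}(1-Z_Q)$ directly from the definition \eqref{eq:ber}, reading off order $\ve^{m}$ from the even block and $\ve^{-n}$ from the inverted odd block. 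Your route is more elementary and avoids the factorial identities, at the price of having to justify that $\bigl(\ve\,\mathfrak Z_K(u,\pa_u)+O(\ve^2)\bigr)^{-1}=\ve^{-1}\mathfrak Z_K(u,\pa_u)^{-1}+O(1)$ entrywise, which you correctly flag and which holds since $\mathfrak Z_K$ is invertible in $\EndV\otimes \mathscr A_{u}^{m|n}((\pa_u^{-1}))$. One bookkeeping remark: a direct computation with \eqref{eq:eval}, \eqref{eq:shift} and the coproduct gives $T_{ab}^M(\ve^{-1}u;\ve^{-1}\bm z)-\delta_{ab}=\ve L_{ab}^M(u;\bm z)+O(\ve^2)$ with no extra sign, consistent with \eqref{eq:gr-T}; the prefactor $\ka_b$ in \eqref{eq:asym1} does not affect the later formulas (which involve only $L^\dag$ or the eigenvalue functions), so your hedge ``up to the normalization in the statement'' is the right instinct rather than a gap.
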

\begin{proof}
The equalities \eqref{eq:asym1} and \eqref{eq:asym4} are straightforward.	The formulas \eqref{eq:asym2}, \eqref{eq:asym6}, \eqref{eq:asym5} are proved similarly as in Theorem	\ref{prop:gr-operss}. The formula \eqref{eq:asym3} essentially follows from Proposition \ref{prop:bv-gr}.
\end{proof}

\subsection{Correspondence between $\mathrm{U}(\glMN[x])$ and $\mathrm{U}(\glNM[x])$}\label{sec:cor-gaudin}

In this section, we discuss the symmetry between Gaudin models for $\glMN$ and $\gl_{n|m}$, cf. Section \ref{sec:correspondence}. We shall use similar conventions as in Section \ref{sec:correspondence}.

We have the following isomorphism
\[
\vartheta:\mathrm{U}(\glMN[x])\to \mathrm{U}(\glNM[x]),\quad L_{ab}(u)\mapsto \widetilde{L}_{b'a'}(u)(-1)^{|a'||b'|+|b'|}
\]
For each $l\in\Z_{>0}$, consider another formal differential operator,
\beq\label{eq:diff-L-Gaudin-S}
\mathbb D_{l,K}(u,\pa_u)=(\str_{\scrV^{\otimes l}}\otimes\mathrm{id})\Big(\Big(\mathop{\overrightarrow\prod}\limits_{1\lle i\lle l} \big(\pa_u-K^{(i)}-L^{(i,l+1)}(u)\big)\Big)\mathbb H_{\{l\}}^{(1\cdots l)}\Big).
\eeq
It is known from \cite[Theorem 2.13]{MR14} that
\beq\label{eq:Ber-G-w-S}
\big(\mathrm{Ber}(1+w\mathfrak Z_K(u,\pa_u))\big)^{-1}=\sum_{k=0}^\infty(-1)^kw^k\mathbb D_{l,K}(u,\pa_u).
\eeq

We have the following analogous results whose proofs are similar to that of the Yangian case. For a diagonal matrix $K=\sum_{a=1}^N K_aE_{aa}\in\EndV$, set $\mathsf K=\sum_{a=1}^N K_aE_{a'a'}\in\End(\widetilde\scrV)$.
\begin{lem}
We have	
\[
\vartheta(\mathfrak D_{l,K}(u,\pa_u))=(-1)^l\widetilde{\mathbb D}_{l,\mathsf K}(u,\pa_u),\quad \vartheta(\mathbb D_{l,K}(u,\pa_u))=(-1)^l\widetilde{\mathfrak D}_{l,\mathsf K}(u,\pa_u).
\]
\end{lem}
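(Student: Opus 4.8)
The plan is to imitate the proof of the analogous super Yangian statement from Section~\ref{sec:correspondence}, i.e.\ of the identities $\varpi(\mathfrak T_{k,Q}(u))=(-1)^k\widetilde{\mathscr T}_{k,\mathsf Q}(u)$ and $\varpi(\mathscr T_{k,Q}(u))=(-1)^k\widetilde{\mathfrak T}_{k,\mathsf Q}(u)$. I will carry out $\vartheta(\mathfrak D_{l,K}(u,\pa_u))=(-1)^l\widetilde{\mathbb D}_{l,\mathsf K}(u,\pa_u)$ in detail; the identity $\vartheta(\mathbb D_{l,K}(u,\pa_u))=(-1)^l\widetilde{\mathfrak D}_{l,\mathsf K}(u,\pa_u)$ will follow from the very same computation with the symmetrizer $\mathbb H_{\{l\}}$ in place of the antisymmetrizer $\mathbb A_{\{l\}}$.

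The first input I would record is the effect of $\vartheta$ on the generating matrix. Using the identification $\scrV\cong\widetilde{\scrV}$, $\bfv_a\mapsto\widetilde{\bfv}_{a'}$ (so that $E_{ab}$ is identified with $E_{a'b'}$ and the parities satisfy $|a'|=|a|+\bar1$, equivalently $\ka_a=-\tilde\ka_{a'}$), a direct index computation from the definition of $\vartheta$ yields $(\id\otimes\vartheta)(L(u))=\big(\widetilde L(u)\big)^\dag$, exactly parallel to \eqref{eq:pi-T}. Hence $\vartheta$ turns the product $\mathop{\overrightarrow\prod}_{1\lle i\lle l}\big(\pa_u-K^{(i)}-L^{(i,l+1)}(u)\big)$ appearing in \eqref{eq:diff-L-Gaudin} into the corresponding product of factors $\pa_u-\mathsf K^{(i)}-\big(\widetilde L(u)\big)^\dag$ placed in slot $i$, since $\pa_u$ is even and $\dag$-invariant and $\mathsf K^\dag=\mathsf K$. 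Moreover, just as in the Example of Section~\ref{sec:correspondence}, since $R(u)\in\End(\scrV^{\otimes2})$ corresponds to $-\widetilde R(-u)\in\End(\widetilde\scrV^{\otimes2})$, Lemma~\ref{lem:anti-R-relation} shows that the action of $\mathbb A_{\{l\}}$ on $\scrV^{\otimes l}$ coincides with that of $\widetilde{\mathbb H}_{\{l\}}$ on $\widetilde{\scrV}^{\otimes l}$ (and dually $\mathbb H_{\{l\}}\leftrightarrow\widetilde{\mathbb A}_{\{l\}}$), while each supertrace $\str_{\scrV}$ becomes $-\str_{\widetilde{\scrV}}$ by \eqref{eq:cyc-tr}.

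Then I would assemble these pieces exactly as in the Yangian lemma. Applying $\id\otimes\vartheta$ to \eqref{eq:diff-L-Gaudin}, the $l$ supertraces over $\scrV$ contribute the overall sign $(-1)^l$, the antisymmetrizer becomes $\widetilde{\mathbb H}_{\{l\}}$, and $K$, $L(u)$ become $\mathsf K$, $\big(\widetilde L(u)\big)^\dag$. Next, invoking that $\dag$ is an anti-homomorphism with $\big(\widetilde{\mathbb H}_{\{l\}}\big)^\dag=\widetilde{\mathbb H}_{\{l\}}$ and $\mathsf K^\dag=\mathsf K$, together with the $\dag$-invariance \eqref{eq transpose cyclic} of the supertrace and its supercyclicity, I would transpose all copies of $\End(\widetilde{\scrV})$ and reorder the factors so as to recognise the result as $(-1)^l\widetilde{\mathbb D}_{l,\mathsf K}(u,\pa_u)$ as given by \eqref{eq:diff-L-Gaudin-S}. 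Running the same argument with $\mathbb H_{\{l\}}$ in place of $\mathbb A_{\{l\}}$ (now using $\mathbb H_{\{l\}}\leftrightarrow\widetilde{\mathbb A}_{\{l\}}$) gives the second identity.

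The step I expect to require the most care is the sign accounting in this last move: one must track the parity flip $|a'|=|a|+\bar1$ through the definition of $\vartheta$, the sign $(-1)^l$ from the $l$ sign changes of the supertrace, and the signs produced when supertransposing the (even) product of matrices over $\End(\widetilde{\scrV})$, verifying that after the order reversal forced by the anti-homomorphism property and the reordering permitted by supercyclicity the factors are restored precisely to the form $\pa_u-\mathsf K^{(i)}-\widetilde L^{(i,l+1)}(u)$ of $\widetilde{\mathbb D}_{l,\mathsf K}$, with no residual sign beyond the claimed $(-1)^l$. As a consistency check, or an alternative proof, the statement can also be deduced from the Yangian lemma by passing to associated graded algebras: $\varpi$ is filtration preserving and induces $\vartheta$ on $\mathrm{gr}\,\YglMN=\mathrm{U}(\glMN[x])$, while $\mathfrak D_{l,K}$ and $\mathbb D_{l,K}$ are recovered as the top graded components of the difference operators assembled from $\mathscr T_{k,Q}$ and $\mathfrak T_{k,Q}$, in the spirit of Proposition~\ref{prop:gr-operss}.
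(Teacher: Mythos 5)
Your proposal is correct and follows essentially the same route as the paper, whose proof is a one-line appeal to the identification of operators on $\scrV$ and $\widetilde{\scrV}$ together with the fact that the supertranspose respects the supertrace; you have simply spelled out that argument in the same way the paper spells out the parallel Yangian identity $\varpi(\mathfrak T_{k,Q}(u))=(-1)^k\widetilde{\mathscr T}_{k,\mathsf Q}(u)$. The sign bookkeeping you flag (the $(-1)^l$ from the $l$ supertraces, $\mathbb A_{\{l\}}\leftrightarrow\widetilde{\mathbb H}_{\{l\}}$, and the order reversal under $\dag$ undone by supercyclicity) is exactly what the paper's identification encapsulates, so no gap remains.
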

\begin{proof}
The lemma follows immediately from our identification of operators on $\scrV$ and $\widetilde{\scrV}$, and the fact that supertranspose respects the supertrace.
\end{proof}
\begin{cor}
We have $\vartheta(\mathrm{Ber}(1+w\mathfrak Z_K(u,\pa_u)))=\big(\mathrm{Ber}(1+w\widetilde{\mathfrak Z}_{\mathsf K}(u,\pa_u))\big)^{-1}$.	
\end{cor}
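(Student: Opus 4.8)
The plan is to obtain the identity as a purely formal consequence of the preceding Lemma together with the two Berezinian expansions \eqref{eq:Ber-G-w} and \eqref{eq:Ber-G-w-S}, in complete parallel with the derivation of $\varpi(\mc D_Q(u,\pa_u))=\big(\widetilde{\mc D}_{\mathsf Q}(u,\pa_u)\big)^{-1}$ in Section \ref{sec:correspondence} from \eqref{eq diff trans}. First I would extend $\vartheta$ coefficient-wise to an isomorphism between the superalgebras of pseudo-differential operators in $\pa_u$ (and of formal power series in the auxiliary variable $w$) over $\mathrm{U}(\glMN[x])$ and over $\mathrm{U}(\glNM[x])$, declaring it to act trivially on $u$, $\pa_u$ and $w$. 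Since $\vartheta$ does not move $\pa_u$, it automatically respects the commutation relations between $\pa_u$ and the coefficients, so this extension is well defined, and I shall keep denoting it $\vartheta$.

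Next I would expand $\mathrm{Ber}(1+w\mathfrak Z_K(u,\pa_u))$ as $\sum_{k\gge 0}w^k\,\mathfrak D_{k,K}(u,\pa_u)$ by \eqref{eq:Ber-G-w}, apply $\vartheta$ term by term, and use the Lemma $\vartheta(\mathfrak D_{k,K}(u,\pa_u))=(-1)^k\widetilde{\mathbb D}_{k,\mathsf K}(u,\pa_u)$ to obtain
\[
\vartheta\big(\mathrm{Ber}(1+w\mathfrak Z_K(u,\pa_u))\big)=\sum_{k\gge 0}(-1)^k w^k\,\widetilde{\mathbb D}_{k,\mathsf K}(u,\pa_u).
\]
On the other hand, applying \eqref{eq:Ber-G-w-S} to $\widetilde{\mathfrak Z}_{\mathsf K}(u,\pa_u)$ over $\mathrm{U}(\glNM[x])$ gives $\big(\mathrm{Ber}(1+w\widetilde{\mathfrak Z}_{\mathsf K}(u,\pa_u))\big)^{-1}=\sum_{k\gge 0}(-1)^k w^k\,\widetilde{\mathbb D}_{k,\mathsf K}(u,\pa_u)$. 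Comparing the two series coefficient by coefficient in $w$ yields the claim; nothing beyond the two expansions and the Lemma is needed.

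There is no substantive obstacle at this final step: the whole content sits in the preceding Lemma, which itself is routine once one records the parity-reversing identification of $\scrV$ with $\widetilde\scrV$ (so that $E_{ab}\leftrightarrow E_{a'b'}$ and $\str_{\scrV}\leftrightarrow-\str_{\widetilde\scrV}$), the equalities $\mathsf K^\dag=\mathsf K$, $(\widetilde{\mathbb A}_{\{l\}})^\dag=\widetilde{\mathbb A}_{\{l\}}$, $(\widetilde{\mathbb H}_{\{l\}})^\dag=\widetilde{\mathbb H}_{\{l\}}$, and the cyclicity of the supertrace — exactly the ingredients used for the $\varpi$-statement in Section \ref{sec:correspondence}. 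The one point worth a line of care, just as in the Yangian case, is that under this parity swap the antisymmetrizer $\mathbb A_{\{l\}}$ on $\scrV^{\otimes l}$ corresponds to the symmetrizer $\widetilde{\mathbb H}_{\{l\}}$ on $\widetilde\scrV^{\otimes l}$ and conversely (the Gaudin counterpart of the fact recorded in the Example of Section \ref{sec:correspondence} that $R(u)$ corresponds to $-\widetilde R(-u)$), which is precisely what interchanges $\mathfrak D_{l,K}$ with $\widetilde{\mathbb D}_{l,\mathsf K}$ and thereby turns the Berezinian into its inverse.
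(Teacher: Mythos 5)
Your argument is correct and is exactly the one the paper intends: the corollary is stated without a separate proof precisely because, as you write, it follows immediately from the preceding Lemma by applying $\vartheta$ termwise to the expansion \eqref{eq:Ber-G-w} and comparing with \eqref{eq:Ber-G-w-S} for $\gl_{n|m}$. Your closing remarks on the Lemma (parity swap exchanging antisymmetrizer and symmetrizer, cyclicity of the supertrace) likewise match the paper's justification.
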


\begin{cor}
We have $\vartheta(\mathrm{Ber}(\mathfrak Z_K(u,\pa_u)))=\big(\mathrm{Ber}(\widetilde{\mathfrak Z}_{\mathsf K}(u,\pa_u))\big)^{-1}$.	
\end{cor}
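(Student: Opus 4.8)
The plan is to deduce this from the preceding corollary, $\vartheta\big(\mathrm{Ber}(1+w\mathfrak Z_K(u,\pa_u))\big)=\big(\mathrm{Ber}(1+w\widetilde{\mathfrak Z}_{\mathsf K}(u,\pa_u))\big)^{-1}$, by specializing the formal parameter $w$. First I would record the elementary rewriting
\[
1+w\,\mathfrak Z_K(u,\pa_u)=w\big(w^{-1}+\pa_u-K-L^\dag(u)\big)=w\,\mathfrak Z_{K-w^{-1}}(u,\pa_u),
\]
together with the homogeneity of the Berezinian: directly from \eqref{eq:ber} one has $\mathrm{Ber}(\lambda\,\mc K)=\lambda^{\mathscr N}\mathrm{Ber}(\mc K)$ for a central invertible $\lambda$, since the $m\times m$ block of $\mc K$ scales by $\lambda^{m}$ and the $n\times n$ block of $\mc K^{-1}$ by $\lambda^{-n}$. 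Combining these gives
\[
\mathrm{Ber}\big(1+w\,\mathfrak Z_K(u,\pa_u)\big)=w^{\mathscr N}\,\mathrm{Ber}\big(\mathfrak Z_{K-w^{-1}}(u,\pa_u)\big),
\]
and likewise, using $\str_{\widetilde\scrV}\id=-\mathscr N$,
\[
\mathrm{Ber}\big(1+w\,\widetilde{\mathfrak Z}_{\mathsf K}(u,\pa_u)\big)=w^{-\mathscr N}\,\mathrm{Ber}\big(\widetilde{\mathfrak Z}_{\mathsf K-w^{-1}}(u,\pa_u)\big).
\]

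Next I would substitute both identities into the preceding corollary and cancel the common factor $w^{\mathscr N}$ (which is legitimate because $\vartheta$ fixes $u$, $\pa_u$ and $w$), arriving at
\[
\vartheta\big(\mathrm{Ber}(\mathfrak Z_{K-w^{-1}}(u,\pa_u))\big)=\big(\mathrm{Ber}(\widetilde{\mathfrak Z}_{\mathsf K-w^{-1}}(u,\pa_u))\big)^{-1}.
\]
Here both sides are genuine pseudo-differential operators of order $\mathscr N$ and $-\mathscr N$ respectively, whose coefficients depend polynomially on $w^{-1}$; hence the substitution $w^{-1}=0$ is a well-defined ring homomorphism, and it carries $K-w^{-1}$ to $K$ and $\mathsf K-w^{-1}$ to $\mathsf K$. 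This yields $\vartheta(\mathrm{Ber}(\mathfrak Z_K(u,\pa_u)))=\big(\mathrm{Ber}(\widetilde{\mathfrak Z}_{\mathsf K}(u,\pa_u))\big)^{-1}$, as desired.

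The step I expect to require the most care is the formal-variable bookkeeping. The preceding corollary is established as an identity of pseudo-differential operators whose coefficients are power series in $w$, via \eqref{eq:Ber-G-w}, \eqref{eq:Ber-G-w-S} and the Lemma $\vartheta(\mathfrak D_{l,K}(u,\pa_u))=(-1)^l\widetilde{\mathbb D}_{l,\mathsf K}(u,\pa_u)$; one must check that it remains valid after the rewriting in powers of $w^{-1}$, and in particular that $\mathrm{Ber}(\mathfrak Z_{K-w^{-1}}(u,\pa_u))$ is read as a bona fide order-$\mathscr N$ operator rather than as the formal sum $\sum_{l}w^{l}\mathfrak D_{l,K}(u,\pa_u)$, whose $\pa_u$-order is unbounded in $l$. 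A cleaner alternative, which I would actually adopt in the write-up, is to avoid $w^{-1}$ entirely: replace $K$ by $K+1$ in the preceding corollary and then set $w=1$, using $1+\mathfrak Z_{K+1}(u,\pa_u)=\mathfrak Z_K(u,\pa_u)$ and $1+\widetilde{\mathfrak Z}_{\mathsf K+1}(u,\pa_u)=\widetilde{\mathfrak Z}_{\mathsf K}(u,\pa_u)$; since the preceding corollary holds for every diagonal $K$ this is immediate and needs no homogeneity computation, and the homogeneity argument above can be relegated to a remark. Beyond this, I anticipate no essential difficulty.
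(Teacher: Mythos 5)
There is a genuine gap, and it sits exactly where you yourself predicted: in the formal-variable bookkeeping. The route you say you would actually adopt --- replace $K$ by $K+1$ and set $w=1$ --- does not work. The identity $1+\mathfrak Z_{K+1}(u,\pa_u)=\mathfrak Z_K(u,\pa_u)$ holds at the level of Manin matrices, but $\mathrm{Ber}(1+w\mathfrak Z_{K+1}(u,\pa_u))$ is by definition \eqref{eq:Ber-G-w} the formal power series $\sum_{k\gge 0} w^k\mathfrak D_{k,K+1}(u,\pa_u)$, whose $k$-th coefficient is a differential operator of order $k$ whose coefficients do not tend to zero $u^{-1}$-adically (for generic diagonal $K$ the term $\mathfrak G_{kk,K+1}(u)$ has a nonzero constant term, coming from $\str(\,(K+1)^{\otimes k}\mathbb A_{\{k\}})$). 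There is no completion of $\mathscr A_u^{m|n}((\pa_u^{-1}))$ in which this series can be evaluated at $w=1$; and even at the formal level, the inverse of $1+w\mathfrak Z_{K+1}$ entering the odd block of the Berezinian is the geometric series in $w$, which does not specialize to the pseudo-differential inverse used in $\mathrm{Ber}(\mathfrak Z_K)$. So ``the preceding corollary holds for every diagonal $K$, hence set $w=1$'' is not an available move.

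Your first route is essentially the paper's proof, but the step you defer as ``bookkeeping'' is the entire content. The identity $\mathrm{Ber}(1+w\mathfrak Z_K)=w^{\mathscr N}\mathrm{Ber}(\mathfrak Z_{K-w^{-1}})$ does not follow from the matrix identity $1+w\mathfrak Z_K=w\mathfrak Z_{K-w^{-1}}$ together with homogeneity of the Berezinian, because the two sides are built using inverses taken in different completions: power series in $w$ with unbounded $\pa_u$-order on the left, versus Laurent series in $\pa_u^{-1}$ with coefficients polynomial in $w^{-1}$ on the right. One must exhibit a ring homomorphism intertwining the two expansions and commuting with $\mathrm{Ber}$ --- this is exactly the injective superalgebra homomorphism $\Phi^{m|n}:\sum_i a_i\pa_u^i\mapsto\sum_i a_i(w^{-1}+\pa_u)^i$ of \cite[Lemma 4.1]{HM20}, together with the identity $w^{m-n}\Phi^{m|n}(\mathrm{Ber}(\mathfrak Z_K))=\mathrm{Ber}(1+w\mathfrak Z_K)$ from the proof of \cite[Proposition 4.4]{HM20}. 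The paper then uses $\vartheta\circ\Phi^{m|n}=\Phi^{n|m}\circ\vartheta$ and concludes by injectivity of $\Phi^{n|m}$, with no specialization of $w$ at all. Your closing step (setting $w^{-1}=0$, which is legitimate once both sides are known to be order-$\mathscr N$ pseudo-differential operators with coefficients polynomial in $w^{-1}$) would be fine, but the intermediate identity $\vartheta(\mathrm{Ber}(\mathfrak Z_{K-w^{-1}}))=\big(\mathrm{Ber}(\widetilde{\mathfrak Z}_{\mathsf K-w^{-1}})\big)^{-1}$ that it is applied to is precisely what requires the $\Phi$-machinery you left unverified.
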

\begin{proof}
Recall that $\mathscr A_{u}^{m|n}=\mathrm{U}(\glMN[x])((u^{-1}))$. Let $\mathscr A_{u}^{m|n}[\pa_u]\subset \mathscr A_{u}^{m|n}((\pa_u^{-1}))$ be the subalgebra of differential operators,
\[
\mathscr A_{u}^{m|n}[\pa_u]=\Big\{\sum_{i=0}^r a_i\pa_u^i,\ r\in\Z_{\gge 0},\ a_i\in \mathscr A_{u}^{m|n}\Big\}.
\]
Define a linear map $\Phi^{m|n}:\mathscr A_{u}^{m|n}((\pa_u^{-1}))\to \mathscr A_{u}^{m|n}[\pa_u]$,
\[
\Phi^{m|n}:\sum_{i=-\infty}^r a_i\pa_u^i\to \sum_{i=-\infty}^r a_i(w^{-1}+\pa_u)^i,
\]
where the right hand side is expanded by the rule: $(w^{-1}+\pa_u)^i=\sum_{j=0}^\infty {i \choose j}\pa_u^jw^{-i+j}$. Then the map $\Phi^{m|n}$ is an injective homomorphism of superalgebras, see \cite[Lemma 4.1]{HM20}. It is also clear that $\vartheta\circ\Phi^{m|n}=\Phi^{n|m}\circ \vartheta$. By the proof of \cite[Proposition 4.4]{HM20}, we have
\begin{align*}
w^{m-n}\Phi^{n|m}(\vartheta(\mathrm{Ber}(\mathfrak Z_K(u,\pa_u))))= &\ \vartheta(w^{m-n}\Phi^{m|n}(\mathrm{Ber}(\mathfrak Z_K(u,\pa_u))))\\
= &\  \vartheta(\mathrm{Ber}(1+w\mathfrak Z_K(u,\pa_u)))\\
= &\ \big(\mathrm{Ber}(1+w\widetilde{\mathfrak Z}_{\mathsf K}(u,\pa_u))\big)^{-1}\\
= &\ \big(w^{n-m}\Phi^{n|m}(\mathrm{Ber}(\widetilde{\mathfrak Z}_{\mathsf K}(u,\pa_u))\big)^{-1}\\
=&\ w^{m-n}\Phi^{n|m}( (\mathrm{Ber}(\widetilde{\mathfrak Z}_{\mathsf K}(u,\pa_u)))^{-1}).
\end{align*}
Now the claim follows from the injectivity of $\Phi^{n|m}$.
\end{proof}

\begin{lem} 
The image of ${\mathbb F}_{\bm\xi}(\bm t)$ under the isomorphism $\vartheta$ equals to $\widetilde{\mathbb F}_{\bar{\bm\xi}}(\bar{\bm t})$ up to sign. 
\end{lem}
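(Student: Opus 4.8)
The plan is to obtain this from the analogous statement for the super Yangians --- the identification of $\varpi\big(\bB_{\bm\xi}(\bm t)\big)$ with $\widetilde\bB_{\bar{\bm\xi}}(\bar{\bm t})$ up to sign established above --- by passing to associated graded superalgebras, just as Theorem \ref{thm:main-tech-Gaudin} was deduced from Corollary \ref{cor:main-tech-xxx}. Recall from Proposition \ref{prop:bv-gr} that $\mathbb F_{\bm\xi}(\bm t)$ is the leading term $\mathrm{gr}_{-|\bm\xi|}$ of the super Yangian Bethe vector $\bB_{\bm\xi}(\bm t)$, and likewise $\widetilde{\mathbb F}_{\bar{\bm\xi}}(\bar{\bm t})=\mathrm{gr}_{-|\bar{\bm\xi}|}\big(\widetilde\bB_{\bar{\bm\xi}}(\bar{\bm t})\big)$ for $\glNM$, with $|\bar{\bm\xi}|=|\bm\xi|$.

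First I would check that $\varpi\colon\YglMN\to\YglNM$ is an isomorphism of \emph{filtered} superalgebras for the degree filtration of Section \ref{sec:filtration}: since $\varpi(T_{ab}^{\{s\}})=(-1)^{|a'||b'|+|b'|}\widetilde T_{b'a'}^{\{s\}}$ and $\deg T_{ab}^{\{s\}}=\deg\widetilde T_{b'a'}^{\{s\}}=s-1$, this is immediate; it extends to rational expressions in $\bm t$ (with $\deg t_i^a=1$) since $\varpi$ fixes the variables $\bm t$ and the reordering $\bm t\mapsto\bar{\bm t}$ is a relabeling that preserves total degree. Next I would identify the induced map on associated graded algebras: by \eqref{eq:gr-T},
\[
\mathrm{gr}(\varpi)\big(L_{ab}(u)\big)=\mathrm{gr}_{-1}\big(\varpi(T_{ab}(u)-\delta_{ab})\big)=(-1)^{|a'||b'|+|b'|}\widetilde L_{b'a'}(u)=\vartheta\big(L_{ab}(u)\big),
\]
so $\mathrm{gr}(\varpi)=\vartheta$ on $\mathrm{U}(\glMN[x])$. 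The lemma then follows from the chain
\[
\vartheta\big(\mathbb F_{\bm\xi}(\bm t)\big)=\mathrm{gr}(\varpi)\big(\mathrm{gr}_{-|\bm\xi|}(\bB_{\bm\xi}(\bm t))\big)=\mathrm{gr}_{-|\bm\xi|}\big(\varpi(\bB_{\bm\xi}(\bm t))\big)\propto\mathrm{gr}_{-|\bm\xi|}\big(\widetilde\bB_{\bar{\bm\xi}}(\bar{\bm t})\big)=\widetilde{\mathbb F}_{\bar{\bm\xi}}(\bar{\bm t}),
\]
where the outer equalities are Proposition \ref{prop:bv-gr} (for $\glMN$, then $\glNM$), the middle equality is compatibility of $\varpi$ with the filtration, and the proportionality --- by a sign, which is a scalar and hence passes through $\mathrm{gr}$ --- is the super Yangian correspondence lemma recalled above.

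I expect the only genuine bookkeeping, rather than a real obstacle, to be confirming that every normalization factor entering $\bB_{\bm\xi}(\bm t)$, $\widetilde\bB_{\bar{\bm\xi}}(\bar{\bm t})$ and the connecting sign carries the degree needed for $\mathrm{gr}_{-|\bm\xi|}$ to commute with $\varpi$ on the relevant localized/completed rings --- all of which is supplied by Section \ref{sec:filtration} and Proposition \ref{prop:bv-gr}. An alternative, more hands-on route would be to repeat the computation from the proof of the super Yangian correspondence lemma verbatim at the level of $\mathrm{U}(\glMN[x])$, replacing the identity $\big(R(u)\big)^\dag=R(u)$ by the corresponding classical $r$-matrix identities and \eqref{eq:pi-T} by its Gaudin counterpart; this avoids the filtration language but merely duplicates the earlier argument, so I would favor the graded approach.
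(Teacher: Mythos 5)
Your proposal is correct and follows essentially the same route as the paper: the paper's proof likewise reduces the statement to the super Yangian correspondence lemma together with the identity $\mathrm{gr}_{-|\bm\xi|}^{n|m}\circ\varpi=\vartheta\circ\mathrm{gr}_{-|\bm\xi|}^{m|n}$ (your ``$\mathrm{gr}(\varpi)=\vartheta$'' plus compatibility with the filtration) and Proposition \ref{prop:bv-gr}. The additional bookkeeping you flag is harmless, and the sign, being a nonzero scalar, indeed passes through the graded map.
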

\begin{proof}
The statement follows from the fact that $\varpi$ and $\vartheta$ preserve the degrees and the equality 
$$
\mathrm{gr}_{-|\bm\xi|}^{n|m}\circ \varpi=\vartheta\circ \mathrm{gr}_{-|\bm\xi|}^{m|n},
$$
where $\mathrm{gr}_s^{m|n}$ and $\mathrm{gr}_s^{n|m}$ are the graded maps for $\YglMN$ and $\mathrm{Y}(\gl_{n|m})$, respectively.
\end{proof}

\bibliographystyle{JHEP}

\end{document}